\algnewcommand{\Meta}[1]{\State \enspace\;\textbf{Meta parameter:} #1}
\newtheorem{definition}{Definition}
\newtheorem{example}{Example}[section]
\newtheorem{theorem}{Theorem}
\newtheorem{remark}{Remark}
\newtheorem{property}{Property}
\newtheorem{lemma}{Lemma}
\newcommand{\remove}[1]{}
\begin{document}

\title{Decomposition-Based Optimal Bounds for Privacy Amplification via Shuffling}

\author[a]{Pengcheng Su }
\author[b]{Haibo Cheng* }
\author[b]{Ping Wang* }
\affil[a]{School of Computer Science, Peking University }
\affil[b]{National Engineering Research Center for Software Engineering, Peking University  \newline \texttt{\textnormal{ \{pcs, hbcheng, pwang\}@pku.edu.cn } } } 

\renewcommand*{\Affilfont}{\small\it} 
\renewcommand\Authands{ and } 

\date{}
\maketitle

\begin{abstract}
Shuffling has been shown to amplify differential privacy guarantees, enabling a more favorable privacy-utility trade-off. To characterize and compute this amplification, two fundamental analytical frameworks have been proposed: the \emph{privacy blanket} by Balle et al.\ (CRYPTO 2019) and the \emph{clone}—including both the standard and stronger variants—by Feldman et al.\ (FOCS 2021, SODA 2023). These frameworks share a common foundation: decomposing local randomizers into structured components for analysis.

In this work, we introduce a unified analytical framework—the general clone paradigm—which subsumes all possible decompositions, with the clone and blanket decompositions arising as special cases. Within this framework, we identify the optimal decomposition, which is precisely the one used by the privacy blanket. Moreover, we develop a simple and efficient algorithm based on the Fast Fourier Transform (FFT) to compute optimal privacy amplification bounds. Experimental results show that our computed upper bounds nearly match the lower bounds, demonstrating the tightness of our method.
Building on this method, we also derive optimal amplification bounds for both \emph{joint} and \emph{parallel} compositions of LDP mechanisms in the shuffle model.
\end{abstract}

\section{Introduction}

Differential Privacy (DP) has become a foundational framework for safeguarding individual privacy while enabling meaningful data analysis~\cite{Dwork2006}. In real-world applications, Local Differential Privacy (LDP) is widely adopted as it eliminates the need for a trusted curator by applying noise to each user's data before aggregation~\cite{7,8,10,Apple17}. However, this decentralized approach often results in significant utility loss due to excessive noise.

To address this trade-off, the \textit{shuffle model} introduces a trusted shuffler between users and the aggregator~\cite{cheu19,Erlingsson19,Prochlo}. This work considers the \emph{single-message shuffle model}, where each client
sends a report of their data and these reports are then anonymized and randomly shuffled before being sent to the server \cite{feldman2023soda}.
Shuffle DP improves the privacy-utility trade-off while minimizing trust assumptions to a single shuffler, making it a promising model for real-world deployment~\cite{Wang2020,triangle,cheu22}. For instance, privacy amplification by shuffling was used in Apple and Google’s Exposure Notification Privacy-preserving Analytics~\cite{application}.

The ``amplification-by-shuffling" theorem in the shuffle model implies that when each of the \(n\) users randomizes their data using an \(\varepsilon_0\)-LDP mechanism, the collection of shuffled reports satisfies \((\varepsilon(\varepsilon_0, \delta, n), \delta)\)-DP, where \(\varepsilon(\varepsilon_0, \delta, n) \ll \varepsilon_0\) for sufficiently large \(n\) and not too small $\delta$~\cite{feldman2023soda}. A central theoretical challenge is to characterize and compute this privacy amplification effect. A tighter bound allows for a larger value of \(\varepsilon_0\) while still achieving \((\varepsilon, \delta)\)-DP after shuffling, thereby improving the utility of the mechanism. It also enables a more accurate analysis of the overall \((\varepsilon, \delta)\)-guarantees resulting from shuffling \(n\) independent \(\varepsilon_0\)-LDP reports.

Researchers are interested in two types of upper bounds on privacy amplification. The first is a \emph{generic bound}, which provides a uniform guarantee for all $\varepsilon_0$-DP local randomizers. The second is a \emph{specific bound}, which is tailored for a given $\varepsilon_0$-DP local randomizer.
Generic bounds are primarily studied from a theoretical perspective, as they allow the derivation of asymptotic expressions for $\varepsilon(\varepsilon_0, n, \delta)$ \cite{Erlingsson19,Balle2019,Feldman2021}. In contrast, specific bounds are usually much tighter and are of significant practical importance, as they provide precise guarantees for concrete mechanisms used in real-world applications.

Among the various strategies proposed to analyze this effect~\cite{cheu19,Erlingsson19,checkin}, decomposition-based approaches are the most popular \cite{Balle2019,Feldman2021,feldman2023soda}. Two prominent frameworks are the \textbf{privacy blanket} by Balle et al.~\cite{Balle2019} and the \textbf{clone paradigm} by Feldman et al., which includes the \textit{standard clone} \cite{Feldman2021} and the \textit{stronger clone} \cite{feldman2023soda}. These approaches rely on decomposing the probability distributions induced by a local randomizer under different inputs. A decomposition naturally leads to a reduction, which yields an upper bound on the privacy amplification via shuffling.

The privacy blanket framework designs a tailored decomposition to obtain a bound for each specific local randomizer \cite{Balle2019}. However, this bound cannot be computed precisely in its original form. To address this, the framework resorts to further approximations—such as Hoeffding's and Bennett's inequalities—which yield looser but computable bounds. A computable generic bound is also derived in a similar manner.

The standard clone~\cite{Feldman2021} adopts a simple, unified decomposition, resulting in a generic bound that applies to all \(\varepsilon_0\)-DP local randomizers. This bound can be computed precisely using a dedicated numerical algorithm and empirically outperforms the computable generic bound derived from the privacy blanket framework. However, the standard clone does not provide specific bounds, except in the special case of $k$-ary Randomized Response.

The stronger clone was introduced with a more refined decomposition and was expected to yield improved bounds for both generic and specific local randomizers~\cite{feldman2023soda}. Unfortunately, a critical flaw was later identified in the proof's core lemma. A corrected version was released on arXiv~\cite{feldman2023arxiv}, which showed that the original generic bounds hold only for a restricted class of local randomizers. The specific bounds were also revised and replaced with a version that is much weaker and lacks an efficient computation method.

\subsection{Our contributions}\label{sec:contributions}
In this paper, we focus on computing \emph{specific bounds}. We provide the \textbf{optimal bound} achievable via any decomposition-based method for a given randomizer. Furthermore, we introduce an efficient numerical algorithm that can compute this optimal bound. In addition, we present methods for computing optimal amplification bounds for both joint composition and parallel composition in the shuffle model, achieving substantially tighter results than existing approaches.

\paragraph{General Clone and Optimal Bound.}
In prior work on Shuffle DP, the specific bounds of the blanket framework has seen limited use; instead, its associated analysis has mostly been applied in specific cases when the blanket distribution is uniform, as in $k$-ary randomized response ($k$-RR) \cite{Balle2019}, optimal local hash \cite{Wang2020} or in multi-message shuffle models where uniform blanket noise is directly injected \cite{Luo22}. Although the clone and blanket appear somewhat similar, \emph{no formal connection between them has been established.}

In this work, we present a unified analytical framework—termed the \emph{general clone}—which encompasses all possible decompositions, including both the clone and the blanket as special cases. We further demonstrate that, among all such decompositions, \emph{the optimal one corresponds precisely to the privacy blanket}.

\paragraph{Efficient Computation of Optimal Bounds.}
The decomposition bound derived from the privacy blanket, in its original form, is not directly computable. To address this, we reinterpret it through the lens of the general clone framework and derive a simplified representation that admits efficient numerical evaluation via the Fast Fourier Transform (FFT).

Specifically, we extend the Privacy Amplification Random Variable (PARV) introduced in the privacy blanket literature to define the \emph{Generalized Privacy Amplification Random Variable} (GPARV). In this formulation, the privacy loss corresponds to the expected positive part of the sum of \( n \) independent and identically distributed GPARVs $G_i$:
\[
\frac{1}{n}\mathbb{E}\left[\max\left\{0, \sum_{i=1}^n G_i \right\}\right].
\]
This expression naturally takes the form of a convolution, enabling efficient evaluation via FFT.

Our method yields the best-known bounds achievable through decomposition-based techniques. Empirical results show that the computed upper bounds closely match the corresponding lower bounds, demonstrating both tightness and reliability. Notably, the new bounds improve the resulting \(\varepsilon\) by at least 10\% across all settings, and by up to 50\% when \(\varepsilon_0\) is large and \(n\) is small.

\paragraph{Optimal bounds for joint composition.}
We conduct the first systematic analysis of \textit{joint composition} in the shuffle model using our algorithm. In classical DP, $k$-fold composition refers to applying $k$ independent mechanisms to the \emph{same} data:
\[
\mathcal{M}_{kFold}(D) = (\mathcal{M}_1(D), \mathcal{M}_2(D), \dots, \mathcal{M}_k(D)).
\]
In contrast, our notion of joint composition applies independent mechanisms to \emph{different} data:
\[
\mathcal{M}_{joint}(D_1, D_2, \dots, D_k) = (\mathcal{M}_1(D_1), \mathcal{M}_2(D_2), \dots, \mathcal{M}_k(D_k)).
\]
These two definitions are interconvertible: treating $(D_1, D_2, \ldots, D_k)$ as a single data $D$ yields joint composition as a special case of $k$-fold composition; conversely, setting $D_1 = D_2 = \cdots = D_k = D$ makes $k$-fold composition a special case of joint composition.
Our analysis of joint composition also applies to the $k$-fold composition setting. However, the joint composition framework captures a strictly more general scenario—namely, when $D_1, D_2, \dots, D_k$ are not necessarily identical and may even belong to different domains. It is important to note that our setting concerns the composition of LDP mechanisms in the single-message shuffle model, in which \( D \) or the tuple \( (D_1, D_2, \dots, D_k) \) corresponds to the data held by an individual user. Moreover, during the shuffling process, these outputs are treated as a single tuple and are shuffled together as a unit.

Joint composition is widely used in LDP applications such as joint distribution estimation and heavy hitter detection~\cite{50,51,5,kikuchi2022castellscalablejointprobability}. For example, when each user's data contains $d$ attributes, applying an \( \frac{\varepsilon_0}{d} \)-LDP mechanism to each attribute ensures overall \( \varepsilon_0 \)-LDP while preserving inter-attribute correlations.
Our experimental results show that existing methods yield relatively loose bounds in this setting, whereas our algorithm computes significantly tighter results by leveraging the optimal bounds.

\paragraph{Optimal bounds for parallel composition.}
Additionally, we show how to compute the optimal bounds for the \emph{parallel composition} of LDP mechanisms. Parallel composition refers to the setting where each user randomly selects a differentially private mechanism $\mathcal{R}_i$ with probability $p_i$ and applies it to their own data:
\begin{align*}
\mathcal{R}_{[1,m]}^{\parallel,\boldsymbol{p}}(x)=(i,\mathcal{R}_i(x)), w.p.\ p_i,\quad i=1,2,\dots,m
\end{align*}
where \(\boldsymbol{p} = (p_1, \dots, p_m)\) is a probability vector with \(\sum_{i=1}^m p_i = 1\).

It is particularly suited for scenarios in which the analyst aims to compute \emph{multiple} statistical queries—each corresponding to a distinct local mechanism $\mathcal{R}_1, \dots, \mathcal{R}_m$. Parallel composition partitions the user population into $m$ non-overlapping subsets, assigning each subset to one of the $m$ estimation tasks. As a result, each subset can utilize the full privacy budget $\varepsilon_0$, rather than splitting the budget across all tasks. This technique has been widely adopted in the literature~\cite{4,8731512,10.1145/3299869.3319891,8758350,10.1145/3183713.3196906} to achieve better utility compared to naive budget division.

It is worth noting that parallel composition also characterizes \textit{Poisson subsampling} in the shuffle model \cite{subsample18,steinke2022compositiondifferentialprivacy,Balle_Barthe_Gaboardi_2020}: each user participates in a mechanism $\mathcal{R}^*$ with probability $p^*$, which is equivalent to the parallel composition $\mathcal{R}_{[1,2]}^{\parallel,\boldsymbol{p}}(x)$, where $\mathcal{R}_1(x) = \mathcal{R}^*(x)$, $\mathcal{R}_2(x) = \bot$, and $p_1 = p^*$, $p_2 = 1 - p^*$. 

To compute the privacy amplification of parallel composition in the shuffle model was previously studied in~\cite{tv_shuffle}, but their analysis inherited a technical flaw from~\cite{feldman2023soda}, resulting in incorrect conclusions. Our method provides the first correct computation of the optimal bound in this setting.

\subsection{Related Work}\label{related_work}

We now discuss prior work most relevant to the technical contributions of our paper. Koskela et al.~\cite{koskela2023numerical} investigate \emph{$k$-fold composition} of Shuffle DP protocols. Their approach computes the Privacy Loss Random Variable (PLRV) of a dominating pair for a single-round shuffle protocol, and then applies the FFT to analyze \emph{multiple rounds} (i.e., composition over $k$ executions)~\cite{koskela20b}. In contrast, our work focuses on \emph{joint composition}, where a \emph{single-round} shuffle is applied to a \emph{composite local randomizer} constructed from multiple sub-randomizers.

Some subtle aspects warrant further clarification, particularly regarding how coupling among sub-randomizers affects the composition analysis.
In the single-message shuffle model, when the local randomizer is a joint composition, the outputs of the sub-local randomizers for a single user are inherently coupled. From a theoretical standpoint, privacy analysis in this setting ($\mathcal{S}\circ (\mathcal{M}_1\times \mathcal{M}_2\times \dots \times M_k)^n$) is not equivalent to applying composition to the individual mechanisms $\mathcal{R}_i$ after shuffling $(\mathcal{S}_1\circ \mathcal{M}_1^n,\mathcal{S}_2\circ \mathcal{M}_2^n,\dots,\mathcal{S}_k\circ \mathcal{M}_k^n)$. Furthermore, the standard compositional properties of the PLRV do not apply.

In~\cite{koskela2023numerical}, computing the PLRV requires a dominating pair~\cite{Zhu22}, obtained either from the \emph{stronger clone} (in the general case) or from the \emph{privacy blanket} (in the case of $k$-RR). In the general case, \cite{koskela2023numerical} inherits a flaw that originates from the original version of the stronger clone construction (see Section~\ref{sec:stronger_clone}). For any specific local randomizer, our work demonstrates that the privacy bound derived from the blanket framework is strictly tighter than that obtained via the clone.

Theoretically, computing the PLRV of $(P_0^\mathsf{B}, P_1^\mathsf{B})$ (as defined in Section~\ref{sec_blanket}) and applying the Hockey-Stick divergence $D_{e^\varepsilon}(P_0^\mathsf{B}, P_1^\mathsf{B})$ yields the same amplification bound as computing it directly from the GPARV; both approaches evaluate $D_{e^\varepsilon}(P_0^\mathsf{B}, P_1^\mathsf{B})$. However, for general local randomizers, the PLRV of $(P_0^\mathsf{B}, P_1^\mathsf{B})$ is analytically intractable due to the complex structure of the blanket distribution. GPARV overcomes this limitation by expressing the privacy loss in a convolution-friendly form, enabling efficient FFT-based evaluation even in the absence of a closed-form expression for the PLRV.

\paragraph{Subsequent work.}
A recent paper to appear at CSF 2026 extends our decomposition framework to the analysis of re-identification attacks in the shuffle model, showing that the same hierarchy of decompositions—where the blanket decomposition remains optimal—also holds in this context \cite{Su26CSF}.

\subsection{Roadmap}

Section~\ref{sec:prelim} introduces the definitions of differential privacy and the shuffle model. In Section~\ref{sec_review}, we briefly review the clone and blanket frameworks. Section~\ref{sec_general_clone} presents our general clone analysis framework, demonstrating that both the clone and blanket decompositions are special cases within it. We prove the optimality of the blanket decomposition and, due to the redundancy of the original construction, propose a method for computing the \emph{Simplified Optimal Decomposition}. For both joint and parallel compositions, we show that their optimal decompositions are given by the Cartesian product and the weighted combination of the optimal decompositions of their respective sub-local randomizers.

Section~\ref{sec_our_algo} introduces the Generalized Privacy Amplification Random Variable (GPARV) and an FFT-based algorithm for computing the optimal privacy amplification bound. We also clarify the relationship between GPARV and the original PARV introduced by Balle et al.~\cite{Balle2019}. Section~\ref{sec_experiment} presents our experimental results. 

Sections~\ref{sec_discussion} and~\ref{sec_limitation} discuss potential directions beyond decomposition-based techniques, as well as limitations of the current work. Finally, Appendix~\ref{appenix_decomposition} provides detailed derivations for computing the optimal decompositions of several commonly used local randomizers.

\section{Preliminaries} \label{sec:prelim}

Differential privacy is a privacy-preserving framework for randomized algorithms. Intuitively, an algorithm is differentially private if the output distribution does not change significantly when a single individual's data is modified. This ensures that the output does not reveal substantial information about any individual in the dataset. The hockey-stick divergence is commonly used to define $(\varepsilon,\delta)$-DP.

\begin{definition}[Hockey-Stick Divergence]
The \textit{hockey-stick} divergence between two random variables \( P \) and \( Q \) is defined as:
\[
D_{\alpha}(P \parallel Q) = \int \max\{0, P(x) - \alpha Q(x)\}\,\mathrm{d}x,
\]
where we use the notation $P$ and $Q$ to refer to both the random
variables and their probability density functions.
\end{definition}

We say that \( P \) and \( Q \) are $(\varepsilon, \delta)$-indistinguishable if:
\[
\max\{D_{e^\varepsilon}(P \parallel Q), D_{e^\varepsilon}(Q \parallel P)\} \leq \delta.
\]
If two datasets \( X^0 \) and \( X^1 \) have the same size and differ only by the data of a single individual, they are referred to as neighboring datasets (denoted by \( X^0 \simeq X^1 \)).

\begin{definition}[Differential Privacy]
An algorithm \( \mathcal{R} : \mathbb{X}^n \to \mathbb{Z} \) satisfies $(\varepsilon, \delta)$-differential privacy if for all neighboring datasets \( X, X' \in \mathbb{X}^n \), \( \mathcal{R}(X) \) and \( \mathcal{R}(X') \) are $(\varepsilon, \delta)$-indistinguishable.    
\end{definition}

\begin{definition}[Local Differential Privacy]
An algorithm \( \mathcal{R} : \mathbb{X} \to \mathbb{Y} \) satisfies local $(\varepsilon, \delta)$-differential privacy if for all \( x, x' \in \mathbb{X} \), \( \mathcal{R}(x) \) and \( \mathcal{R}(x') \) are $(\varepsilon, \delta)$-indistinguishable.    
\end{definition}

Here, \( \varepsilon \) is referred to as the privacy budget, which controls the privacy loss, while \( \delta \) allows for a small probability of failure. When \( \delta = 0 \), the mechanism is also called \( \varepsilon \)-DP.

Following conventions in the shuffle model based on randomize-then-shuffle~\cite{Balle2019,cheu19}, we define a single-message protocol \( \mathcal{P} \) in the shuffle model as a pair of algorithms \( \mathcal{P} = (\mathcal{R}, \mathcal{A}) \), where \( \mathcal{R} : \mathbb{X} \to \mathbb{Y} \), and \( \mathcal{A} : \mathbb{Y}^n \to \mathbb{O} \). We call \( \mathcal{R} \) the \textit{local randomizer}, \( \mathbb{Y} \) the \textit{message space} of the protocol, \( \mathcal{A} \) the \textit{analyzer}, and \( \mathbb{O} \) the \textit{output space}. 

The overall protocol implements a mechanism \( \mathcal{P} : \mathbb{X}^n \to \mathbb{O} \) as follows: Each user \( i \) holds a data record \( x_i \), to which they apply the local randomizer to obtain a message \( y_i = \mathcal{R}(x_i) \). The messages \( y_i \) are then shuffled and submitted to the analyzer. Let \( \mathcal{S}(y_1, \dots, y_n) \) denote the random shuffling step, where \( \mathcal{S} : \mathbb{Y}^n \to \mathbb{Y}^n \) is a \textit{shuffler} that applies a random permutation to its inputs. 

In summary, the output of \( \mathcal{P}(x_1, \dots, x_n) \) is given by
\[
\mathcal{A} \circ \mathcal{S} \circ \mathcal{R}^n (\boldsymbol{x}) = \mathcal{A}(\mathcal{S}(\mathcal{R}(x_1), \dots, \mathcal{R}(x_n))).
\]

\begin{definition}[Differential Privacy in the Shuffle Model]
A protocol \( \mathcal{P} = (\mathcal{R}, \mathcal{A}) \) satisfies $(\varepsilon, \delta)$-differential privacy in the shuffle model if for all neighboring datasets \( X, X' \in \mathbb{X}^n \), the distributions \( \mathcal{S} \circ \mathcal{R}^n(X) \) and \( \mathcal{S} \circ \mathcal{R}^n(X') \) are $(\varepsilon, \delta)$-indistinguishable.    
\end{definition}

\section{Review of Existing Analysis Techniques}\label{sec_review}

In this section, we review existing analysis techniques for studying privacy amplification in the shuffle model. We begin by introducing the standard clone due to its simplicity~\cite{Feldman2021}. Next, we restate the privacy blanket framework, using consistent terminology with the former~\cite{Balle2019}. 
It aids in identifying the intrinsic connection between the two approaches, which will be explored in Section \ref{sec_general_clone}.

We then discuss subsequent attempts to extend the clone paradigm, specifically the stronger clone. We highlight both the vision and the shortcomings of its original and corrected versions. The lessons learned from these failures are incorporated into the design of the general clone framework (see Section~\ref{sec:definition_gc}).

\subsection{Standard clone}\label{sec_clone}
The intuition behind the standard clone paradigm is as follows \cite{Feldman2021}: Suppose that $X^0$ and $X^1$ are neighbouring databases that differ on the first datapoint, $x_1^0 \neq x_1^1$. A key observation is that for any $\varepsilon_0$-DP local randomizer $\mathcal{R}$ and data point $x$, $\mathcal{R}(x)$ can be seen as sampling from the same distribution as $\mathcal{R}(x_1^0)$ with probability at least $e^{-\varepsilon_0}/2$ and sampling 
from the same distribution as $\mathcal{R}(x_1^1)$ with probability at least $e^{-\varepsilon_0}/2$. That is, with probability $e^{-\varepsilon_0}$ each data point can 
create a clone of the output of $\mathcal{R}(x_1^0)$ or a clone of $\mathcal{R}(x_1^1)$ with equal probability. Thus $n - 1$ data elements effectively 
produce a random number of clones of both $x_1^0$ and $x_1^1$, making it more challenging to distinguish whether the original dataset contains $x_1^0$ or $x_1^1$ as its first element.

Due to the $\varepsilon_0$-DP property of the local randomizer $\mathcal{R}$, we have the following inequality:
\begin{align*}
\forall x_i \in \mathbb{X},\forall y \in \mathbb{Y}:\Pr[\mathcal{R}(x_i)=y]\ge \frac{1}{e^{\varepsilon_0}}\Pr[\mathcal{R}(x_1^0)=y]\\
\wedge \Pr[\mathcal{R}(x_i)=y]\ge \frac{1}{e^{\varepsilon_0}}\Pr[\mathcal{R}(x_1^1)=y].
\end{align*} Therefore, the local randomizer $\mathcal{R}$ on any input $x_i$ can be decomposed into a mixture of $\mathcal{R}(x_1^0), \mathcal{R}(x_1^1)$ and some ``left-over'' distribution $\text{LO}(x_i)$ such that
$$\mathcal{R}(x_i)=\frac{1}{2e^{\varepsilon_0}}\mathcal{R}(x_1^0)+\frac{1}{2e^{\varepsilon_0}}\mathcal{R}(x_1^1)+(1-\frac{1}{e^{\varepsilon_0}})\text{LO}(x_i).$$

Let \( \mathcal{M}_S = \mathcal{S} \circ \mathcal{R}^n \) denote the shuffling of \( \mathcal{R} \).
To compute the privacy amplification provided by the shuffle model, we need to compute $D_{e^\varepsilon}(\mathcal{M}_S(X^0)\parallel \mathcal{M}_S(X^1))$ for a given $\varepsilon$. The exact computation is computationally complex, so the researchers seek an upper bound for it. A key property is that hockey-stick divergence satisfies the data processing inequality.
\begin{property}[Data Processing Inequality]
For all distributions $P$ and $Q$ defined on a set $S$ and (possibly randomized) functions $f: S \to S'$, $$D_\alpha\big(f(P)||f(Q)\big) \le D_\alpha(P||Q).$$
\end{property}

If we can find two probability distributions \( P_0 \) and \( P_1 \) along with a post-processing function \( f \) such that \( f(P_0) = \mathcal{M}_S(X^0) \) and \( f(P_1) = \mathcal{M}_S(X^1) \), then it follows that \( D_{e^\varepsilon}(P_0, P_1) \) is an upper bound for \( D_{e^\varepsilon}(\mathcal{M}_S(X^0), \mathcal{M}_S(X^1)) \). 
We refer to \( (P_0, P_1) \) as a \textit{reduction pair}. Different analysis techniques construct different reduction pairs. We first present an intuitive construction of the reduction pair within the standard clone framework, followed by the formal construction.

\begin{definition}[Standard Clone Reduction Pair (Intuitive) \cite{Feldman2021}]
Define random variables \( A_0, A_1 \) and \( A_2 \) as follows:
\[
A_0 =
\begin{cases}
0 & \text{w.p. } 1 \\
1 & \text{w.p. } 0 \\
2 & \text{w.p. } 0
\end{cases}
,\ 
A_1 =
\begin{cases}
0 & \text{w.p. } 0 \\
1 & \text{w.p. } 1 \\
2 & \text{w.p. } 0
\end{cases}
, \text{ and }
A_2 =
\begin{cases}
0 & \text{w.p. } \frac{1}{2e^{\varepsilon_0}} \\
1 & \text{w.p. } \frac{1}{2e^{\varepsilon_0}} \\
2 & \text{w.p. } 1 - \frac{1}{e^{\varepsilon_0}}
\end{cases}
\]
To obtain a sample from $P_0$ (or $P_1$), sample one copy from \( A_0 \) (or $A_1$) and \( n - 1 \) copies of \( A_2 \), the output \( (n_0, n_1) \) where \( n_0 \) is the total number of 0s and \( n_1 \) is the total number of 1s. Equivalently,
$$C \sim \text{Bin}(n-1,\frac{1}{e^{\varepsilon_0}}),\ A \sim \text{Bin}(C,\frac{1}{2}).$$
$$P_0=(A+1,C-A),\ P_1=(A,C-A+1).$$
The corresponding post-processing function $f_1$ is shown in the Algorithm \ref{alg:post-processing1}.    
\end{definition}

\begin{algorithm}[t]
\caption{Post-processing function of standard clone \cite{Feldman2021}, $f_1$ }
\label{alg:post-processing1}
\begin{algorithmic}[0]
\Meta $x_1^0, x_1^1, x_2, \dots, x_n$
\Require  $y \in \{0,1,2\}^n$
\State $J \gets \emptyset$
\For{$i = 1, \dots, n$}
    \If{$y_i = 2$}
        \State Let $j$ be a randomly and uniformly chosen element of $[2 : n] \setminus J$
        \State $J \gets J \cup \{j\}$
    \EndIf
    \State Sample $z_i$ from
    \Statex \hspace{1.5em} $\begin{cases}
        \mathcal{R}(x_1^0) & \text{if } y_i = 0; \\
        \mathcal{R}(x_1^1) & \text{if } y_i = 1; \\
        \text{LO}(x_{j}) & \text{if } y_i = 2.
    \end{cases}$
\EndFor
\State \Return $z_1, \dots, z_n$
\end{algorithmic}
\end{algorithm}

An additional observation is that if $\mathcal{R}$ is $\varepsilon_0$-DP, then $\mathcal{R}(x_1^0)$ and $\mathcal{R}(x_1^1)$ are similar, hence privacy is further amplified \cite{Feldman2021}. The similarity is characterized by the following lemma:
\begin{lemma}[\cite{KOV15}]\label{lemma1}
Let \( \mathcal{R}: \mathbb{X} \to \mathbb{Y} \) be an \( \varepsilon_0 \)-DP local randomizer and \( x_0, x_1 \in \mathbb{X} \). Then there exists two probability distributions \( \mathcal{Q}_0,\mathcal{Q}_1 \) such that
\[
\mathcal{R}(x_0) = \frac{e^{\varepsilon_0}}{e^{\varepsilon_0}+1} \mathcal{Q}_0 + \frac{1}{e^{\varepsilon_0}+1} \mathcal{Q}_1
\]
and
\[
\mathcal{R}(x_1) = \frac{1}{e^{\varepsilon_0}+1} \mathcal{Q}_0 + \frac{e^{\varepsilon_0}}{e^{\varepsilon_0}+1} \mathcal{Q}_1.
\]
\end{lemma}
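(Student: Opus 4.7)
The plan is to construct $\mathcal{Q}_0$ and $\mathcal{Q}_1$ explicitly by treating the two identities as a linear system in $(\mathcal{Q}_0, \mathcal{Q}_1)$ and solving it. Writing $a = e^{\epsilon_0}/(e^{\epsilon_0}+1)$ and $b = 1/(e^{\epsilon_0}+1)$, so that $a+b = 1$ and $a-b = (e^{\epsilon_0}-1)/(e^{\epsilon_0}+1) \neq 0$, the claim is equivalent to requiring
$$a\mathcal{Q}_0 + b\mathcal{Q}_1 = \mathcal{R}(x_0), \qquad b\mathcal{Q}_0 + a\mathcal{Q}_1 = \mathcal{R}(x_1).$$
Inverting the $2\times 2$ matrix of coefficients, the unique candidates are the signed measures
$$\mathcal{Q}_0 = \frac{a\,\mathcal{R}(x_0) - b\,\mathcal{R}(x_1)}{a-b}, \qquad \mathcal{Q}_1 = \frac{a\,\mathcal{R}(x_1) - b\,\mathcal{R}(x_0)}{a-b},$$
with densities understood pointwise. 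It then suffices to verify that these are genuine probability distributions, after which the two decompositions hold by construction.

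The verification has two parts. First, normalization: since $\mathcal{R}(x_0)$ and $\mathcal{R}(x_1)$ are probability distributions, each numerator integrates to $a-b$, so both $\mathcal{Q}_0$ and $\mathcal{Q}_1$ have total mass one. Second, nonnegativity: the density of $\mathcal{Q}_0$ at $y$ is nonnegative iff $\mathcal{R}(x_0)(y) \geq (b/a)\,\mathcal{R}(x_1)(y) = e^{-\epsilon_0}\,\mathcal{R}(x_1)(y)$, which is exactly the $\epsilon_0$-DP guarantee $\mathcal{R}(x_1)(y) \le e^{\epsilon_0}\,\mathcal{R}(x_0)(y)$. The symmetric inequality, applied in the other direction, yields nonnegativity of $\mathcal{Q}_1$. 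Here the edge case $\mathcal{R}(x_1)(y) = 0$ is harmless: pure $\epsilon_0$-DP then forces $\mathcal{R}(x_0)(y) = 0$ as well, so both densities vanish at $y$.

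There is no real obstacle in this argument; the only mildly delicate point is that the construction must be done at the level of densities (or Radon--Nikodym derivatives with respect to a common dominating measure), since the original statement involves arbitrary distributions on $\mathbb{Y}$. Once a common dominating measure is chosen so that $\mathcal{R}(x_0)$ and $\mathcal{R}(x_1)$ have densities, the pointwise verification above goes through verbatim, and substituting the explicit $\mathcal{Q}_0, \mathcal{Q}_1$ back into the mixtures recovers the two identities in the lemma.
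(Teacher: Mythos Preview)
Your construction is correct and is in fact the standard way to prove this statement: solve the $2\times 2$ linear system for $\mathcal{Q}_0,\mathcal{Q}_1$, then use the $\epsilon_0$-DP inequalities $\mathcal{R}(x_1)(y)\le e^{\epsilon_0}\mathcal{R}(x_0)(y)$ and $\mathcal{R}(x_0)(y)\le e^{\epsilon_0}\mathcal{R}(x_1)(y)$ to verify nonnegativity. One tiny edge case you glossed over is $\epsilon_0=0$, where $a=b$ and the matrix is singular; but then $\mathcal{R}(x_0)=\mathcal{R}(x_1)$ and one may simply take $\mathcal{Q}_0=\mathcal{Q}_1=\mathcal{R}(x_0)$.

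As for comparison with the paper: the paper does not supply a proof of this lemma at all. It is quoted as a known result from~\cite{KOV15} and used as a black box in the standard-clone reduction. So there is nothing to compare against; your argument simply fills in the omitted (and classical) details.
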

With the help of Lemma \ref{lemma1}, \cite{Feldman2021} gives the following decomposition for generic local randomizers:
\begin{align*}
\mathcal{R}(x_1^0)&=\frac{e^{\varepsilon_0}}{e^{\varepsilon_0}+1}\mathcal{Q}_1^0+\frac{1}{e^{\varepsilon_0}+1}\mathcal{Q}_1^1,\\
\mathcal{R}(x_1^1)&=\frac{1}{e^{\varepsilon_0}+1}\mathcal{Q}_1^0+\frac{e^{\varepsilon_0}}{e^{\varepsilon_0}+1}\mathcal{Q}_1^1,\\
\forall i \in [2,n]:\mathcal{R}(x_i)&=\frac{1}{2e^{\varepsilon_0}}\mathcal{Q}_1^0+\frac{1}{2e^{\varepsilon_0}}\mathcal{Q}_1^1+(1-\frac{1}{e^{\varepsilon_0}})\text{LO}(x_i).
\end{align*}
This decomposition leads to the formal reduction of the standard clone:
\begin{theorem}[Standard Clone Reduction \cite{Feldman2021}]
Let \( \mathcal{R} : \mathbb{X} \to \mathbb{Y} \) be a $\varepsilon_0$-DP local randomizer and let \( \mathcal{M}_S = \mathcal{S} \circ \mathcal{R}^n \) be the shuffling of \( \mathcal{R} \). For \( \varepsilon \geq 0 \) and inputs \( X^0 \simeq X^1 \) with \( x_1^0 \neq x_1^1 \), we have
$$D_{e^\varepsilon}\big(\mathcal{M}_S(X^0),\mathcal{M}_S(X^1)\big) \le D_{e^\varepsilon}\big(P_0^{\mathsf{C}},P_1^{\mathsf{C}} \big)$$
where $P_0^{\mathsf{C}},\ P_1^{\mathsf{C}}$ are defined as below (with ``$\mathsf{C}$'' denoting ``standard Clone'') :
$$C \sim \text{Bin}(n-1,\frac{1}{e^{\varepsilon_0}}), \quad A \sim \text{Bin}(C,\frac{1}{2}),\quad \text{and } \quad\Delta \sim \text{Bern}(\frac{e^{\varepsilon_0}}{e^{\varepsilon_0}+1}).$$ $$P_0^\mathsf{C}=(A+\Delta,C-A+1-\Delta), \quad P_1^\mathsf{C}=(A+1-\Delta,C-A+\Delta).$$
Bern($p$) represents a Bernoulli random variable with bias $p$.
\end{theorem}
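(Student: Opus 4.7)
The plan is to build a common post-processing function $f$ such that $f(P_0^C) = \mathcal{M}_S(X^0)$ and $f(P_1^C) = \mathcal{M}_S(X^1)$, and then invoke the data processing inequality to obtain the claimed bound. The main ingredients are the two decompositions already stated in the excerpt: the KOV split of $\mathcal{R}(x_1^0)$ and $\mathcal{R}(x_1^1)$ into the common components $\mathcal{Q}_1^0, \mathcal{Q}_1^1$, and the three-term decomposition of $\mathcal{R}(x_i)$ for $i \ge 2$. First I will check that the latter decomposition is well defined: summing the two KOV identities gives $\mathcal{Q}_1^0 + \mathcal{Q}_1^1 = \mathcal{R}(x_1^0) + \mathcal{R}(x_1^1)$, and combining with the pointwise $\epsilon_0$-DP bounds $\mathcal{R}(x_i) \ge e^{-\epsilon_0}\mathcal{R}(x_1^b)$ yields $\mathcal{R}(x_i) \ge \frac{1}{2e^{\epsilon_0}}(\mathcal{Q}_1^0 + \mathcal{Q}_1^1)$ pointwise, so the residual density $\text{LO}(x_i)$ is a valid probability distribution and the mixture weights add to $1$.

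Next I will introduce latent labels. For each user $i$, sample a label $y_i$ according to the mixture weights of the decomposition applied to that user's input, and then draw the output from $\mathcal{Q}_1^0$, $\mathcal{Q}_1^1$, or $\text{LO}(x_i)$ conditional on $y_i$. Under $X^b$, user $1$ has $y_1 \in \{0,1\}$ with $\Pr[y_1 = b] = \frac{e^{\epsilon_0}}{e^{\epsilon_0}+1}$, and for $i \ge 2$ the label is $\{0,1,2\}$-valued with probabilities $(\frac{1}{2e^{\epsilon_0}}, \frac{1}{2e^{\epsilon_0}}, 1-e^{-\epsilon_0})$. This rewrites $\mathcal{R}^n(X^b)$ as a mixture indexed by the label vector. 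A short calculation will then identify the distribution of the counts $(n_0, n_1)$ of $0$- and $1$-labels with $P_b^C$: writing $C$ for the number of non-$2$ labels among users $2,\dots,n$, $A$ for the number of $0$-labels among those, and $\Delta$ for the indicator $\{y_1 = 0\}$, one gets $C \sim \text{Bin}(n-1, e^{-\epsilon_0})$, $A \mid C \sim \text{Bin}(C, 1/2)$, and $\Delta \sim \text{Bern}(\frac{e^{\epsilon_0}}{e^{\epsilon_0}+1})$ under $X^0$ (with $\Delta$ flipped under $X^1$), giving $(n_0, n_1) = (A+\Delta, C-A+1-\Delta)$ and the mirror expression.

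The core step is to exhibit the post-processing $f$ that takes $(n_0, n_1)$ and reproduces $\mathcal{M}_S(X^b)$. I will define $f$ as follows: set $n_2 = n - n_0 - n_1$; sample a uniformly random subset $T \subseteq \{2,\dots,n\}$ of size $n_2$; generate $n_0$ i.i.d.\ samples from $\mathcal{Q}_1^0$, $n_1$ i.i.d.\ samples from $\mathcal{Q}_1^1$, and one independent sample from $\text{LO}(x_j)$ for each $j \in T$; finally output a uniformly random permutation of these $n$ samples. Correctness of $f$ hinges on a conditional-independence claim: in the original process, conditional on $(n_0, n_1)$, the set of users $i \ge 2$ with label $2$ is uniformly distributed over size-$n_2$ subsets of $\{2,\dots,n\}$, and the $0$- and $1$-labeled outputs are i.i.d.\ from $\mathcal{Q}_1^0$ and $\mathcal{Q}_1^1$ respectively. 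This follows from the i.i.d.\ nature of $y_2,\dots,y_n$ conditional on their sum (exchangeability), together with the fact that user $1$ contributes only $0$- or $1$-labels, so after shuffling it is immaterial which position carried user $1$'s output. The data processing inequality for hockey-stick divergence then yields $D_{e^\epsilon}(\mathcal{M}_S(X^0), \mathcal{M}_S(X^1)) \le D_{e^\epsilon}(P_0^C, P_1^C)$.

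The hard part will be the clean formalization of this last reduction: one must verify that, after the final shuffle, the joint distribution of the outputs depends on the latent label vector only through $(n_0, n_1)$, and in particular that the without-replacement matching of label-$2$ positions to user indices $\{2,\dots,n\}$ exactly reproduces the original joint distribution of the LO samples. Everything else—the validity of the decompositions, the identification of $P_b^C$ with the explicit formula, and the final DPI application—should be routine.
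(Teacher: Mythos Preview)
Your proposal is correct and follows essentially the same approach as the paper: the paper's proof simply observes that one can construct a post-processing function from $(P_0^C, P_1^C)$ to $(\mathcal{M}_S(X^0), \mathcal{M}_S(X^1))$ that is identical to Algorithm~\ref{alg:post-processing1} except with $\mathcal{R}(x_1^0), \mathcal{R}(x_1^1)$ replaced by $\mathcal{Q}_1^0, \mathcal{Q}_1^1$, and then invokes the data processing inequality. Your write-up spells out all the steps the paper leaves implicit---the validity of the $\text{LO}(x_i)$ decomposition via $\mathcal{Q}_1^0 + \mathcal{Q}_1^1 = \mathcal{R}(x_1^0) + \mathcal{R}(x_1^1)$, the identification of the count statistics with $P_b^C$, and the exchangeability argument showing that the random matching of label-$2$ positions to users in $\{2,\dots,n\}$ reproduces the correct joint law after shuffling---but the underlying construction is the same.
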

\begin{proof}
We can construct a post-processing function from \( (P_0^{\mathsf{C}}, P_1^{\mathsf{C}}) \) to \( \big( \mathcal{M}_S(X^0), \mathcal{M}_S(X^1) \big) \), which is similar to Algorithm \ref{alg:post-processing1}. The only difference is that \( \mathcal{R}(x_1^0) \) and \( \mathcal{R}(x_1^1) \) are replaced by \( \mathcal{Q}_1^0 \) and \( \mathcal{Q}_1^1 \), respectively. 

Since the standard clone is a special case of the general clone in Section \ref{sec_general_clone}, the correctness of this post-processing function is implied by the proof of Theorem~\ref{theorem_generalclone}.
\end{proof}

\subsection{Privacy blanket framework}\label{sec_blanket}
The decomposition of the standard clone is constructed by treating \( \mathcal{R}(x_1^0) \) and \( \mathcal{R}(x_1^1) \) as reference points and projecting each \( \mathcal{R}(x_i) \) onto the basis defined by these two distributions~\cite{Feldman2021}.
In contrast, the decomposition provided by the privacy blanket framework takes a different approach: it first identifies the ``common part'' shared across all \( \mathcal{R}(x) \) distributions~\cite{Balle2019}:
\[
\omega(y) = \inf_{x \in \mathbb{X}} \mathcal{R}(x)(y) / \gamma,
\]
where \( \mathcal{R}(x)(y) \) is the probability density of \( \mathcal{R}(x) \) at point \( y \), and $\gamma$ is a normalization factor:
\[
\gamma = \int \inf_{x} \mathcal{R}(x)(y) \, \mathrm{d}y.
\]

Here, \( \omega \) and \( \gamma \) are referred to as the privacy blanket distribution and the total variation similarity of the local randomizer \( \mathcal{R} \)~\cite{Balle2019}. Each \( \mathcal{R}(x) \) can then be decomposed as:
\[
\mathcal{R}(x) = \gamma \omega+(1 - \gamma) \text{LO}(x),
\]
where \( \text{LO}(x) \) represents the ``left-over" distribution. 

In other words, the execution of each \( \mathcal{R}(x_i) \) can be viewed as first sampling a random variable \( b_i \sim \text{Bern}(\gamma) \). If \( b_i = 1 \), a sample is drawn from \( \omega \) and returned; otherwise, a sample is drawn from \( \text{LO}(x_i) \).

The original proof is formulated using the terminology of the ``View'' of the server. We restate the privacy blanket technique using the following notation: probability distributions \( P_0^\mathsf{B} \) and \( P_1^\mathsf{B} \) (with ``$\mathsf{B}$'' denoting ``Blanket''), along with a post-processing function \( f^{\mathsf{B}} \).

\begin{definition}[Privacy Blanket Reduction Pair \cite{Balle2019} (Restated)]\label{definition_blanket}
Let $\boldsymbol{x}_{-1}=(x_2,x_3,\dots,x_n)$ with the inputs from the last $n-1$ users, $\boldsymbol{y}^a=(y_1^a,y_2,\dots,y_n)$ where $y_i\sim R(x_i)$ is the output of the i-th user, $a$ indicates that the input of the first user is $x_1^a$, $a\in \{0,1\}$. Let $\boldsymbol{b}=(b_2,b_3,\dots,b_n)$ be binary values indicating which users sample from the privacy blanket distribution. A multiset $Y_{\boldsymbol{b}}^a=\mathcal{S}(\{y_1^a\} \cup\{y_i|b_i=1\})$.

Observe that the distribution of $Y_{\boldsymbol{b}}^a$ depends only on $|\boldsymbol{b}|$ rather than $\boldsymbol{b}$, where $|\boldsymbol{b}|$ represents the number of 1 in $\boldsymbol{b}$. We can rewrite it as $Y_{|\boldsymbol{b}|}^a=\mathcal{S}(\{y_1^a\}\cup\{y_i|y_i \sim \omega, i=1,2,\dots,|\boldsymbol{b}|\})$.
Then $P_0^B$ and $P_1^B$ are defined below:
$$P_0^\mathsf{B}=(|\boldsymbol{b}|,Y^0_{|\boldsymbol{b}|}),$$
$$P_1^\mathsf{B}=(|\boldsymbol{b}|,Y^1_{|\boldsymbol{b}|}),$$
where $|\boldsymbol{b}| \sim \text{Bin}(n-1,\gamma)$.
\end{definition}

\begin{theorem}[Privacy Blanket Reduction \cite{Balle2019} (Restated)]
Let \( \mathcal{R} : \mathbb{X} \to \mathbb{Y} \) be a $\varepsilon_0$-DP local randomizer and let \( \mathcal{M}_S = \mathcal{S} \circ \mathcal{R}^n \) be the shuffling of \( \mathcal{R} \). For \( \varepsilon \geq 0 \) and inputs \( X^0 \simeq X^1 \) with \( x_1^0 \neq x_1^1 \), we have
$$D_{e^\varepsilon}\big(\mathcal{M}_S(X^0),\mathcal{M}_S(X^1)\big) \le D_{e^\varepsilon}\big(P_0^{\mathsf{B}},P_1^{\mathsf{B}} \big).$$
\end{theorem}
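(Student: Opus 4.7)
The plan is to exhibit a (possibly randomized) post-processing function $f^B$ such that $f^B(P_a^B)$ has the same distribution as $\mathcal{M}_S(X^a)$ for each $a \in \{0,1\}$, and then invoke the data processing inequality (Property 1) applied to the hockey-stick divergence. This mirrors the strategy used for the standard clone reduction, but with the clone-based decomposition replaced by the privacy blanket decomposition $\mathcal{R}(x_i) = (1-\gamma)\mathrm{LO}(x_i) + \gamma\,\omega$.

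First I rewrite $\mathcal{M}_S(X^a)$ in its equivalent two-stage form dictated by the blanket decomposition: (i) sample $y_1^a \sim \mathcal{R}(x_1^a)$; (ii) for each $i \in [2,n]$, draw $b_i \sim \mathrm{Bern}(\gamma)$ independently, and set $y_i \sim \omega$ if $b_i=1$ and $y_i \sim \mathrm{LO}(x_i)$ otherwise; (iii) output $\mathcal{S}(y_1^a, y_2,\ldots,y_n)$. Because the shuffler renders the positions exchangeable, conditioning on $|\boldsymbol{b}|=k$ leaves the location of the $k$ ones in $\boldsymbol{b}$ uniformly distributed over size-$k$ subsets of $\{2,\ldots,n\}$, and $|\boldsymbol{b}| \sim \mathrm{Bin}(n-1,\gamma)$, exactly matching the distribution of the first coordinate of $P_a^B$.

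Next I define $f^B$ on an input $(k, Y)$, where $Y$ is a multiset of size $k+1$: (i) sample a uniformly random subset $B \subseteq \{2,\ldots,n\}$ with $|B|=k$; (ii) for each $i \in \{2,\ldots,n\}\setminus B$, independently sample $y_i' \sim \mathrm{LO}(x_i)$; (iii) return the shuffling of $Y$ together with the $y_i'$'s. Note that $f^B$ does not depend on $a$, so the same function is applied to both $P_0^B$ and $P_1^B$. A direct comparison with the two-stage description above shows $f^B(P_a^B)$ and $\mathcal{M}_S(X^a)$ agree in distribution: both produce a shuffled multiset of size $n$ consisting of one sample from $\mathcal{R}(x_1^a)$, $k$ i.i.d.\ samples from $\omega$, and $n-1-k$ samples from $\{\mathrm{LO}(x_i): i \in \{2,\ldots,n\}\setminus B\}$, where $(k,B)$ has the same joint law in both cases.

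Applying Property 1 to $f^B$ then yields
\[
D_{e^\epsilon}\bigl(\mathcal{M}_S(X^0),\mathcal{M}_S(X^1)\bigr) = D_{e^\epsilon}\bigl(f^B(P_0^B),\,f^B(P_1^B)\bigr) \le D_{e^\epsilon}\bigl(P_0^B,\,P_1^B\bigr),
\]
which is the claimed bound. The main subtle step is the distributional equivalence in the previous paragraph: one must argue that attaching the blanket samples to a uniformly random subset of users (and then shuffling) produces exactly the same joint law as independently flipping each $b_i$ per user before shuffling. This reduces to the observation that, conditional on $|\boldsymbol{b}|$, the vector $\boldsymbol{b}$ is uniform over subsets of the given size, and the outer shuffler erases any remaining user-index information. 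Once this equivalence is established, the rest of the argument is a mechanical application of data processing.
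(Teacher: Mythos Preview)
Your proposal is correct and follows essentially the same approach as the paper: both construct the post-processing function $f^B$ that, given $(|\boldsymbol{b}|, Y_{|\boldsymbol{b}|}^a)$, selects a uniformly random subset of $n-1-|\boldsymbol{b}|$ users from $\{2,\ldots,n\}$, samples from their left-over distributions, and returns the union with $Y_{|\boldsymbol{b}|}^a$ (this is exactly Algorithm~\ref{alg:post-processing2}). Your write-up is in fact more explicit than the paper's about why the distributional equivalence holds, namely that conditioning on $|\boldsymbol{b}|$ makes the set of blanket users uniform and the shuffler removes any residual position information.
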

\begin{proof}
The corresponding post-processing function $f^{\mathsf{B}}$ is shown in Algorithm \ref{alg:post-processing2}. The core idea of the post-processing function \( f^{\mathsf{B}} \) is that, given \( Y_{|\boldsymbol{b}|}^a \), it suffices to sample from the left-over distributions of a randomly selected subset of \( n - 1 - |\boldsymbol{b}| \) users and mix the results accordingly. 

In Section~\ref{sec_general_clone}, we will show that the blanket is, in fact, also a special case of the general clone. The correctness of the corresponding post-processing function $f^{\mathsf{B}}$ follows from the proof of Theorem~\ref{theorem_generalclone}.
\end{proof}

\begin{algorithm}[t]
\caption{Post-processing function of privacy blanket, $f^{\mathsf{B}}$ }
\label{alg:post-processing2}
\begin{algorithmic}[0]
\Meta $x_2, \dots, x_n$
\Require $|\boldsymbol{b}| \in \{0,1,\dots,n-1 \},Y^a_{|\boldsymbol{b}|}$
\State $J \gets \emptyset$
\State $S \gets \emptyset$
\For{$loop = 1, \dots, n-1-|\boldsymbol{b}|$}
    \State Let $j$ be a randomly and uniformly chosen element of $[2 : n] \setminus J$
    \State $s \gets_{\text{LO}(x_j)} \mathbb{Y}$ \Comment{Sample from $\text{LO}(x_j)$}
    \State $J \gets J \cup \{j\}$
    \State $S \gets S \cup \{s\}$
\EndFor
\State \Return $Y^a_{|\boldsymbol{b}|} \cup S$
\end{algorithmic}
\end{algorithm}

\subsection{Vision and failure of stronger clone}\label{sec:stronger_clone}
The stronger clone is expected to improve the probability of producing a ``clone'' from \( \frac{1}{2e^{\varepsilon_0}} \) to \( \frac{1}{e^{\varepsilon_0} + 1} \). For \( \varepsilon_0 > 1 \), this results in approximately a factor of 2 improvement in the expected number of ``clones'' \cite{feldman2023soda}. 
This improvement is anticipated to be achieved through a more refined analysis that, instead of cloning the entire output distributions on differing elements, clones only the portions of those distributions where they actually differ.

Specifically, it leverages a lemma from \cite{Ye} to establish the existence of the following decomposition:
\begin{theorem}[Corollary 3.4 in \cite{feldman2023soda}]\label{theorem_sc_decomp}
Given any $\varepsilon_0$-DP local randomizer $\mathcal{R}: \mathbb{X} \to \mathbb{Y}$, and any $n + 1$ inputs $x_1^0, x_1^1, x_2, \dots, x_n \in \mathbb{X}$, if $\mathbb{Y}$ is finite then there exists $p \in [0, 1/(e^{\varepsilon_0} + 1)]$ and distributions $\mathcal{Q}_1^0, \mathcal{Q}_1^1, \mathcal{Q}_1, \mathcal{Q}_2, \dots, \mathcal{Q}_n$ such that
\begin{align*}
\mathcal{R}(x_1^0) &= e^{\varepsilon_0} p \mathcal{Q}_1^0 + p \mathcal{Q}_1^1 + (1 - p - e^{\varepsilon_0} p) \mathcal{Q}_1,\\
\mathcal{R}(x_1^1) &= p \mathcal{Q}_1^0 + e^{\varepsilon_0} p \mathcal{Q}_1^1 + (1 - p - e^{\varepsilon_0} p) \mathcal{Q}_1,\\
\forall i \in [2, n], \, \mathcal{R}(x_i) &= p \mathcal{Q}_1^0 + p \mathcal{Q}_1^1 + (1 - 2p) \mathcal{Q}_i.
\end{align*} 
\end{theorem}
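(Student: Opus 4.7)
The plan is to handle the two ``special'' distributions $\mathcal{R}(x_1^0), \mathcal{R}(x_1^1)$ first using the decomposition lemma of \cite{Ye}, and then to absorb each of the remaining distributions $\mathcal{R}(x_i)$ by subtracting off the common $p\mathcal{Q}_1^0 + p\mathcal{Q}_1^1$ piece and normalising. Concretely, I would first invoke the lemma of \cite{Ye} on the $\epsilon_0$-DP pair $(\mathcal{R}(x_1^0), \mathcal{R}(x_1^1))$; this should return some $p \in [0, 1/(e^{\epsilon_0}+1)]$ and distributions $\mathcal{Q}_1^0, \mathcal{Q}_1^1, \mathcal{Q}_1$ satisfying the first two equations of the theorem, with the crucial structural bonus that $\mathcal{Q}_1^0$ is proportional to $(\mathcal{R}(x_1^0) - \mathcal{R}(x_1^1))_+$ and $\mathcal{Q}_1^1$ is proportional to $(\mathcal{R}(x_1^1) - \mathcal{R}(x_1^0))_+$, so that their supports are disjoint. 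The bound $p \leq 1/(e^{\epsilon_0}+1)$ here is just the standard total-variation bound $d_{TV} \leq (e^{\epsilon_0}-1)/(e^{\epsilon_0}+1)$ for $\epsilon_0$-DP pairs, combined with the construction $p = d_{TV}/(e^{\epsilon_0}-1)$.

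For each $i \geq 2$ I then define $\mathcal{Q}_i$ by
\[
\mathcal{Q}_i(y) := \frac{\mathcal{R}(x_i)(y) - p\,\mathcal{Q}_1^0(y) - p\,\mathcal{Q}_1^1(y)}{1-2p},
\]
so the third equation holds by construction, and $\int \mathcal{Q}_i = 1$ follows automatically from $\int \mathcal{R}(x_i) = \int \mathcal{Q}_1^0 = \int \mathcal{Q}_1^1 = 1$.

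The step that does real work is verifying $\mathcal{Q}_i \geq 0$ pointwise, i.e.\ $\mathcal{R}(x_i)(y) \geq p\mathcal{Q}_1^0(y) + p\mathcal{Q}_1^1(y)$ at every $y$. Here the disjoint-support property is the linchpin: at any fixed $y$, at most one of $\mathcal{Q}_1^0(y), \mathcal{Q}_1^1(y)$ is nonzero. At a point where $\mathcal{Q}_1^0(y) > 0$ and $\mathcal{Q}_1^1(y) = 0$, the first decomposition equation reduces to $\mathcal{R}(x_1^0)(y) = e^{\epsilon_0}p\mathcal{Q}_1^0(y) + (1-p-e^{\epsilon_0}p)\mathcal{Q}_1(y) \geq e^{\epsilon_0}p\mathcal{Q}_1^0(y)$, so $p\mathcal{Q}_1^0(y) \leq e^{-\epsilon_0}\mathcal{R}(x_1^0)(y)$; chaining this with $\mathcal{R}(x_1^0)(y) \leq e^{\epsilon_0}\mathcal{R}(x_i)(y)$ from $\epsilon_0$-DP of $\mathcal{R}$ between $x_1^0$ and $x_i$ yields $p\mathcal{Q}_1^0(y) \leq \mathcal{R}(x_i)(y)$. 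The symmetric case using $\mathcal{R}(x_1^1)$ covers points where $\mathcal{Q}_1^1(y) > 0$, and points where both vanish are trivial. I expect the main obstacle to be the underlying Ye decomposition of the pair $(\mathcal{R}(x_1^0), \mathcal{R}(x_1^1))$ with the advertised support structure, since without disjoint supports the chained inequality above loses a factor of $e^{\epsilon_0}$ on the second term and the argument collapses --- plausibly related to the gap later identified in the stronger-clone argument and corrected in \cite{feldman2023arxiv}.
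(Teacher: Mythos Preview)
Your proposal is correct and follows the same route the paper indicates: the paper does not prove this statement but merely records it as Corollary~3.4 of \cite{feldman2023soda}, noting that it ``leverages a lemma from \cite{Ye}''; your plan invokes exactly that Ye decomposition for the pair $(\mathcal{R}(x_1^0),\mathcal{R}(x_1^1))$ and then carries out the remaining verification that each $\mathcal{Q}_i$ is a bona fide distribution, with the disjoint-support observation supplying precisely the missing factor of $e^{\epsilon_0}$. Your closing remark about why the disjointness is essential, and how its absence relates to the stronger-clone gap repaired in \cite{feldman2023arxiv}, is also on point.
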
 

Such a decomposition is guaranteed to exist for any local randomizer. However, an error occurred in the construction of the reduction pair \( P_0 \) and \( P_1 \) based on this decomposition.
Similar to the standard clone in Section \ref{sec_clone}, they define the following distribution \( P_0(\varepsilon_0, p) \) and \( P_1(\varepsilon_0, p) \): For any \( p \in [0, 1/(e^{\varepsilon_0} + 1)] \),  let
$$
C \sim \text{Bin}(n - 1, 2p), \ A \sim \text{Bin}(C, 1/2) \ 
$$
and $$\Delta_1 \sim \text{Bern}(e^{\varepsilon_0} p), \quad \Delta_2 \sim \text{Bin}(1 - \Delta_1, p/(1 - e^{\varepsilon_0} p)).$$  Let
\[
P_0(\varepsilon_0, p) = (A + \Delta_1, C - A + \Delta_2) \quad \text{and} \quad P_1(\varepsilon_0, p) = (A + \Delta_2, C - A + \Delta_1).
\]
They intended to prove that
\[
D_{e^{\varepsilon}}\big(\mathcal{M}_S(X^0), \mathcal{M}_S(X^1)\big) \leq D_{e^\varepsilon}\big(P_0(\varepsilon_0, p), P_1(\varepsilon_0, p)\big),
\]
which serves as an upper bound for a specific local randomizer (different $\varepsilon_0$-DP randomizers may have different values of $p$).
Leveraging Lemma~\ref{lemma2}, they would then conclude the general upper bound for any \( \varepsilon_0 \)-DP local randomizer:
\begin{align}
D_{e^{\varepsilon}}\big(\mathcal{M}_S(X^0), \mathcal{M}_S(X^1)\big) \leq D_{e^\varepsilon}\big(P_0(\varepsilon_0, \frac{1}{e^{\varepsilon_0}+1}), P_1(\varepsilon_0, \frac{1}{e^{\varepsilon_0}+1})\big). \label{eq1}
\end{align}

\begin{lemma}[Lemma 5.1. in \cite{feldman2023soda}] \label{lemma2}
For any \( p, p' \in [0, 1] \) and \( \varepsilon > 0 \), if \( p < p' \), then
\[
D_{e^\varepsilon}\big(P_0(\varepsilon_0, p) \,\|\, P_1(\varepsilon_0, p)\big)
\le
D_{e^\varepsilon}\big(P_0(\varepsilon_0, p') \,\|\, P_1(\varepsilon_0, p')\big)
\]
\end{lemma}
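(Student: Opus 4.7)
The plan is to construct a randomized post-processing map $T$ with the property that $T(P_0(\varepsilon_0, p')) = P_0(\varepsilon_0, p)$ and $T(P_1(\varepsilon_0, p')) = P_1(\varepsilon_0, p)$, and then invoke the data processing inequality for the hockey-stick divergence. Since the same $T$ is applied in both cases, this immediately yields the desired $D_{e^\varepsilon}(P_0(\varepsilon_0, p) \,\|\, P_1(\varepsilon_0, p)) \le D_{e^\varepsilon}(P_0(\varepsilon_0, p') \,\|\, P_1(\varepsilon_0, p'))$.

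Concretely, I would take $T$ to independently thin each unit in $(N_0', N_1')$ by a $\mathrm{Bern}(p/p')$ coin, that is, conditionally $N_0 \mid N_0' \sim \mathrm{Bin}(N_0', p/p')$ and $N_1 \mid N_1' \sim \mathrm{Bin}(N_1', p/p')$. To verify the image distributions, I would first rewrite $P_0(\varepsilon_0, p')$ as a sum of $n$ independent source contributions: each of the $n-1$ trial sources (coming from the joint law of $C'$ and $A' \mid C'$) takes values $(1,0), (0,1), (0,0)$ with probabilities $(p', p', 1-2p')$, while the first-user source $(\Delta_1', \Delta_2')$ takes these same values with probabilities $(e^{\varepsilon_0} p', p', 1-(e^{\varepsilon_0}+1)p')$ by construction. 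The crucial structural observation is that every source's contribution lies in $\{(0,0),(1,0),(0,1)\}$, so each $1$ appearing in $(N_0', N_1')$ is attributable to exactly one source; by the exchangeability of the active sources given the counts, the count-level thinning performed by $T$ coincides in distribution with source-level thinning, in which each source independently either retains its contribution or is zeroed out with probability $1-p/p'$. After source-level thinning, the trial-source probabilities become $(p, p, 1-2p)$ and the first-user-source probabilities become $(e^{\varepsilon_0} p, p, 1-(e^{\varepsilon_0}+1)p)$, exactly matching the source decomposition of $P_0(\varepsilon_0, p)$. Running the identical calculation with $\Delta_1$ and $\Delta_2$ exchanged in the first-user source gives $T(P_1(\varepsilon_0, p')) = P_1(\varepsilon_0, p)$.

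The hardest part will be rigorously justifying the equivalence between count-level and source-level thinning. It hinges on the fact that the support of every source contribution lies in $\{(0,0),(1,0),(0,1)\}$, so each $1$ in $(N_0', N_1')$ maps to exactly one source and the pooled coin flips can be bijectively assigned to per-source flips without ambiguity. If a source could contribute $(1,1)$ the bijection would break and count-level thinning would couple the two coordinates incorrectly; so carefully isolating this support property, and relying on the special form of $(\Delta_1', \Delta_2')$ built into the lemma's construction, is the load-bearing step of the argument. Everything else reduces to routine bookkeeping with the thinning property of the binomial and multinomial distributions.
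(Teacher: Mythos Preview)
The paper does not prove this lemma at all; it merely cites it as Lemma~5.1 of Feldman et al.\ (SODA 2023) and uses it as a black box. So there is no ``paper's own proof'' to compare against.

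Your argument is correct and is essentially the natural coupling proof. The decomposition of $P_b(\varepsilon_0,p')$ into $n$ independent $\{(0,0),(1,0),(0,1)\}$-valued sources is exactly right: the $n-1$ ``trial'' sources each have law $(p',p',1-2p')$, and the first-user source has law $(e^{\varepsilon_0}p',\,p',\,1-(e^{\varepsilon_0}+1)p')$ for $b=0$ (swapped for $b=1$). Because no source ever contributes $(1,1)$, every unit in $(N_0',N_1')$ belongs to a distinct source, and hence coordinate-wise binomial thinning of the counts is distributionally identical to flipping an independent $\mathrm{Bern}(p/p')$ coin per source. After per-source thinning the three probabilities scale to $(e^{\varepsilon_0}p,\,p,\,1-(e^{\varepsilon_0}+1)p)$ and $(p,p,1-2p)$ respectively, which reproduces $P_b(\varepsilon_0,p)$; the data processing inequality then gives the claim.

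Two small points worth tightening in a write-up: (i) make explicit that the two binomial thinnings are \emph{conditionally independent} given $(N_0',N_1')$, since that is what lets you match the per-source coin flips; (ii) the lemma's stated range $p,p'\in[0,1]$ is looser than what the construction actually supports (one needs $p'\le 1/(e^{\varepsilon_0}+1)$ for $(\Delta_1',\Delta_2')$ to be well-defined), so either restrict to that range or note that the distributions are only defined there. Neither affects the validity of your argument on the intended domain.
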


Unfortunately, they encountered difficulty in constructing a post-processing function \( f \) for this construction of \( P_0 \) and \( P_1 \). While they provided a function in the original paper, it was proven to be incorrect in the corrected revision \cite{feldman2023arxiv}. The issue arises from the fact that the ``leftover'' distribution of \( x_1 \) (i.e., \( \mathcal{Q}_1 \)) is mixed with the ``leftover'' distribution of \( x_i \) (i.e., \( \mathcal{Q}_i \)) in the above construction.
In this case, the function \( f \) does not know which distribution to sample from. This technical problem is fundamental and remains unsolved.

Although the corrected version was published on arXiv in October 2023, this error has been propagated in subsequent works~\cite{tv_shuffle,ChenCG24,wang1,wang2,koskela2023numerical}.
For instance, the variation-ratio framework made significant efforts to design an algorithm to find the parameter \( p \) for various specific local randomizers in the decomposition of Theorem \ref{theorem_sc_decomp}. However, their work relies on the incorrect post-processing function \( f \) presented in~\cite{feldman2023soda}, which renders their results invalid.

Due to this fundamental difficulty, it is required that \( x_1 \) has no ``leftover'' distribution. In other words, each component of \( \mathcal{R}(x_1) \) must be distinguishable from the ``leftover'' distribution of \( x_i \) for \( i \geq 2 \). In the above example, this necessitates a four-point-based construction for \( P_0(\varepsilon_0, p, q) \) and \( P_1(\varepsilon_0, p, q) \)~\cite{feldman2023arxiv}:
\begin{align*}
\mathcal{R}(x_1^0) &= e^{\varepsilon_0} p \mathcal{Q}_1^0 + p \mathcal{Q}_1^1 + (1 - p - e^{\varepsilon_0} p) \mathcal{Q}_1,\\
\mathcal{R}(x_1^1) &= p \mathcal{Q}_1^0 + e^{\varepsilon_0} p \mathcal{Q}_1^1 + (1 - p - e^{\varepsilon_0} p) \mathcal{Q}_1,\\
\forall i \in [2, n], \quad \mathcal{R}(x_i) &= p \mathcal{Q}_1^0 + p \mathcal{Q}_1^1 + q \mathcal{Q}_1 + (1 - 2p - q) \mathcal{Q}_i.
\end{align*}

This new decomposition proposed in the corrected version introduces additional challenges. 
First, for specific randomizers, computing a tight value of \( q \) is nontrivial. 
Second, the monotonicity of 
\(
D_{e^\varepsilon}\big(P_0(\varepsilon_0, p, q), P_1(\varepsilon_0, p, q)\big)
\)
with respect to \( p \) is not known. Consequently, we are unable to derive the desired conclusion—namely, a general upper bound applicable to any \( \varepsilon_0 \)-DP local randomizer, as stated in (\ref{eq1}). 
For the same reason, it remains unclear whether this decomposition necessarily yields tighter bounds than the standard clone decomposition. Moreover, the new bound \( D_{e^\varepsilon}\big(P_0(\varepsilon_0, p, q), P_1(\varepsilon_0, p, q)\big) \) for a specific local randomizer lacks an efficient algorithm to compute.

\section{General Clone and the Optimal Bounds}\label{sec_general_clone}

In this section, we formalize the \textit{general clone paradigm}, which unifies and generalizes all decomposition methods for analyzing privacy amplification in the shuffle model. We then identify the optimal bounds achievable within this paradigm. The main results are summarized as follows:

\begin{itemize}
    \item \textbf{Upper bound limitation:} The general clone paradigm does not provide tighter bounds than the privacy blanket. In other words, its analytical capability is not inherently stronger than that of the privacy blanket framework.
    
    \item \textbf{Equivalence for specific randomizers:} For any \emph{specific} local randomizer, the optimal decomposition under the general clone paradigm is \emph{equivalent} to the decomposition used in the privacy blanket framework.
\end{itemize}

The hierarchy of the bounds provided by the decomposition-based methods is shown in Fig. \ref{hierarchy}.
\begin{figure}
    \centering
    \includegraphics[width=0.6\linewidth]{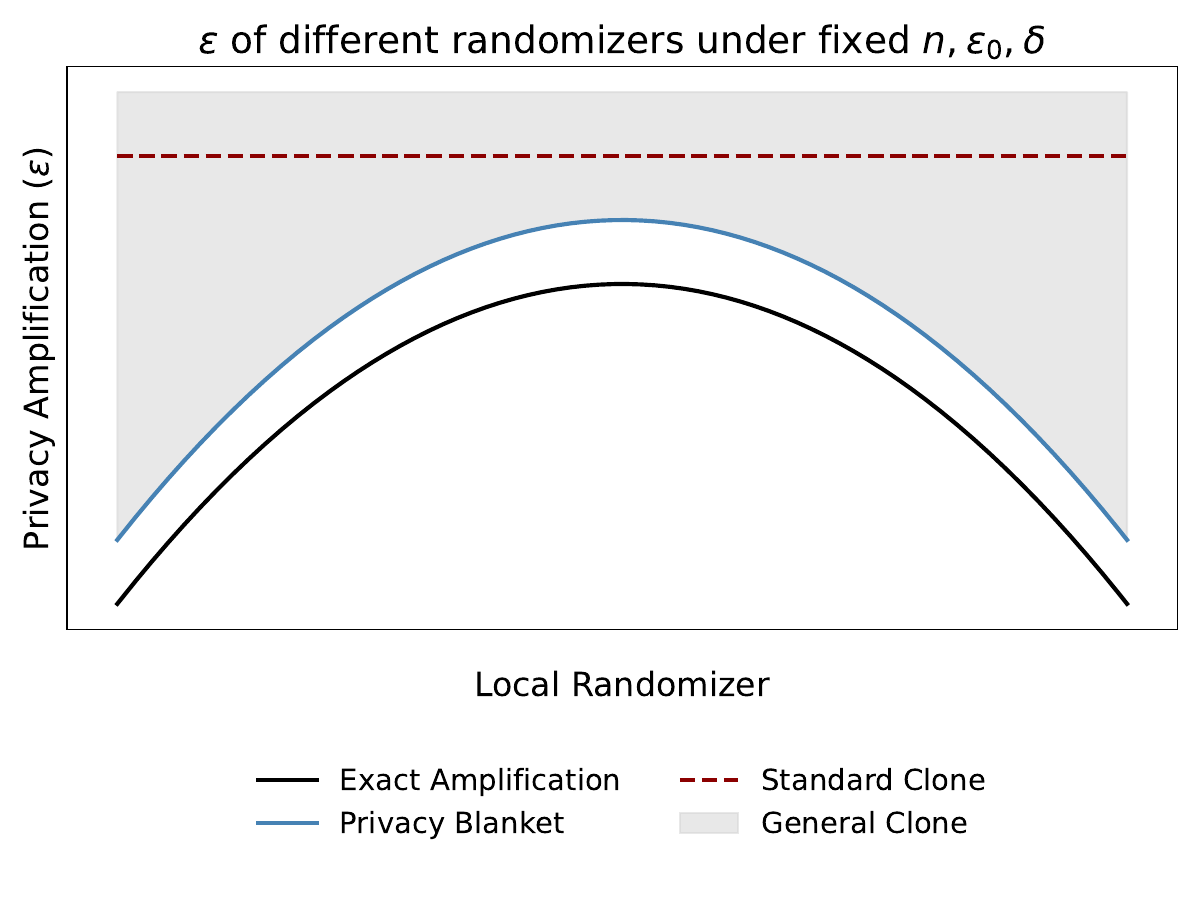}
    \caption{Hierarchy among decomposition-based methods. The Standard clone is horizontal (generic bound), while the Blanket curve varies with the randomizer (specific bound).}
    \label{hierarchy}
\end{figure}

Since the primary optimal decomposition derived from the blanket framework is generally highly redundant, we propose a general method for computing a simplified optimal decomposition. Furthermore, we observe that the optimal decomposition satisfies desirable compositional properties under both joint composition and parallel composition. Specifically, the optimal decomposition of a joint composition corresponds to the Cartesian product of the optimal decompositions of the sub-local randomizers, while that of a parallel composition corresponds to their weighted combination.

\subsection{Definition of general clone}\label{sec:definition_gc}
\begin{definition}[Decomposition in the General Clone]\label{definition:gc}
Let \( \mathcal{R}: \mathbb{X} \to \mathbb{Y} \) be a local randomizer. The general clone paradigm defines a decomposition of \( \mathcal{R} \) as follows:
\begin{align}
\mathcal{R}(x_1^0) &= \sum_{j=1}^k a_j Q_1^j, \nonumber \\
\mathcal{R}(x_1^1) &= \sum_{j=1}^k b_j Q_1^j, \nonumber \\
\forall i \in [2, n],\quad \mathcal{R}(x_i) &= \sum_{j=1}^k c_j Q_1^j + \beta Q_i, \label{f1}
\end{align}
where \( Q_1^j \) for \( j = 1, \dots, k \) and \( Q_i \) for \( i = 2, \dots, n \) are probability distributions over \( \mathbb{Y} \), and \( a_j, b_j, c_j, \beta \) are non-negative coefficients and satisfy:
\[
\sum_{j=1}^k a_j = 1,\quad \sum_{j=1}^k b_j = 1,\quad \beta+\sum_{j=1}^k c_j = 1.
\]
\end{definition}

\begin{remark}
Although the index \( i \) ranges over \( [2, n] \), each \( Q_i \) depends only on the corresponding input \( x_i \). Specifically, it is defined as the ``left-over'' distribution:
\[
\text{\emph{LO}}(x_i) := \frac{1}{\beta} \left( \mathcal{R}(x_i) - \sum_{j=1}^k c_j Q_1^j \right).
\]
For notational simplicity, we use \( Q_i \) to denote \( \text{\emph{LO}}(x_i) \) throughout the analysis.
\end{remark}

\begin{remark}
The general clone is defined with respect to three parameters: the local randomizer $\mathcal{R}$ and two inputs $x_1^0, x_1^1$. When $\mathcal{R}: \mathbb{X} \to \mathbb{Y}$ satisfies $\varepsilon_0$-differential privacy, a decomposition of the form given by the standard clone exists for any pair $x_1^0, x_1^1 \in \mathbb{X}$. However, in the general clone framework, the feasible decomposition may vary depending on the specific choice of $x_1^0$ and $x_1^1$.
\end{remark}

The general clone paradigm characterizes the general form of decompositions used in privacy amplification analysis. It directly subsumes the decomposition used in the standard clone framework. Although the decomposition defined by the privacy blanket appears structurally different, we show that it naturally corresponds to a valid decomposition under the general clone paradigm in Section \ref{sec:blanket_in_gc}.

When deriving the reduction pair from a general clone decomposition, an important constraint must be considered. Motivated by the failure of the stronger clone, the components \( Q_1^j \) (shared across users) should not be mixed with the left-over distributions \( Q_i \) (which are user-specific). This separation is essential to ensure the correctness and validity of the reduction.

\begin{definition}[Reduction Pair of the General Clone] \label{reduction_pair_gc}
Let \( A_0, A_1 \), and \( A_2 \) be random variables on \( \{1, 2, \dots, k+1\} \), defined by the probabilities:
\begin{align*}
&\Pr[A_0 = j] = \begin{cases}
a_j & \text{for } j \le k, \\
0 & \text{for } j = k+1.
\end{cases}   , \quad
\Pr[A_1 = j] = \begin{cases}
b_j & \text{for } j \le k, \\
0 & \text{for } j = k+1.
\end{cases}\\   
&\Pr[A_2 = j] = 
\begin{cases}
c_j & \text{for } j \le k, \\
\beta & \text{for } j = k+1.
\end{cases}   
\end{align*}
The reduction pair \( (P_0^{\mathsf{GC}}, P_1^{\mathsf{GC}}) \) (with ``$\mathsf{GC}$'' denoting ``General Clone'') is defined as the distributions of the histograms over \( \{1, \dots, k+1\} \) generated by sampling:
\begin{itemize}
    \item One sample from \( A_0 \) for \( P_0^{\mathsf{GC}} \) (or \( A_1 \) for \( P_1^{\mathsf{GC}} \));
    \item \( n-1 \) i.i.d. samples from \( A_2 \).
\end{itemize}
The output is the histogram vector \( (n_1, n_2, \dots, n_k, n_{k+1}) \) indicating the counts of each index.
\end{definition}

\begin{theorem}[General Clone Reduction]\label{theorem_generalclone}
Let \( \mathcal{R} : \mathbb{X} \to \mathbb{Y} \) be a $\varepsilon_0$-DP local randomizer and let \( \mathcal{M}_S = \mathcal{S} \circ \mathcal{R}^n \) be the shuffling of \( \mathcal{R} \). For \( \varepsilon \geq 0 \) and inputs \( X^0 \simeq X^1 \) with \( x_1^0 \neq x_1^1 \), we have
$$D_{e^\varepsilon}\big(\mathcal{M}_S(X^0),\mathcal{M}_S(X^1)\big) \le D_{e^\varepsilon}\big(P_0^{\mathsf{GC}},P_1^{\mathsf{GC}} \big).$$
\end{theorem}
\begin{proof}
The corresponding post-processing function $f^{\mathsf{GC}}$ is given in Algorithm~\ref{alg:post-processing_GC}. Its behavior is as follows.  
For each index $j \in \{1,2,\dots,k\}$, it draws $n_j$ samples from the distribution~$Q_1^j$.  
For index $k+1$, it uniformly samples an $n_{k+1}$-subset $I \subseteq \{2,3,\dots,n\}$ and, for each $i \in I$, draws a sample from the ``left-over'' distribution~$Q_i$.  
The algorithm finally outputs the multiset of all sampled values.

We show that \( f^{\mathsf{GC}}(P_0^{\mathsf{GC}}) \stackrel{d}{=} \mathcal{M}_S(X^0) \)\footnote{$X\stackrel{d}{=}Y$ means that $X$ and $Y$ have the same distribution.}; the argument for \( f^{\mathsf{GC}}(P_1^{\mathsf{GC}}) \stackrel{d}{=} \mathcal{M}_S(X^1) \) is symmetric.

Recall that \( P_0^{\mathsf{GC}} \) is the histogram of one sample drawn from \( A_0 \) and \( n-1 \) independent samples drawn from \( A_2 \), whereas \( \mathcal{M}_S(X^0) \) is obtained by shuffling the outputs of the local randomizer applied to \(x_1^0,x_2,\dots,x_n\).  
By the general clone decomposition, \( \mathcal{M}_S(X^0) \) admits the following equivalent generative description:  
user~1 draws $a_1 \sim A_0$ and then $y_1 \sim Q_1^{a_1}$; each other user $i \ge 2$ draws $a_i \sim A_2$ and then $y_i \sim Q_1^{a_i}$ if $a_i \in [k]$, or $y_i \sim Q_i$ otherwise.  
The shuffled output is \( \{y_i\}_{i=1}^n \).

Let $T^a = \{a_i\}$, $T^y = \{y_i\}$, and $T^{a,y} = \{(a_i,y_i)\}$ denote the latent, output, and joint multisets, respectively.  
Let $H^a = (n_1,n_2,\dots,n_{k+1})$ be the histogram of $T^a$, where $n_j = |\{i : a_i = j\}|$.  
Then the marginals of $H^a$ and $T^y$ are exactly $P_0^{\mathsf{GC}}$ and $\mathcal{M}_S(X^0)$.

Define the index set  
\(
I = \{ i : a_i = k+1 \}.
\)
Conditioned on $H^a$, symmetry implies that $I$ is uniformly distributed over all subsets of $\{2,3,\dots,n\}$ of size~$n_{k+1}$.  
Furthermore,
\begin{align*}
T^{a,y} \mid (H^a, I)
&= \cup_{j=1}^k\{(j, y_t) : \ y_t \sim Q_1^{j},t=1,2,\dots,n_j\}\\
  &\quad \cup \{(k+1, y_i) : i \in I,\ y_i \sim Q_i\}.\\
\text{Thus, }  &T^y\mid (H^a, I)= \Big( \bigcup_{j=1}^k V_j \Big) \cup V_{k+1},
\end{align*}
where $V_j$ consists of $n_j$ i.i.d.\ samples from~$Q_1^j$ for $1 \le j \le k$, and  
\(
V_{k+1} = \{ y_i : i \in I,\ y_i \sim Q_i \}.
\)
This conditional generative process matches exactly the sampling performed by $f^{\mathsf{GC}}$.  
Therefore, for every histogram $h$,  
\(
f^{\mathsf{GC}}(h) \stackrel{d}{=} T^y \mid (H^a = h),
\)
and hence
\(
f^{\mathsf{GC}}(P_0^{\mathsf{GC}}) \stackrel{d}{=} \mathcal{M}_S(X^0).
\)
\end{proof}

\begin{algorithm}[t]
\caption{Post-processing function of general clone, $f^{\mathsf{GC}}$ }
\label{alg:post-processing_GC}
\begin{algorithmic}[0]
\Require $(n_1,\dots,n_k,n_{k+1})\in \{0,1,\dots,n\}^{k+1}$
\State $O\gets\emptyset$
\For{$j = 1, \dots, k+1$}
    \If{$j = k+1$}
        \State Let $I$ be a randomly and uniformly chosen $n_{k+1}$-size subset of $\{2,3,\dots,n\}$
        \For{$i\in I$}
            \State $O\cup \{s\mid s\sim Q_i\}$ \Comment{Sample from the left-over distribution of $x_i$}
        \EndFor
    \Else
    \For{$loop=1,2,\dots,n_{j}$}
        \State $O\cup \{s\mid s\sim Q_1^j\}$ \Comment{Sample from $Q^j_1$}
    \EndFor
    \EndIf
\EndFor
\State \Return $O$
\end{algorithmic}
\end{algorithm}

\subsection{Blanket is ``in'' the general clone}\label{sec:blanket_in_gc}

\begin{algorithm}[t]
\caption{Bijection from privacy blanket to primary optimal decomposition in general clone, $f_2$ }
\label{alg:post-processing_B2OGC_}
\begin{algorithmic}[0]
\Meta $n$, $\mathbb{Y}=\{y_1,y_2, \dots, y_{|\mathbb{Y}|}\}$
\Require $(|\boldsymbol{b}|,Y_{|\boldsymbol{b}|})$
\State $H \gets \{\}$ \Comment{$H$ is a histogram}
\For{$y \in Y_{|\boldsymbol{b}|}$}
\State $j \gets \text{index}(y) \text{ such that } y_j = y$
\State $H[j]\gets H[j]+1$ \Comment{Map $y=y_j$ in Blanket to index $j$ in GC}
\EndFor
\State $H[|\mathbb{Y}|+1] \gets n-1-|\boldsymbol{b}|$ \Comment{Map $y=\bot$ in Blanket to index $|\mathbb{Y}|+1$ in GC}
\State \Return $H$
\end{algorithmic}
\end{algorithm}

For every local randomizer, the general clone paradigm always admits a decomposition that is equivalent to the decomposition in the privacy blanket framework, where the components are single-point distributions \( \mathbbm{1}_{y_j} \), with \( y_j \in \mathbb{Y} \).

\begin{theorem}\label{primary_optimal_decomposition}
For any local randomizer $\mathcal{R}: \mathbb{X}\to \mathbb{Y}$ and inputs $x_1^0, x_1^1 \in \mathbb{X}$, the following decomposition in the general clone paradigm is equivalent to the privacy blanket framework:
\begin{align}
\mathcal{R}(x_1^0) &= \sum_{j=1}^{|\mathbb{Y}|} a_j \mathbbm{1}_{y_j}, \nonumber \\ 
\mathcal{R}(x_1^1) &= \sum_{j=1}^{|\mathbb{Y}|} b_j \mathbbm{1}_{y_j}, \nonumber \\ 
\forall i \in [2, n], \quad \mathcal{R}(x_i) &= \sum_{j=1}^{|\mathbb{Y}|} c_j \mathbbm{1}_{y_j} + \beta Q_i, \label{c}
\end{align}
where $a_j = \mathcal{R}(x_1^0)(y_j)$, $b_j = \mathcal{R}(x_1^1)(y_j)$, and $c_j = \inf_{x \in \mathbb{X}} \mathcal{R}(x)(y_j)$.

We refer to this as the \textbf{primary optimal decomposition} with respect to $\mathcal{R}$ and $(x_1^0, x_1^1)$.
\end{theorem}

\begin{proof}
We construct a post-processing function $f_2$ (shown in Algorithm \ref{alg:post-processing_B2OGC_}).
The function $f_2$ initializes an empty set $H$, then iterates over its input. Upon encountering an element $y_j$, it adds the index $j$ to the set $H$. Finally, it appends $n - 1 - |\boldsymbol{b}|$ copies of $|\mathbb{Y}| + 1$ to $H$. 

We now prove that $f_2(P^\mathsf{B}_0) = P^{\mathsf{GC}}_0$.  
The input of User 1 is sampled from $\mathcal{R}(x_1^0)$ and always appears in the output of $P^\mathsf{B}_0$. By the definition of $f_2$, we have:
$$f_2(\mathcal{R}(x_1^0)) \sim A_0,$$
where $A_0$ is defined in Definition~\ref{reduction_pair_gc}. 

For the remaining users, each appears in the output of $P^\mathsf{B}_0$ with probability $\gamma$, and is omitted with probability $1 - \gamma$. In the latter case, $f_2$ outputs the special index $|\mathbb{Y}| + 1$. In the former case, the distribution of $f_2(\omega)$ matches the \textbf{conditional distribution} of $A_2$ given that $A_2 \ne |\mathbb{Y}| + 1$:
$$\Pr[f_2(\omega)=y_j]=\inf_{x \in \mathbb{X}} \mathcal{R}(x)(y_j)/\gamma=c_j/\sum_{m=1}^{|\mathbb{Y}|}c_m.$$
Together, this shows that applying $P^B_0$ followed by $f_2$ for a remaining user is equivalent to directly sampling a random variable from $A_2$. Therefore, the output distribution satisfies:
$$f_2(P^\mathsf{B}_0) = P^{\mathsf{GC}}_0.$$
The proof for $f_2(P^\mathsf{B}_1) = P^{\mathsf{GC}}_1$ follows by a similar argument.

As $f_2$ is a bijection, a post-processing function $f_2^{-1}$ exists such that $f_2^{-1}(P^{\mathsf{GC}}_0) = P^{\mathsf{B}}_0, f_2^{-1}(P^{\mathsf{GC}}_1) = P^{\mathsf{B}}_1$, thereby proving the equivalence between the two constructions.
\end{proof}

\subsection{General clone is not stronger than blanket}
Given a specific randomizer, we compare any decomposition within the general clone paradigm against the decomposition provided by the privacy blanket framework.

\begin{theorem}\label{theorem_clonevsblanket}
For every local randomizer, there is a post-processing function from $(P_0^{\mathsf{GC}},P_1^{\mathsf{GC}})$ to the $(P_0^\mathsf{B},P_1^\mathsf{B})$, where $(P_0^{\mathsf{GC}},P_1^{\mathsf{GC}})$ is the reduction pair given by the general clone paradigm equipped with any decomposition, and $(P_0^\mathsf{B},P_1^\mathsf{B})$ is the reduction pair given by the privacy blanket framework. Therefore,
$$D_{e^\varepsilon}(\mathcal{M}_S(X_0),\mathcal{M}_S(X_1))\le D_{e^\varepsilon}(P_0^\mathsf{B},P_1^\mathsf{B}) \le D_{e^\varepsilon}(P_0^{\mathsf{GC}},P_1^{\mathsf{GC}}).$$
\end{theorem}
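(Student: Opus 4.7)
The plan is to construct a single randomized post-processing function $g$ satisfying $g(P_a^{GC}) = P_a^B$ for both $a \in \{0, 1\}$. Given such a $g$, the data processing inequality yields $D_{e^\epsilon}(P_0^B \,\|\, P_1^B) \leq D_{e^\epsilon}(P_0^{GC} \,\|\, P_1^{GC})$, and combining with the privacy blanket reduction already established in the excerpt produces the chain of inequalities asserted by the theorem.

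Concretely, $g$ operates in three phases. Given a histogram $h = (n_1, \ldots, n_{k+1})$ as input, Phase 1 reconstructs a full type sequence $(t_1, \ldots, t_n)$ consistent with $h$, respecting the structural constraint $t_1 \in \{1, \ldots, k\}$ (since $A_0$ and $A_1$ place no mass on index $k+1$) and distributing the remaining counts over users $2, \ldots, n$. Phase 2 samples outputs from the general clone decomposition: $y_i \sim Q_1^{t_i}$ when $t_i \leq k$, and $y_i \sim Q_i$ when $t_i = k+1$, which only occurs for $i \geq 2$. Phase 3 samples blanket indicators: for each $i \geq 2$, $b_i \sim \text{Bern}(\gamma\,\omega(y_i)/\mathcal{R}(x_i)(y_i))$. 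This kernel is well-defined because $\gamma\,\omega(y) \leq \mathcal{R}(x_i)(y)$ pointwise, by the definition of the privacy blanket. Finally, $g$ emits $(|\boldsymbol{b}|, \mathcal{S}(\{y_1\} \cup \{y_i : i \geq 2,\, b_i = 1\}))$.

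Correctness is verified by tracking the joint distribution of $(h, \vec{t}, \vec{y}, \vec{b})$ through $g$. If Phase 1 reproduces the joint of $(h, \vec{t})$ from the generative model in which $t_1 \sim A_a$ and $t_i \sim A_2$ for $i \geq 2$ (with $h$ being the histogram of $\vec{t}$), then Phase 2, together with the general clone decomposition, yields $\vec{y} \sim \mathcal{R}^n(X^a)$. Phase 3 then couples each $(y_i, b_i)$ in the manner implicit in the blanket decomposition $\mathcal{R}(x_i) = \gamma\,\omega + (1-\gamma)\,\text{LO}(x_i)$: marginally $b_i \sim \text{Bern}(\gamma)$, and $y_i \mid b_i = 1 \sim \omega$, with the blanket samples i.i.d.\ across $i$. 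Hence the emitted pair has precisely the distribution of $P_a^B$ from Definition~\ref{definition_blanket}.

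The principal technical obstacle is Phase 1: the Bayesian posterior $\Pr[t_1 = j \mid h]$ is proportional to $a_j^{(a)} n_j / c_j$ and therefore depends on $a$, whereas $g$ must be a single kernel independent of $a$. The resolution exploits the fact that the final output is a shuffled multiset, so $g$ need only reproduce the correct distribution of that multiset rather than faithfully recover user 1's identity within the configuration; the key is to design a type-reassignment rule that is symmetric over the type-$\leq k$ slots of the histogram and to let the randomized sampling in Phases 2 and 3 supply the $a$-dependent marginal on the final output. Verifying that this reassignment yields the correct joint is the core of the argument; once in hand, the data processing inequality gives $D_{e^\epsilon}(P_0^B \,\|\, P_1^B) \leq D_{e^\epsilon}(P_0^{GC} \,\|\, P_1^{GC})$, and the blanket reduction already stated in the excerpt supplies the remaining inequality $D_{e^\epsilon}(\mathcal{M}_S(X_0) \,\|\, \mathcal{M}_S(X_1)) \leq D_{e^\epsilon}(P_0^B \,\|\, P_1^B)$, completing the hierarchy.
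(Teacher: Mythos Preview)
Your high-level strategy—build a single randomized kernel $g$ with $g(P_a^{GC})=P_a^B$ for both $a$ and then invoke the data processing inequality—is exactly the paper's. The gap is in the construction. You correctly flag that no $a$-independent Phase~1 can recover the posterior $\Pr[t_1=j\mid h]\propto a_j^{(a)}n_j/c_j$, but your ``resolution'' is only a hope that shuffling symmetry and Phases~2--3 will absorb the discrepancy. They do not. Under any symmetric Phase~1 the marginal of $\hat t_1$ is wrong, so $y_1\not\sim\mathcal{R}(x_1^a)$; and your Phase~3 acceptance $b_i\sim\mathrm{Bern}\bigl(\gamma\omega(y_i)/\mathcal{R}(x_i)(y_i)\bigr)$ is the correct blanket coupling only when $y_i\sim\mathcal{R}(x_i)$ marginally, which in turn requires Phase~1 to reproduce the true joint on types—the very thing you cannot do. (Already with $n=2$, $k=2$, $A_0\equiv 1$, $A_1\equiv 2$ the output laws fail to match.) Worse, Phase~3 re-injects user identities through its dependence on $x_i$, so the construction cannot be made user-oblivious while keeping that step.

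The paper bypasses the whole detour through $\vec t$ and $\vec y$ via one structural observation you never invoke: since the general-clone decomposition $\mathcal{R}(x)=\sum_j c_jQ_1^j+\beta Q_x$ must hold for \emph{every} $x\in\mathbb{X}$, one has pointwise
\[
\sum_{j=1}^k c_jQ_1^j(y)\;\le\;\inf_{x}\mathcal{R}(x)(y)\;=\;\gamma\,\omega(y),
\]
hence $\sum_j c_j\le\gamma$, and every left-over $Q_i$ contains the \emph{user-independent} component $Q^{\mathrm{com}}\propto\gamma\omega-\sum_jc_jQ_1^j$ with weight $(\gamma-\sum_jc_j)/\beta$. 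The post-processing then acts directly on the histogram: for each of the $n_j$ counts at index $j\le k$, draw one sample from $Q_1^j$ and put it in $Y$; for each of the $n_{k+1}$ counts at index $k{+}1$, with probability $(\gamma-\sum_jc_j)/\beta$ draw from $Q^{\mathrm{com}}$ and put it in $Y$; output $(|Y|-1,Y)$. User~1's index is always $\le k$, so her sample lands in $Y$ automatically and, marginally over $A_a$, is distributed as $\sum_j a_j^{(a)}Q_1^j=\mathcal{R}(x_1^a)$; every other user is included with probability exactly $\gamma$ and, conditionally on inclusion, has law $\omega$ by construction of $Q^{\mathrm{com}}$. No type vector is reconstructed, no user is identified, and no $x_i$-dependent acceptance is needed.
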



\begin{proof}
The core idea is that the privacy blanket characterizes the maximal common component shared by all $\mathcal{R}(x_i)$, and no decomposition under the general clone paradigm can yield a larger common part.
Recall that the privacy blanket is defined via:
$$\omega(y) = \frac{1}{\gamma} \inf_{x \in \mathbb{X}} \mathcal{R}(x)(y),$$
where \( \gamma \) is a normalization constant.
An important observation is:
\begin{align}
\forall y \in \mathbb{Y}: \sum_{j=1}^k c_j Q_1^j(y) \le \gamma \omega(y). \label{f12}
\end{align}
This holds because the decomposition condition (\ref{f1}) in Definition \ref{definition:gc} must be satisfied for all inputs $x_2, x_3, \dots, x_n \in \mathbb{X}$, i.e.,
$$\forall x \in \mathbb{X}: \sum_{j=1}^k c_j Q_1^j(y) \le \mathcal{R}(x)(y).$$
Taking the infimum over $x \in \mathbb{X}$ on both sides of the above inequality yields (\ref{f12}).
Integrating both sides of (\ref{f12}) over $y$ yields $\sum_{j=1}^k c_j \le \gamma$.

Define a distribution $Q^{\text{com}}$ as the common part of all $Q_i$:
$$Q^{\text{com}}(y) = \frac{\gamma \omega(y) - \sum_{j=1}^k c_j Q_1^j(y)}{\mu},$$
where $\mu = \gamma - \sum_{j=1}^k c_j$ is a normalization factor. When $\mu = 0$, the distribution $Q^{\mathrm{com}}$ can be defined arbitrarily.
Then, each $Q_i$ can be decomposed by $$Q_i= \frac{\mu}{\beta} Q^{\text{com}}+(1-\frac{\mu}{\beta})Q_i'.$$
Here, \( Q_i' \) is a probability distribution, whose existence is established as follows:
From equation (\ref{f1}) and the definition of $\omega$, we have:
$$\forall i\in [2,n], \forall y\in \mathbb{Y}:\gamma \omega(y) \le \mathcal{R}(x_i)(y) = \sum_{j=1}^k c_j Q_1^j(y) + \beta Q_i(y).$$
This implies:
$$\forall i\in [2,n],\forall y\in \mathbb{Y}:Q_i(y) \ge \frac{\gamma \omega(y) -\sum_{j=1}^k c_j Q_1^j(y)}{\beta}=\frac{\mu}{\beta} Q^{\text{com}}(y).$$

Let \( f_3 \) be the post-processing function defined in Algorithm~\ref{alg:post-processing_GC2B}.
Its behavior is as follows: it processes each input sequentially. When the input is \( j \in [1, k] \), it outputs a sample from \( Q_1^j \); when \( j = k+1 \), it samples from \( Q^{\text{com}} \) with a certain probability. 

We now argue that \( f_3(P_0^{\mathsf{GC}}) = P_0^\mathsf{B} \). By definition, \( P_0^\mathsf{B} \) outputs one sample from \( \mathcal{R}(x_1^0) \), and \( |\boldsymbol{b}| \sim \text{Bin}(n-1, \gamma) \) samples from \( \omega \).
From Definition~\ref{reduction_pair_gc}, \( P_0^{\mathsf{GC}} \) draws one sample from \( A_0 \), and \( n-1 \) samples from \( A_2 \). Since \( \text{supp}(A_0) \subseteq \{1,\dots,k\} \), its output is always retained. The transformed output has distribution:
\begin{align*}
\Pr[f_3(A_0) = y] &= \sum_{j=1}^k \Pr[A_0 = j] \cdot \Pr[f_3(j) = y] \\
&= \sum_{j=1}^k a_j Q_1^j(y) = \mathcal{R}(x_1^0)(y).
\end{align*}

Now consider \( A_2 \). The event \( f_3(A_2) = \emptyset \) occurs with probability:
\begin{align*}
\Pr[f_3(A_2) = \emptyset] &= \Pr[A_2 = k+1] \cdot \Pr[f_3(k+1) = \emptyset] \\
&= \beta \cdot \left(1 - \frac{\mu}{\beta} \right) = \beta + \sum_{j=1}^k c_j - \gamma = 1 - \gamma.
\end{align*}

Moreover, the conditional distribution of \( f_3(A_2) \), given that it is non-empty, is exactly \( \omega \):
\begin{align*}
&\Pr[f_3(A_2) =y|f_3(A_2) \neq \emptyset]\\
&=\sum_{j=1}^{k+1} \Pr[A_2 =j|f_3(A_2) \neq \emptyset]\cdot \Pr[f_3(j)=y|f_3(A_2) \neq \emptyset \wedge A_2=j]\\
&=\sum_{j=1}^k c_jQ_1^j(y)/\gamma+ \frac{\gamma-\sum_{j=1}^k c_j}{\gamma} \cdot Q^{\text{com}}(y)\\
&=\sum_{j=1}^k c_jQ_1^j(y)/\gamma+  \frac{\gamma-\sum_{j=1}^k c_j}{\gamma}\cdot \frac{\gamma \omega(y) - \sum_{j=1}^k c_j Q_1^j(y)}{\gamma-\sum_{j=1}^k c_j}\\
&=\omega(y).
\end{align*}
Hence, \( f_3(P_0^{\mathsf{GC}}) = P_0^\mathsf{B} \), and the same argument applies for \( f_3(P_1^{\mathsf{GC}}) = P_1^\mathsf{B} \). This completes the proof.
\end{proof}

\begin{algorithm}[t]
\caption{Post-processing function from general clone to privacy blanket framework, $f_3$ }
\label{alg:post-processing_GC2B}
\begin{algorithmic}[0]
\Require $(n_1,\dots,n_k,n_{k+1})\in \{0,1,\dots,n\}^{k+1}$
\State $Y \gets \emptyset$
\For{$i = 1, \dots, k+1$}
\For{$count=1,2,\dots,n_i$}
    \If{$i = k+1$}
        \State $r \gets \text{Bern}(\frac{\gamma-\sum_{j=1}^k c_j}{\beta})$
        \If{$r=1$}
        \State $s \gets_{Q^{\text{com}}} \mathbb{Y}$ \Comment{Sample from $Q^{\text{com}}$}
            \State $Y \gets Y \cup \{s\}$
        \EndIf
    \Else
        \State $s \gets_{Q^i_1} \mathbb{Y}$ \Comment{Sample from $Q^i_1$}
            \State $Y \gets Y \cup \{s\}$
    \EndIf

\EndFor
\EndFor
\State \Return $(|Y|-1,Y)$
\end{algorithmic}
\end{algorithm}

As a result, although both \( D_{e^\varepsilon}(P_0^{\mathsf{GC}}|| P_1^{\mathsf{GC}}) \) and \( D_{e^\varepsilon}(P_0^{\mathsf{B}}|| P_1^{\mathsf{B}}) \) serve as upper bounds on the privacy amplification in the shuffle model, the bound provided by the privacy blanket is always at least as tight as that of the general clone paradigm under any decomposition.

\subsection{Simplified optimal decomposition}\label{sec_simplified_od}

We refer to the optimal decomposition of a local randomizer as the one that is equivalent to the decomposition in the privacy blanket framework. Since the primary optimal decomposition in Theorem~\ref{primary_optimal_decomposition} can be redundant, we propose a method for computing the \emph{simplified optimal decomposition}. Specifically, by merging components of the primary optimal decomposition that share the same ratio \( a_j : b_j : c_j \), we obtain a simplified yet equivalent form of the optimal decomposition.

\begin{theorem}\label{theorem_ogc}

For a local randomizer $\mathcal{R}: \mathbb{X}\to \mathbb{Y}$ and inputs $x_1^0, x_1^1 \in \mathbb{X}$,
define $$S_{r_0,r_1}=\{y| \frac{\mathcal{R}(x_1^0)(y)}{\inf_{x} \mathcal{R}(x)(y)}=r_0, \frac{\mathcal{R}(x_1^1)(y)}{\inf_{x} \mathcal{R}(x)(y)}=r_1\}.$$
The following decomposition under the general clone paradigm is equivalent to the privacy blanket framework:
\begin{align}
\mathcal{R}(x_1^0) &= \sum_{r_0,r_1} a_{r_0,r_1} Q_{r_0,r_1} \nonumber \\ 
\mathcal{R}(x_1^1) &= \sum_{r_0,r_1} b_{r_0,r_1} Q_{r_0,r_1} \nonumber \\ 
\forall i \in [2, n], \quad \mathcal{R}(x_i) &= \sum_{r_0,r_1} c_{r_0,r_1} Q_{r_0,r_1} + \beta Q_i. \label{gc_reduced}
\end{align}
where 
\begin{align*}
a_{r_0,r_1} &= \sum_{y \in S_{r_0,r_1}} \mathcal{R}(x_1^0)(y),\\
b_{r_0,r_1} &= \sum_{y \in S_{r_0,r_1}} \mathcal{R}(x_1^1)(y),\\
c_{r_0,r_1} &= \sum_{y \in S_{r_0,r_1}} \inf_{x} \mathcal{R}(x)(y)= \sum_{y \in S_{r_0,r_1}} \gamma \omega(y).
\end{align*}
The distribution $Q_{r_0,r_1}$ is defined as below:
\begin{align*}
Q_{r_0,r_1}(y)=
\begin{cases}
\frac{R(x_1^0)(y)}{a_{r_0,r_1}}, &y\in S_{r_0,r_1}\\
0,& y\notin S_{r_0,r_1}
\end{cases} \quad.
\end{align*}

\end{theorem}

\begin{proof}
We first verify that the decomposition defined in Equation~(\ref{gc_reduced}) is valid.
For any fixed \( y^* \), there exists a unique pair \( (p, q) \) such that \( y^* \in S_{p,q} \). Then,
$$\sum_{r_0,r_1} a_{r_0,r_1} Q_{r_0,r_1}(y^*)=a_{p,q}Q_{p,q}(y^*)=a_{p,q}\frac{R(x_1^0)(y^*)}{a_{p,q}}=R(x_1^0)(y^*).$$
This confirms the validity of the first equation.

By the definition of $S_{p, q}$, for every $y \in S_{p, q}$, the values $\mathcal{R}(x_1^0)(y)$, $\mathcal{R}(x_1^1)(y)$, and $\gamma \omega(y)$ are proportional. Hence,
$$
Q_{p, q}(y^*) = \frac{\mathcal{R}(x_1^0)(y^*)}{a_{p, q}} = \frac{\mathcal{R}(x_1^1)(y^*)}{b_{p, q}} = \frac{\gamma \omega(y^*)}{c_{p, q}},
$$
which allows us to verify the second equation:
$$
\sum_{r_0, r_1} b_{r_0, r_1} Q_{r_0, r_1}(y^*) = b_{p, q} Q_{p, q}(y^*) = \mathcal{R}(x_1^1)(y^*).
$$

To verify the third equation, observe:
$$
\sum_{r_0, r_1} c_{r_0, r_1} Q_{r_0, r_1}(y^*) = c_{p, q} \cdot \frac{\gamma \omega(y^*)}{c_{p, q}} = \gamma \omega(y^*).
$$
It guarantees the existence of the \( Q_i \). 

Now, we construct the post-processing function $f_4$, as given in Algorithm~\ref{alg:post-processing_B2OGC}. The function initializes an empty set $H$, then iterates over its input. For each element $y$, it computes:
$$
r_0 = \frac{\mathcal{R}(x_1^0)(y)}{\inf_{x \in \mathbb{X}} \mathcal{R}(x)(y)}, \quad
r_1 = \frac{\mathcal{R}(x_1^1)(y)}{\inf_{x \in \mathbb{X}} \mathcal{R}(x)(y)},
$$
and appends the pair $(r_0, r_1)$ to $H$. After processing all inputs, it appends $n - 1 - |\boldsymbol{b}|$ copies of a dummy symbol $\bot$, representing the residual components of the General Clone construction.

Following the same argument as in Theorem \ref{theorem_clonevsblanket}, it can be verified that
$$
f_4(\mathcal{R}(x_1^0)) \sim A_0, \quad f_4(\mathcal{R}(x_1^1)) \sim A_1,
$$
and for all other users, the output from applying $P_0^B$ followed by $f_4$ is distributed identically to $A_2$. Here, \( A_0 \), \( A_1 \), and \( A_2 \) are defined according to the general clone reduction induced by the above simplified decomposition (see Definition~\ref{reduction_pair_gc}).
\end{proof}

\begin{algorithm}[t]
\caption{Post-processing function from privacy blanket to simplified optimal decomposition in general clone, $f_4$ }
\label{alg:post-processing_B2OGC}
\begin{algorithmic}[0]
\Meta $n$
\Require $(|\boldsymbol{b}|,Y_{|\boldsymbol{b}|})$

\State $H \gets \{\}$ \Comment{$H$ is a histogram}
\For{$y \in Y_{|\boldsymbol{b}|}$}
\State $r_0 \gets\frac{\mathcal{R}(x_1^0)(y)}{\inf_{x \in \mathbb{X}} \mathcal{R}(x)(y)}$
\State $r_1 \gets \frac{\mathcal{R}(x_1^1)(y)}{\inf_{x \in \mathbb{X}} \mathcal{R}(x)(y)}$
\State $H[(r_0,r_1)] \gets H[(r_0,r_1)]+1$ 
\EndFor

\State $H[\bot] \gets n-1-|\boldsymbol{b}|$ \Comment{The residual part of GC}

\State \Return $H$
\end{algorithmic}
\end{algorithm}

\begin{remark}
Essentially, components with the same ratio can be merged because the GPARV values over merged components remain invariant under the transformation.
We provide a detailed explanation of this point in Section~\ref{sec_gparv}.
\end{remark}

According to Theorem~\ref{theorem_ogc}, the optimal decomposition of a local randomizer can be computed through the following procedure.
First, we identify all possible values that the pair $(r_0, r_1)$ can take. Many local randomizers satisfy the \textbf{extreme property}, namely:
$$
\forall x, x',\ \forall y:\quad \frac{\mathcal{R}(x)(y)}{\mathcal{R}(x')(y)} \in \{ e^{-\varepsilon_0},\ 1,\ e^{\varepsilon_0} \}.
$$
In such cases, there are only four distinct combinations for $(r_0, r_1)$:
$$
(1,1),\quad (e^{\varepsilon_0}, 1),\quad (1, e^{\varepsilon_0}),\quad (e^{\varepsilon_0}, e^{\varepsilon_0}).
$$
As a result, the simplified optimal decomposition contains only \textit{five} components (including the residual component).

To compute the coefficients $a_{r_0, r_1}$, we first determine the subsets $S_{r_0, r_1}$, and then evaluate:
$$
c_{r_0, r_1} = \sum_{y \in S_{r_0, r_1}} \inf_{x} \mathcal{R}(x)(y).
$$
The remaining coefficients are obtained by scaling:
$$
a_{r_0, r_1} = r_0 \cdot c_{r_0, r_1}, \quad b_{r_0, r_1} = r_1 \cdot c_{r_0, r_1}.
$$

As an example, consider the \( k \)-Random Response mechanism.

\begin{example}\label{example1}
Denote \( \{1,2,\dots,k\} \) by \( [k] \) and the uniform distribution on \( [k] \) by \( \mathcal{U}_{[k]} \). 
For any \( k \in \mathbb{N} \) and \( \varepsilon_0 > 0 \), the \( k \)-randomized response mechanism \( k\text{RR}: [k] \to [k] \) is defined as:
\[
k\text{RR}(x) =
\begin{cases}
x, & \text{with probability } \frac{e^{\varepsilon_0} - 1}{e^{\varepsilon_0} + k - 1}, \\
y \sim \mathcal{U}_{[k]}, & \text{with probability } \frac{k}{e^{\varepsilon_0} + k - 1}.
\end{cases}
\]

It can be computed that:
\[
S_{e^{\varepsilon_0}, 1} = \{ x_1^0 \}, \quad
S_{1, e^{\varepsilon_0}} = \{ x_1^1 \}, \quad
S_{1,1} = [k] \setminus \{ x_1^0, x_1^1 \},
\]
and for all \( y \), we have:
\[
\inf_{x} \mathcal{R}(x)(y) = \frac{1}{e^{\varepsilon_0} + k - 1}.
\]
Its optimal decomposition is as follows:
\begin{align*}
R(x_1^0) &= e^{\varepsilon_0} p \mathbbm{1}_{x_1^0} + p \mathbbm{1}_{x_1^1} + q U, \\
R(x_1^1) &= p \mathbbm{1}_{x_1^0} + e^{\varepsilon_0} p \mathbbm{1}_{x_1^1} + q U, \\
\forall i \in [2,n]: R(x_i) &= p \mathbbm{1}_{x_1^0} + p \mathbbm{1}_{x_1^1} + q U + (e^{\varepsilon_0} - 1) p \mathbbm{1}_{x_i}.
\end{align*}
where \( p = \frac{1}{e^{\varepsilon_0} + k - 1} \), \( q = (k - 2) p \), and \( U \) is the uniform distribution over \( [k] - \{x_1^0, x_1^1\} \). This decomposition is also considered in the corrected version of the stronger clone \cite{feldman2023arxiv}.    
\end{example}

The optimal decompositions for Binary Local Hash~\cite{3}, RAPPOR~\cite{6}, Optimized Unary Encoding (OUE)~\cite{3} and Hadamard Response~\cite{HR} are provided in Appendix~\ref{appenix_decomposition}. It is worth noting that the optimal decompositions of these randomizers depend on the domain size \( D \). However, as \( D \to \infty \), each decomposition converges to a limiting form that is independent of \( D \), with a convergence rate of \( O(2^{-D}) \). We provide both the exact, \( D \)-dependent solutions in Table~\ref{tab_decomposition} and the asymptotic, \( D \)-independent solutions in Table~\ref{tab_decomposition2}.

\subsection{Optimal decompositions of the joint composition}\label{sec:joint}

In this section, we analyze the joint composition of multiple LDP local randomizers \( \mathcal{R}_1, \mathcal{R}_2, \dots, \mathcal{R}_m \). Let \( \mathcal{R}_i : \mathbb{X}^i \to \mathbb{Y}^i \) denote the local randomizer for the \( i \)-th component. The joint composition of these randomizers is defined as follows:
\[
\mathcal{R}^{\times}_{[1:m]} : \mathbb{X}^1 \times \mathbb{X}^2 \times \dots \times \mathbb{X}^m \to \mathbb{Y}^1 \times \mathbb{Y}^2 \times \dots \times \mathbb{Y}^m,
\]
\[
\mathcal{R}^{\times}_{[1:m]}(x_1, x_2, \dots, x_m) = (\mathcal{R}_1(x_1), \mathcal{R}_2(x_2), \dots, \mathcal{R}_m(x_m)).
\]
While our default assumption is that all input pairs \( \boldsymbol{x} = (x_1, x_2, \dots, x_m) \) and \( \boldsymbol{x}' = (x_1', x_2', \dots, x_m') \) are considered neighboring, our framework can easily accommodate more restricted adjacency definitions depending on the application.
In practice, it is common to use the joint composition of \( m \) mechanisms, each satisfying \( \left( \frac{\varepsilon_0}{m} \right) \)-LDP, to achieve overall \( \varepsilon_0 \)-LDP.

\begin{theorem}[Joint Composition Theorem]
If \( \mathcal{R}_i \) satisfies \( \varepsilon_i \)-LDP, then \( \mathcal{R}^\times_{[1:m]} \) satisfies \( \sum_{i=1}^m \varepsilon_i \)-LDP.
\end{theorem}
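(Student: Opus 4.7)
The plan is to prove this by direct pointwise comparison of densities, exploiting the product structure of $\mathcal{R}^\times_{[1:m]}$ and the assumption that $\delta=0$ (pure LDP) for each component. Since the component randomizers act independently on independent coordinates, the density of $\mathcal{R}^\times_{[1:m]}(\boldsymbol{x})$ factorizes as $\prod_{i=1}^m \mathcal{R}_i(x_i)(y_i)$, and likewise for $\boldsymbol{x}'$. This reduces the claim to a multiplicative bound on a product.

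First, I would fix two arbitrary input tuples $\boldsymbol{x}=(x_1,\dots,x_m)$ and $\boldsymbol{x}'=(x_1',\dots,x_m')$ in $\mathbb{X}^1\times\dots\times\mathbb{X}^m$, and any output point $\boldsymbol{y}=(y_1,\dots,y_m)\in\mathbb{Y}^1\times\dots\times\mathbb{Y}^m$. Using independence,
\begin{equation*}
\mathcal{R}^\times_{[1:m]}(\boldsymbol{x})(\boldsymbol{y}) \;=\; \prod_{i=1}^m \mathcal{R}_i(x_i)(y_i).
\end{equation*}
By the $\epsilon_i$-LDP guarantee of each $\mathcal{R}_i$ applied to the pair $(x_i,x_i')$, we have $\mathcal{R}_i(x_i)(y_i) \le e^{\epsilon_i}\,\mathcal{R}_i(x_i')(y_i)$ for every $y_i$. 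Multiplying these $m$ inequalities yields
\begin{equation*}
\mathcal{R}^\times_{[1:m]}(\boldsymbol{x})(\boldsymbol{y}) \;\le\; e^{\sum_{i=1}^m \epsilon_i}\;\mathcal{R}^\times_{[1:m]}(\boldsymbol{x}')(\boldsymbol{y}),
\end{equation*}
and the symmetric inequality follows identically by swapping $\boldsymbol{x}$ and $\boldsymbol{x}'$.

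Finally, I would translate this pointwise bound into the hockey-stick formulation used in the paper. Since the inequality above holds pointwise with $\alpha=e^{\sum_i \epsilon_i}$, the integrand $\max\{0,\mathcal{R}^\times_{[1:m]}(\boldsymbol{x})(\boldsymbol{y}) - \alpha\,\mathcal{R}^\times_{[1:m]}(\boldsymbol{x}')(\boldsymbol{y})\}$ is identically zero, so $D_\alpha(\mathcal{R}^\times_{[1:m]}(\boldsymbol{x})\,\|\,\mathcal{R}^\times_{[1:m]}(\boldsymbol{x}'))=0$, and symmetrically in the other direction. Hence $\mathcal{R}^\times_{[1:m]}(\boldsymbol{x})$ and $\mathcal{R}^\times_{[1:m]}(\boldsymbol{x}')$ are $(\sum_{i=1}^m \epsilon_i,\,0)$-indistinguishable, which is exactly the claimed LDP guarantee.

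There is no genuine obstacle here; the result is the standard basic composition theorem of differential privacy transported verbatim to the local setting on independent coordinates. The only thing to be careful about is distinguishing joint composition (independent coordinates, each randomizer seeing its own input) from $k$-fold composition on a shared dataset: because the coordinates are independent inputs, the factorization of the density is exact and no group-privacy or advanced composition argument is needed. If one wished to extend to the $(\epsilon_i,\delta_i)$-LDP case, the step of multiplying pointwise ratios would fail on a $\delta_i$-fraction of outputs and one would need the standard "good event" decomposition; but for the pure-DP statement here, the proof is the one-line product inequality above.
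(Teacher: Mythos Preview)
Your proof is correct and is exactly the standard basic-composition argument one would expect; note that the paper itself does not supply a proof of this theorem, treating it as a known fact, so there is nothing to compare against beyond confirming your argument is valid.
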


When applied to LDP mechanisms resulting from joint composition, the clone paradigm provides particularly loose decompositions. This is due to the presence of many intermediate states in the joint probability distribution after the composition. For instance, many local randomizers have probability ratios between any two inputs that belong to the set \( \{e^{-\varepsilon_0}, 1, e^{\varepsilon_0}\} \). However, after a \( m \)-fold joint composition, the ratio of the joint probability distribution may belong to \( \{e^{i\varepsilon_0} | i \in [-m,m] \cap \mathbb{Z}\} \). Under these conditions, the common part in the clone paradigm deviates significantly from the actual privacy blanket.

Fortunately, the optimal decomposition of a joint composition mechanism is simply the Cartesian product of the optimal decompositions of each individual LDP component. 

\begin{theorem}
An optimal decomposition of \( \mathcal{R}^{\times}_{[1,m]} \) can be obtained as the Cartesian product of the primary (or simplified) optimal decompositions of each individual \( \mathcal{R}_i \).
\end{theorem}

\begin{proof}
For notational simplicity, we consider the case $m = 2$. The argument naturally generalizes to arbitrary $m$.
Unlike in other contexts, we explicitly denote the two possible inputs of the first user as $(x_1^1, x_1^2)$ and $(\bar{x}_1^1, \bar{x}_1^2)$ to emphasize the joint structure across randomizers, while the inputs of the remaining users are denoted by $(x_i^1, x_i^2),i=2,\dots,n$. Here, $x_1^1, \bar{x}_1^1, x_i^1 \in \mathbb{X}^1$ and $x_1^2, \bar{x}_1^2, x_i^2 \in \mathbb{X}^2$.

Define the Cartesian product of two probability distributions \( P, Q \) as \( (P \times Q)(a, b) = P(a) \cdot Q(b) \).
The Cartesian product of the primary optimal decompositions of $\mathcal{R}_1$ and $\mathcal{R}_2$ is given by:
\begin{align}
&\left( \sum_{y^1 \in \mathbb{Y}^1} a_{y^1} \mathbbm{1}_{y^1} \right) \times \left( \sum_{y^2 \in \mathbb{Y}^2} d_{y^2} \mathbbm{1}_{y^2} \right) \nonumber \\
&\left( \sum_{y^1 \in \mathbb{Y}^1} b_{y^1} \mathbbm{1}_{y^1} \right) \times \left( \sum_{y^2 \in \mathbb{Y}^2} e_{y^2} \mathbbm{1}_{y^2} \right) \nonumber \\
&\left( \sum_{y^1 \in \mathbb{Y}^1} c_{y^1} \mathbbm{1}_{y^1} + \beta_1 Q_i \right) \times \left( \sum_{y^2 \in \mathbb{Y}^2} f_{y^2} \mathbbm{1}_{y^2} + \beta_2 T_i \right), \label{product_od}
\end{align}
where
\begin{align}
a_{y^1} &= \mathcal{R}_1(x_1^1)(y^1), & d_{y^2} &= \mathcal{R}_2(x_1^2)(y^2)\nonumber  \\
b_{y^1} &= \mathcal{R}_1(\bar{x}_1^1)(y^1), & e_{y^2} &= \mathcal{R}_2(\bar{x}_1^2)(y^2)\nonumber  \\
c_{y^1} &= \inf_{x^1 \in \mathbb{X}^1} \mathcal{R}_1(x^1)(y^1), & f_{y^2} &= \inf_{x^2 \in \mathbb{X}^2} \mathcal{R}_2(x^2)(y^2)\label{product_od_eff}
\end{align}

By Theorem~\ref{primary_optimal_decomposition}, an optimal decomposition of $\mathcal{R}^{\times}$ takes the form:
\begin{align}
\mathcal{R}^{\times}(x_1^1, x_1^2) &= \sum_{y^1, y^2} a_{y^1, y^2}' \mathbbm{1}_{(y^1, y^2)} \nonumber \\
\mathcal{R}^{\times}(\bar{x}_1^1, \bar{x}_1^2) &= \sum_{y^1, y^2} b_{y^1, y^2}' \mathbbm{1}_{(y^1, y^2)} \nonumber \\
\forall i \in [2,n],\quad \mathcal{R}^{\times}(x_i^1, x_i^2) &= \sum_{y^1, y^2} c_{y^1, y^2}' \mathbbm{1}_{(y^1, y^2)} + \beta_3 V_i \label{joint_od}
\end{align}
where
\begin{alignat*}{4}
a_{y^1, y^2}' &=\ \mathbb{R}^{\times}(x_1^1, x_1^2)(y^1, y^2)  &\quad =\ &\ \mathcal{R}_1(x_1^1)(y^1) \cdot \mathcal{R}_2(x_1^2)(y^2)  &\quad =\ &\ a_{y^1} \cdot d_{y^2} \\
b_{y^1, y^2}' &=\ \mathbb{R}^{\times}(\bar{x}_1^1, \bar{x}_1^2)(y^1, y^2)  &\quad =\ &\ \mathcal{R}_1(\bar{x}_1^1)(y^1) \cdot \mathcal{R}_2(\bar{x}_1^2)(y^2)  &\quad =\ &\ b_{y^1} \cdot e_{y^2} \\
c_{y^1, y^2}' &=\ \inf_{x^1, x^2} \mathbb{R}^{\times}(x^1, x^2)(y^1, y^2)  &\quad =\ &\ \inf_{x^1} \mathbb{R}_1(x^1)(y^1) \cdot \inf_{x^2} \mathbb{R}_2(x^2)(y^2)  &\quad =\ &\ c_{y^1} \cdot f_{y^2}.
\end{alignat*}

In each line, the first equality is derived from the definition of the primary optimal decomposition of $\mathcal{R}^{\times}$. The second equality follows from the definition of joint composition. The third equality results from (\ref{product_od_eff}), which specifies the primary optimal decompositions of $\mathcal{R}_1$ and $\mathcal{R}_2$.

A comparison between Equations~(\ref{product_od}) and (\ref{joint_od}) reveals that the Cartesian product construction not only recovers the full primary optimal decomposition of $\mathcal{R}^{\times}$, but also exposes a more granular structure within the left-over distribution: $$V_i=\beta_1 Q_i \times \big(\sum_{y^2 \in \mathbb{Y}^2} f_{y^2} \mathbbm{1}_{y^2}\big)+\beta_2 \big(\sum_{y^1 \in \mathbb{Y}^1} c_{y^1} \mathbbm{1}_{y^1}\big)\times T_i+(\beta_1\beta_2) Q_i\times T_i.$$ For practical purposes, these components can be combined into a single left-over distribution.

Next, we show that the Cartesian product of the simplified Optimal Decompositions of $\mathcal{R}_1$ and $\mathcal{R}_2$ also forms an Optimal Decomposition of $\mathcal{R}^{\times}$.
Recall that a simplified Optimal Decomposition is formed by grouping according to the following sets:
\begin{align*}
S_{p_0, p_1}^1 &= \left\{ y \in \mathbb{Y}^1 \,\middle|\, \frac{\mathcal{R}_1(x_1^1)(y)}{\inf_x \mathcal{R}_1(x)(y)} = p_0,\ \frac{\mathcal{R}_1(\bar{x}_1^1)(y)}{\inf_x \mathcal{R}_1(x)(y)} = p_1 \right\} \\
S_{q_0, q_1}^2 &= \left\{ y \in \mathbb{Y}^2 \,\middle|\, \frac{\mathcal{R}_2(x_1^2)(y)}{\inf_x \mathcal{R}_2(x)(y)} = q_0,\ \frac{\mathcal{R}_2(\bar{x}_1^2)(y)}{\inf_x \mathcal{R}_2(x)(y)} = q_1 \right\}
\end{align*}
Each component in the product decomposition corresponds to the Cartesian product set $S_{p_0, p_1}^1 \times S_{q_0, q_1}^2$.
According to the rule of simplified Optimal Decomposition, all elements with the same ratio
$$
a_{y^1, y^2}': b_{y^1, y^2}': c_{y^1, y^2}'
$$
can be merged. For all $(y^1, y^2) \in S_{p_0, p_1}^1 \times S_{q_0, q_1}^2$, we have:
$$
a_{y^1, y^2}': b_{y^1, y^2}': c_{y^1, y^2}' = p_0 q_0 : p_1 q_1 : 1.
$$
This completes the proof.
\end{proof}

\begin{remark}
There is a minor gap in the proof: in the simplified optimal decomposition, all components of the primary optimal decomposition that share the same ratio \( a_j : b_j : c_j \) are merged into a single component, whereas the decomposition obtained via the Cartesian product yields a finer partition. 

On the one hand, the proof technique used for the simplified optimal decomposition in Theorem \ref{theorem_ogc} can be naturally extended to handle this more granular structure. More fundamentally, this discrepancy can be explained through the lens of the GPARV, as we elaborate in detail in Section~\ref{sec_gparv}.
\end{remark}

As an example, consider the joint composition of two \( k \)-RR mechanisms acting on \( [k] \times [k] \). The optimal decomposition is given below:

\begin{example}[Optimal Decomposition of Two-Joint \( k \)-RR]

\[
\begin{array}{rl}
\mathcal{R}^{\times}(a_1^0, b_1^0) 
&= \left( e^{\varepsilon_0/2} p\, \mathbbm{1}_{a_1^0} + p\, \mathbbm{1}_{a_1^1} + q\, U \right)
   \times \left( e^{\varepsilon_0/2} p\, \mathbbm{1}_{b_1^0} + p\, \mathbbm{1}_{b_1^1} + q\, V \right), \\[0.5ex]
\mathcal{R}^{\times}(a_1^1, b_1^1) 
&= \left( p\, \mathbbm{1}_{a_1^0} + e^{\varepsilon_0/2} p\, \mathbbm{1}_{a_1^1} + q\, U \right)
   \times \left( p\, \mathbbm{1}_{b_1^0} + e^{\varepsilon_0/2} p\, \mathbbm{1}_{b_1^1} + q\, V \right), \\[0.5ex]
\forall i \in [2,n]:\ \mathcal{R}^{\times}(a_i, b_i)
&= \left( p\, \mathbbm{1}_{a_1^0} + p\, \mathbbm{1}_{a_1^1} + q\, U + (1 - 2p - q)\, \mathbbm{1}_{a_i} \right) \\
&\quad \times \left( p\, \mathbbm{1}_{b_1^0} + p\, \mathbbm{1}_{b_1^1} + q\, V + (1 - 2p - q)\, \mathbbm{1}_{b_i} \right)
\end{array}
\]
where \( p = \frac{1}{e^{\varepsilon_0/2} + k - 1} \), \( q = \frac{k - 2}{e^{\varepsilon_0/2} + k - 1} \), and \( U \), \( V \) are the uniform distributions over \( [k] \setminus \{a_1^0, a_1^1\} \) and \( [k] \setminus \{b_1^0, b_1^1\} \), respectively.
\end{example}

\begin{remark}
The optimal decompositions of other jointly composed local randomizers can be computed in a similar manner. For simplicity, all examples and experiments in this paper consider the joint composition of multiple instances of the \emph{same} type of local randomizer, each with the \emph{same} privacy parameter \(\varepsilon_0\). However, the Cartesian product approach described above naturally generalizes to joint compositions involving \emph{different} types of local randomizers, each potentially with a \emph{different} value of \(\varepsilon_0\).
\end{remark}

\subsection{Optimal decompositions of the parallel composition}
The joint composition setting discussed earlier is well-suited for analyzing the joint distribution of multiple attributes associated with each user. In contrast, when the analyst is interested in the marginal distributions of individual attributes or aims to compute \emph{multiple} statistical queries (each corresponding to a different local mechanism \(\mathcal{R}_1, \dots, \mathcal{R}_m\)), a more appropriate strategy is \emph{parallel composition}. It is defined as:
\[
\mathcal{R}_{[1,m]}^{\parallel,\boldsymbol{p}}(x) = (i, \mathcal{R}_i(x)), \quad \text{with probability } p_i, \quad i = 1, 2, \dots, m,
\]
where \(\boldsymbol{p} = (p_1, \dots, p_m)\) is a probability vector such that \(\sum_{i=1}^m p_i = 1\).

\begin{theorem}
If \( \mathcal{R}_i \) satisfies \( \varepsilon \)-LDP, then \( \mathcal{R}_{[1,m]}^{\parallel,\boldsymbol{p}} \) satisfies \( \varepsilon \)-LDP.
\end{theorem}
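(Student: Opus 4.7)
The plan is to verify the LDP inequality directly from the definition of the parallel-composition mechanism. The output of $\mathcal{R}_{[1,m]}^{\parallel,\boldsymbol{p}}(x)$ is a pair $(i,y)$, where $i \in [m]$ is the sampled index and $y \in \mathbb{Y}^i$ is the value returned by the $i$-th local randomizer. The crucial structural observation is that the index $i$ is sampled from $\boldsymbol{p}$ independently of the input $x$, so conditioning on $i$ completely decouples the randomness of index selection from the randomness of the chosen local mechanism. This decoupling is what makes the argument straightforward.

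Concretely, I would fix arbitrary inputs $x, x' \in \mathbb{X}$ and an arbitrary outcome $(i,y)$, and compute
$$
\Pr\bigl[\mathcal{R}_{[1,m]}^{\parallel,\boldsymbol{p}}(x) = (i,y)\bigr] \;=\; p_i \cdot \Pr[\mathcal{R}_i(x) = y],
$$
and symmetrically for $x'$. Taking the ratio, the prefactor $p_i$ cancels (it is input-independent), leaving $\Pr[\mathcal{R}_i(x) = y] / \Pr[\mathcal{R}_i(x') = y]$, which is bounded by $e^{\epsilon}$ because $\mathcal{R}_i$ is $\epsilon$-LDP. The reverse inequality follows by the same argument with the roles of $x$ and $x'$ swapped, yielding $(\epsilon,0)$-indistinguishability of $\mathcal{R}_{[1,m]}^{\parallel,\boldsymbol{p}}(x)$ and $\mathcal{R}_{[1,m]}^{\parallel,\boldsymbol{p}}(x')$, which is exactly the definition of $\epsilon$-LDP for the composed mechanism.

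For the general (possibly continuous) case, the same identity holds with probability masses replaced by densities with respect to the product of counting measure on $[m]$ and the appropriate reference measure on each $\mathbb{Y}^i$; the factorization of the joint density as $p_i \cdot f_{\mathcal{R}_i(x)}(y)$ is unchanged. There is essentially no obstacle in this proof, since the statement is a near-tautology once one observes that the selection probabilities $p_i$ do not depend on the input and therefore cancel in the likelihood ratio. The only subtlety worth flagging is ensuring that the output space is treated as a disjoint union $\bigsqcup_{i=1}^m \{i\}\times \mathbb{Y}^i$, so that the index label is part of the outcome and probabilities on different indices are never inadvertently pooled; this is precisely why emitting $i$ alongside $\mathcal{R}_i(x)$ in the definition is what makes the cancellation rigorous.
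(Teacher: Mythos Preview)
Your proposal is correct. The paper itself states this theorem without proof, so there is nothing to compare against; your direct verification via the factorization $\Pr[\mathcal{R}_{[1,m]}^{\parallel,\boldsymbol{p}}(x)=(i,y)]=p_i\cdot\Pr[\mathcal{R}_i(x)=y]$ and cancellation of the input-independent prefactor $p_i$ is the standard and expected argument.
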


The following theorem provides the method for computing the optimal decomposition of Parallel composition.
\begin{theorem}
An optimal decomposition of $\mathcal{R}^{\parallel,\boldsymbol{p}}_{[1:m]}$ is a weighted combination of the primary (or simplified) optimal decompositions of the individual local randomizers $\mathcal{R}_i$, with weights given by $p_i$.
\end{theorem}

\begin{proof}
For clarity, we consider the case $m = 2$. The argument naturally generalizes to arbitrary $m$.

Let $\mathcal{R}_1: \mathbb{X} \to \mathbb{Y}^1$ and $\mathcal{R}_2: \mathbb{X} \to \mathbb{Y}^2$ be two local randomizers considered under parallel composition.
Suppose the optimal decompositions of them are given by the coefficients \(a_j, b_j, c_j, \beta_1\) for \(j \in [|\mathbb{Y}^1|]\), and \(d_j, e_j, f_j, \beta_2\) for \(j \in [|\mathbb{Y}^2|]\), respectively:
\begin{align*}
\mathcal{R}_1(x_1^0) &= \sum_{y^1\in \mathbb{Y}^1} a_{y^1}\mathbbm{1}_{y^1} , &\mathcal{R}_2(x_1^0) &= \sum_{y^2\in \mathbb{Y}^2} d_{y^2}\mathbbm{1}_{y^2}, \\
\mathcal{R}_1(x_1^1) &= \sum_{y^1\in \mathbb{Y}^1} b_{y^1}\mathbbm{1}_{y^1}, &\mathcal{R}_2(x_1^1) &= \sum_{y^2\in \mathbb{Y}^2} e_{y^2}\mathbbm{1}_{y^2}, \\
\forall i \in [2, n], \mathcal{R}_1(x_i) &= \sum_{y^1\in \mathbb{Y}^1}c_j \mathbbm{1}_{y^1} + \beta_1 Q_i, &\mathcal{R}_2(x_i) &= \sum_{y^2\in \mathbb{Y}^2} f_{y^2}\mathbbm{1}_{y^2} + \beta_2 T_i,
\end{align*}
where
\begin{align*}
a_{y^1}&=\mathcal{R}_1(x_1^0)(y^1),& d_{y^2}&=\mathcal{R}_2(x_1^0)(y^2),\\
b_{y^1}&=\mathcal{R}_1(x_1^1)(y^1),& e_{y^2}&=\mathcal{R}_2(x_1^0)(y^2),\\
c_{y^1}&=\inf_{x\in \mathbb{X}}\mathcal{R}_1(x)(y^1),& f_{y^2}&=\inf_{x\in \mathbb{X}}\mathcal{R}_2(x)(y^2).
\end{align*}
Then, the weighted combination of them with \(\boldsymbol{p} = (p_1, p_2)\) is:
\begin{align}
\mathcal{R}^{\parallel}(x_1^0) &= p_1 \sum_{y^1\in \mathbb{Y}^1} a_{y^1}\mathbbm{1}_{y^1} + p_2 \sum_{y^2\in \mathbb{Y}^2} d_{y^2}\mathbbm{1}_{y^2}, \nonumber \\
\mathcal{R}^{\parallel}(x_1^1) &= p_1 \sum_{y^1\in \mathbb{Y}^1} b_{y^1}\mathbbm{1}_{y^1} + p_2 \sum_{y^2\in \mathbb{Y}^2} e_{y^2}\mathbbm{1}_{y^2}, \nonumber\\
\forall i \in [2, n],\ \mathcal{R}^{\parallel}(x_i) &= p_1 \left( \sum_{y^1\in \mathbb{Y}^1}c_j \mathbbm{1}_{y^1} + \beta_1 Q_i \right)
+ p_2 \left( \sum_{y^2\in \mathbb{Y}^2} f_{y^2}\mathbbm{1}_{y^2} + \beta_2 T_i \right). \label{weighted_od}
\end{align}

By Theorem~\ref{primary_optimal_decomposition}, the primary optimal decomposition of $\mathcal{R}^{\parallel,\boldsymbol{p}}$ takes the form:
\begin{align}
\mathcal{R}^{\parallel,\boldsymbol{p}}(x_1^0) &= \sum_{(1,y^1)} a_{1,y^1}' \mathbbm{1}_{(1,y^1)} +\sum_{(2,y^2)} a_{2,y^2}' \mathbbm{1}_{(2,y^2)} \nonumber \\
\mathcal{R}^{\parallel,\boldsymbol{p}}(x_1^1) &= \sum_{(1,y^1)} b_{1,y^1}' \mathbbm{1}_{(1,y^1)} +\sum_{(2,y^2)} b_{2,y^2}' \mathbbm{1}_{(2,y^2)} \nonumber \\
\forall i \in [2,n],\mathcal{R}^{\parallel,\boldsymbol{p}}(x_i) &= \sum_{(1,y^1)} c_{1,y^1}' \mathbbm{1}_{(1,y^1)} +\sum_{(2,y^2)} c_{2,y^2}' \mathbbm{1}_{(2,y^2)} + \beta_3 V_i \label{parallel_od}
\end{align}
where
\begin{align*}
a_{1, y^1}' &= p_1\mathcal{R}_1(x_1^0)(y^1), &a_{2, y^2}' &= p_2\mathcal{R}_2(x_1^0)(y^2),\\ 
b_{1, y^1}' &= p_1\mathcal{R}_1(x_1^1)(y^1), &a_{2, y^2}' &= p_2\mathcal{R}_2(x_1^1)(y^2),\\ 
c_{1, y^1}' &= p_1\inf_{x\in \mathbb{X}}\mathcal{R}_1(x)(y^1), &c_{2, y^2}' &= p_2\inf_{x\in \mathbb{X}}\mathcal{R}_2(x)(y^2).
\end{align*}

A comparison between Equations~(\ref{weighted_od}) and (\ref{parallel_od}) reveals that the weighted combination construction not only recovers the full primary optimal decomposition of $\mathcal{R}^{\parallel,\boldsymbol{p}}$, but also exposes a more granular structure within the left-over distribution: $$V_i=p_1\beta_1 Q_i+p_2\beta_2T_i.$$ For practical purposes, these components can be combined into a single left-over distribution.

Next, consider the simplified optimal decomposition. For each $\mathcal{R}_i$ ($i=1,2$), let the decomposition group outputs into regions defined by:
$$
S_{r_0^{(i)}, r_1^{(i)}} = \left\{ y \in \mathbb{Y}^i \,\middle|\, \frac{\mathcal{R}_i(x^0)(y)}{\inf_{x} \mathcal{R}_i(x)(y)} = r_0^{(i)},\ \frac{\mathcal{R}_i(x^1)(y)}{\inf_{x} \mathcal{R}_i(x)(y)} = r_1^{(i)} \right\}.
$$

Then the parallel mechanism's output space is partitioned by the sets $S_{r_0^{(1)}, r_1^{(1)}}$ and $S_{r_0^{(2)}, r_1^{(2)}}$, and each such subset preserves the same ratio of coefficients:
$$
\text{For all } y \in S_{r_0^{(i)}, r_1^{(i)}},\quad a_{i,y'} : b_{i,y'} : c_{i,y'} = r_0^{(i)} : r_1^{(i)} : 1.
$$

Thus, the simplified decomposition of the parallel mechanism is also obtained as a mixture of simplified decompositions of the individual mechanisms. The merging step in the simplified decomposition remains valid since the ratios remain constant within each group.

This completes the proof.
\end{proof}

\begin{example}
A classical example of parallel composition is computing the mean of an $m$-dimensional numerical vector, where $\vec{x} = (x^1, x^2, \dots, x^m) \in [0,1]^m$. Define $R_i(\vec{x}) =\big(i, x^i + \mathrm{Lap}(1/\varepsilon_0)\big)$ for $i = 1, 2, \dots, m$, each satisfying $\varepsilon_0$-LDP. When the probability vector is uniform, i.e., $\boldsymbol{p} = (1/m, 1/m, \dots, 1/m)$, the mechanism $\mathcal{R}^{\parallel,\boldsymbol{p}}_{[1,m]}$ corresponds to the setting where each user randomly selects one coordinate and reports it using the Laplace mechanism.
\end{example}

\begin{example}
An important theoretical advantage of parallel composition is that it subsumes Poisson subsampling, as discussed in Section~\ref{sec:contributions}. In the experimental section, we present the results of our method in computing upper bounds on the privacy amplification of the $k$-RR mechanism under Poisson subsampling followed by shuffling.
\end{example}

\section{Efficient Algorithm for Computing the Optimal Bounds}\label{sec_our_algo}
In this section, we present an efficient algorithm to compute the optimal privacy amplification bounds within the general clone framework. Our approach builds upon the previously overlooked concept of the \textit{privacy amplification random variable} (PARV), originally proposed in the privacy blanket framework~\cite{Balle2019}. While PARV offers an exact expression for the amplification bound, the original work did not provide a method for computing it precisely, relying instead on loose analytical approximations.

To overcome this limitation, we introduce the \textit{generalized privacy amplification random variable} (GPARV), which enables precise and efficient numerical computation of the optimal bounds via FFT.

In addition to computing upper bounds, our method can also be used to compute privacy amplification \emph{lower bounds}, which serve as a reference for evaluating the tightness of the computed upper bounds. Experimental results (see Section \ref{sec_experiment}) confirm that the gap between these upper and lower bounds is typically small, demonstrating the accuracy and effectiveness of our approach.

For clarity, we directly derive the GPARV within the general clone framework before describing its relationship to the original PARV.

\subsection{Generalized Privacy amplification random variable}\label{sec_gparv}

The following theorem is a remarkable result from the privacy blanket paper \cite{Balle2019}. We restate it here in a more general form.

\begin{theorem}\label{main_theorem}
Let $A_0$, $A_1$, and $A_2$ be three probability distributions such that $A_0$ and $A_1$ are absolutely continuous with respect to $A_2$, i.e., any measurable set $S$, if $A_2(S) = 0$, then $A_0(S) = A_1(S) = 0$.

Define two distributions $P_0,P_1$ of the histograms generated by sampling:
\begin{itemize}
    \item One sample from \( A_0 \) for \( P_0 \) (or \( A_1 \) for \( P_1 \));
    \item \( n-1 \) i.i.d. samples from \( A_2 \).
\end{itemize}

Define a random variable \( G = \frac{A_0(y) - e^\varepsilon A_1(y)}{A_2(y)} \), where \( y \sim A_2 \).
Suppose \( G_1, G_2, \dots \) are i.i.d. copies of \( G \). Then, we have the following:
\[
D_{e^\varepsilon} (P_0  \parallel  P_1) =\frac{1}{n} \mathbb{E} \left[\max \left\{0, \sum_{i=1}^{n} G_i \right\} \right]= \frac{1}{n} \mathbb{E} \left[ \sum_{i=1}^{n} G_i \right]_{+}.
\]
\end{theorem}
We provide the proof in Appendix~\ref{proof_main_theorem}. It is straightforward to observe that Theorem~\ref{main_theorem} can be directly applied to compute the general clone bound $D_{e^{\varepsilon}}(P_0^{\mathsf{GC}}, P_1^{\mathsf{GC}})$. Consider the general clone induced by the primary optimal decomposition; the corresponding random variable $G$ can be expressed as follows:

\begin{definition}[PARV and GPARV]\label{definition:gparv}
Suppose $W \sim \omega$ is a $\mathbb{Y}$-valued random variable sampled from the blanket. For any $\varepsilon > 0$ and $x, x' \in \mathbb{X}$, Define two random variable:
\begin{align*}
 L^{x,x'}_\varepsilon &= \frac{\mathcal{R}(x)(W) - e^\varepsilon \mathcal{R}(x')(W)}{\omega(W)}.\\
\text{G}_{\varepsilon}^{x,x'} &= \begin{cases}
\frac{1}{\gamma} L_{\varepsilon}^{x,x'}, & \text{w.p.} \ \gamma, \\
0, & \text{w.p.} \ 1-\gamma.
\end{cases}
\end{align*}
$\omega$ and $\gamma$ are defined in Section \ref{sec_blanket}.
\end{definition}
The quantity \( L_{\varepsilon}^{x,x'} \) was introduced by Balle et al.~\cite{Balle2019} and referred to as the \emph{Privacy Amplification Random Variable (PARV)}. We refer to \( G_{\varepsilon}^{x,x'} \) as the \emph{Generalized Privacy Amplification Random Variable (GPARV)}. The following theorem follows directly from Theorem~\ref{main_theorem} and the definition of the primary optimal decomposition in Section \ref{sec:blanket_in_gc}.

\begin{theorem}\label{our_main_theorem}
Let \( \mathcal{R} : \mathbb{X} \to \mathbb{Y} \) be a local randomizer and let \( \mathcal{M}_S = \mathcal{S} \circ \mathcal{R}^n \) be the shuffling of \( \mathcal{R} \). Fix \( \varepsilon \geq 0 \) and inputs \( X^0 \simeq X^1 \) with \( x_1^0 \neq x_1^1 \). Suppose \( G_1, G_2, \dots \) are i.i.d. copies of \( G_{\varepsilon}^{x_1^0, x_1^1} \) and \( \gamma \) is defined as in Section \ref{sec_blanket}. Then, we have:
\begin{align}
D_{e^\varepsilon} (\mathcal{M}_S(X^0) \| \mathcal{M}_S(X^1)) 
\leq D_{e^\varepsilon} ( P_0^{B} \| P_1^{B}) \nonumber 
= \frac{1}{n} \mathbb{E} \left[ \sum_{i=1}^{n} G_i \right]_{+}. 
\end{align}
\end{theorem}

We now briefly revisit how the original privacy blanket paper utilizes the PARV.
Using PARV, Balle et al. derived the precise expression for $D_{e^\varepsilon} ( P_0^{B} \| P_1^{B})$:

\begin{lemma}[Lemma 5.3 in \cite{Balle2019}]\label{lemma3}
Let \( \mathcal{R} : \mathbb{X} \to \mathbb{Y} \) be a local randomizer and let \( \mathcal{M}_S = \mathcal{S} \circ \mathcal{R}^n \) be the shuffling of \( \mathcal{R} \). Fix \( \varepsilon \geq 0 \) and inputs \( X^0 \simeq X^1 \) with \( x_1^0 \neq x_1^1 \). Suppose \( L_1, L_2, \dots \) are i.i.d. copies of \( L_{\varepsilon}^{x_1^0, x_1^1} \) and \( \gamma \) is defined as in Section \ref{sec_blanket}. Then, we have:
\begin{align}
D_{e^\varepsilon} (\mathcal{M}_S(X^0) \| \mathcal{M}_S(X^1))
\leq D_{e^\varepsilon} ( P_0^{B} \| P_1^{B}) 
&= \frac{1}{\gamma n} \sum_{m=1}^{n} \binom{n}{m} \gamma^m (1 - \gamma)^{n-m} 
\mathbb{E} \left[ \sum_{i=1}^{m} L_i \right]_{+} \nonumber \\
&=\frac{1}{\gamma n} \mathbb{E}_{M \sim \text{Bin}(n, \gamma)} \left[ \sum_{i=1}^{M} L_i \right]_{+},\label{f2}
\end{align}
where \( M \sim \text{Bin}(n, \gamma) \). We use the convention \( \sum_{i=1}^{m} L_i = 0 \) when \( m = 0 \).
\end{lemma}

Unfortunately, Balle et al. stopped at this point and did not pursue further simplification. In fact, the proof of Theorem~\ref{our_main_theorem} can be obtained by algebraically manipulating the expression in Lemma~\ref{lemma3}, using the definitions of PARV and GPARV. However, a more fundamental derivation arises directly from Theorem~\ref{main_theorem} and the definition of the general clone reduction with the primary optimal decomposition. By introducing the general clone framework, we provide a simpler and more principled proof of the specific bound established by the privacy blanket.

From the perspective of GPARV, the construction of the simplified optimal decomposition in Section~\ref{sec_simplified_od} becomes quite natural. The reason why components of the primary optimal decomposition that share the same ratio \( a_j : b_j : c_j \) can be merged is that the corresponding GPARV takes the same value on each of these components, namely \( \frac{a_j - e^{\varepsilon} b_j}{c_j} \). It can be verified that the GPARV defined by the decomposition remains unchanged after such merging.

The GPARV has the following properties:
\begin{property}\label{property2}
Let \( \mathcal{R} : \mathbb{X} \to \mathbb{Y} \) be an \( \varepsilon_0 \)-LDP local randomizer. For any \( \varepsilon \geq 0 \) and \( x, x' \in \mathbb{X} \), the generalized privacy amplification random variable \( G = G_{\varepsilon}^{x, x'} \) satisfies:
\begin{enumerate}
    \item \( \mathbb{E}[G] = 1 - e^{\varepsilon} \),
    \item \( 1 - e^{\varepsilon + \varepsilon_0} \leq G \leq e^{\varepsilon_0} - e^{\varepsilon} \).
\end{enumerate}
\end{property}

\begin{proof}
The first property follows from direct computation:
$$ \mathbb{E}[G] = \gamma \mathbb{E}\left[\frac{1}{\gamma} L\right] = \mathbb{E}[L] = \mathbb{E}_{W \sim \omega}\left[\frac{\mathcal{R}(x)(W) - e^\varepsilon \mathcal{R}(x')(W)}{\omega(W)}\right] = 1 - e^{\varepsilon}. $$

The second property is due to the \( \varepsilon_0 \)-DP property of \( \mathcal{R} \): \( \forall x \in \mathbb{X}, y \in \mathbb{Y}: 1 \leq \frac{\mathcal{R}(x)(y)}{\gamma \omega(y)} \leq e^{\varepsilon_0} \), so
$$ 1 - e^{\varepsilon + \varepsilon_0} \leq \frac{\mathcal{R}(x)(W) - e^\varepsilon \mathcal{R}(x')(W)}{\gamma \omega(W)} \leq e^{\varepsilon_0} - e^\varepsilon.$$ 
\end{proof}

\begin{remark}
In the original paper of the privacy blanket framework, a similar property of PARV was provided, but in a loose form \cite{Balle2019}. Specifically, they established that \( \gamma (e^{-\varepsilon_0} - e^{\varepsilon + \varepsilon_0}) \leq L \leq \gamma (e^{\varepsilon_0} - e^{\varepsilon - \varepsilon_0}) \). However, the \( \varepsilon_0 \)-DP property of \( \mathcal{R} \) actually guarantees a tighter bound: \( \gamma (1 - e^{\varepsilon + \varepsilon_0}) \leq L \leq \gamma (e^{\varepsilon_0} - e^{\varepsilon}) \).
\end{remark}

\subsection{Algorithm for computing privacy amplification upper bounds}
We present a new algorithm for computing the optimal privacy amplification upper bound under the general clone paradigm for any specific local randomizer, as described in Algorithm~\ref{alg:fft}.

\paragraph{Overview.}
First, the distribution of the generalized privacy amplification random variable \( G_{\varepsilon}^{x,x'} \) is computed for a given local randomizer. Since many local randomizers exhibit input symmetry, the distribution of \( G \) typically does not depend on the specific values of \( x \) and \( x' \). We thus denote it simply by \( G_{\varepsilon} \). For local randomizers lacking input symmetry, additional considerations are required; we discuss this case in detail in Section~\ref{sec_limitation}.

The distributions of \( G_{\varepsilon} \) for commonly used local randomizers are summarized in Table \ref{tab2}. Since the optimal decompositions and corresponding GPARVs of BLH, RAPPOR, OUE, and HR depend on the domain size \( D \) (see Table \ref{tab_decomposition} in Appendix~A), we present their GPARVs in the asymptotic regime (i.e., for \( D \gg 1 \)) for simplicity.
The distribution of \( G_{\varepsilon} \) for Laplace mechanism on $\{0,1\}$ is presented in Appendix~\ref{appendix_laplace}. (The Laplace mechanism over the interval \([0,1]\) does not satisfy input symmetry.) 

Given the distribution of \( G_\varepsilon \), our algorithm discretizes it to obtain \( \bar{G} \), by rounding each value in \( G \) up to the nearest larger multiple of a discretization interval length \( l \):
$$\text{Round}(x)=l\cdot \lceil \frac{x}{l} \rceil.$$

Next, the algorithm computes the \( n \)-fold convolution of \( \bar{G} \), denoted \( \bar{G}^{*n} \), using the classical Fast Fourier Transform method:
\[
\bar{G}^{*n} = \text{FFT}^{-1}\left((\text{FFT}(\bar{G}))^{\odot n}\right),
\]
where \( \odot n \) represents the element-wise exponentiation by \( n \).

Finally, the algorithm evaluates the integral
\[
I = \mathbb{E}[\bar{G}^{*n}]_{+} = \int_0^{+\infty} x \, \bar{G}^{*n}(x) \, dx,
\]
and outputs \( \frac{I}{n} \) as the upper bound on privacy amplification.

\paragraph{Correctness.}
The output of the algorithm is guaranteed to upper bound the true value, since the discretized distribution \( \bar{G} \) stochastically dominates \( G \):
\[
\frac{1}{n} \mathbb{E}[\bar{G}^{*n}]_{+} \geq \frac{1}{n} \mathbb{E}[G^{*n}]_{+}.
\]

\paragraph{Time Complexity and Error Analysis.}
Because \( G \) is supported on the interval \( [1 - e^{\varepsilon_0+\varepsilon}, e^{\varepsilon_0} - e^{\varepsilon}] \) (Property \ref{property2}), the discretization step requires \( O\left( \frac{e^{\varepsilon_0+\varepsilon}}{l} \right) \) operations. For most frequency oracles used on categorical data, \( G \) takes values on at most five points (see Table~\ref{tab2}), resulting in \( O(1) \) discretization time.

The discretization introduces a bounded error. An intuitive analysis is as follows:
\[
\mathbb{E}[\bar{G}^{*n}]_{+} - nl \leq \mathbb{E}[G^{*n}]_{+} \leq \mathbb{E}[\bar{G}^{*n}]_{+}.
\]
The FFT computation runs in \( O\left( \frac{n}{l} \log\left( \frac{n}{l} \right) \right)=\widetilde{O}(\frac{n}{l}) \) time. Choosing \( l = O\left( \frac{1}{n} \right) \) ensures an \( O(1) \) additive error in total \( \widetilde{O}(n^2) \) time.
More precise analysis can be done:
\[
\mathbb{E}[\bar{G}^{*n}]_{+} - nl \cdot \Pr\left[ [G^{*n}]_{+} > -nl \right] \leq \mathbb{E}[G^{*n}]_{+} \leq \mathbb{E}[\bar{G}^{*n}]_{+}.
\]
By Hoeffding’s inequality,
\[
\Pr\left[ [G^{*n}]_{+} > -nl \right] \leq \exp\left(-\left( \frac{2a^2}{b^2} - l \right) n \right),
\]
where \( a = -\mathbb{E}[G] = e^\varepsilon - 1 \) and \( b = (e^{\varepsilon_0} - 1)(e^\varepsilon + 1) \). This shows that the error decays exponentially in \( n \) as long as \( l \leq \frac{2a^2}{b^2} \). 

The above analysis concerns the error introduced when using the FFT-based algorithm to compute \( D_{e^\varepsilon}\big(P_0^{B}, P_1^{B} \big) \). From a theoretical perspective, since \( \delta \) is typically very small, achieving high-precision computation (i.e., with error \( o(\delta) \)) would require a very small discretization step size \( l \), which incurs a high computational cost.

However, our experimental results show that when discretization is used to infer the smallest value of \( \varepsilon \) such that \( D_{e^\varepsilon}\big(P_0^{B}, P_1^{B} \big) \le \delta \), the resulting error in \( \varepsilon \) is negligible in practice. This is because the distribution of the GPARV is highly sensitive to changes in \( \varepsilon \). In particular, once \( \varepsilon \) exceeds the true threshold \( \varepsilon^* \) (defined by \( D_{e^{\varepsilon^*}}\big(P_0^{B}, P_1^{B} \big) = \delta \)), the function \( \delta(\varepsilon) \) decreases rapidly. This rapid decay ensures that even a relatively coarse discretization yields an accurate estimate of \( \varepsilon \).
Empirical evaluations confirm that \( l = O(1) \) is sufficiently accurate in practice (see Section \ref{sec_experiment}), and the overall FFT runtime becomes \( O(n \log n) \).

\begin{algorithm}[t]
\caption{Calculate the optimal privacy amplification bound via FFT }
\label{alg:fft}
\begin{algorithmic}[0]
\Require the distritbuion density $G$ of $G_\varepsilon$, number of users $n$, discretisation interval length $l$
\State $\bar{G}=$ Discretize($G$,$l$)\quad \Comment{Round every point of $G$ to $nl,n\in \mathbb{Z}$}
\State $\bar{G}^{*}= \text{FFT}^{-1}((\text{FFT}(\bar{G}))^{\odot n})$ \quad \Comment{Compute the $n$-fold convolution}
\State $I \gets \int_0^{+\infty} x\bar{G}^{*}(x)dx$ \Comment{Compute the integral }
\State \Return $\frac{I}{n}$
\end{algorithmic}
\end{algorithm}

\begin{table*}[t]
    \centering
    \caption{The probability distribution of GPARV for common local randomizers} \label{tab2}
    \small{(The GPARVs for BLH, RAPPOR, OUE and HR are asymptotic as $|\mathbb{X}|\to \infty$.)}
    \begin{tabular}{cccccc}
    \toprule
         & $1-e^{\varepsilon_0+\varepsilon}$ & $e^{\varepsilon_0}-e^{\varepsilon_0+\varepsilon}$ & $1-e^{\varepsilon}$ & 0 & $e^{\varepsilon_0}-e^{\varepsilon}$\\
         \midrule
    $k$-RR \cite{11}     & $\frac{1}{e^{\varepsilon_0}+k-1}$ & 0 & $\frac{k-2}{e^{\varepsilon_0}+k-1}$ & $\frac{e^{\varepsilon_0}-1}{e^{\varepsilon_0}+k-1}$ & $\frac{1}{e^{\varepsilon_0}+k-1}$ \\
    BLH \cite{3} & $\frac{1}{2(e^{\varepsilon_0}+1)}$ & $\frac{1}{2(e^{\varepsilon_0}+1)}$ & $\frac{1}{2(e^{\varepsilon_0}+1)}$ & $\frac{e^{\varepsilon_0}-1}{e^{\varepsilon_0}+1}$ & $\frac{1}{2(e^{\varepsilon_0}+1)}$\\
    RAPPOR \cite{6} & $\frac{1}{(e^{\varepsilon_0/2}+1)^2}$ & $\frac{1}{e^{\varepsilon_0/2}(e^{\varepsilon_0/2}+1)^2}$ & $\frac{e^{\varepsilon_0/2}}{(e^{\varepsilon_0/2}+1)^2}$ & $1-e^{-\varepsilon_0/2}$ & $\frac{1}{(e^{\varepsilon_0/2}+1)^2}$ \\
    OUE \cite{3}   & $\frac{1}{2(e^{\varepsilon_0}+1)}$ & $\frac{1}{2e^{\varepsilon_0}(e^{\varepsilon_0}+1)}$ & $\frac{e^{\varepsilon_0}}{2(e^{\varepsilon_0}+1)}$ & $\frac{1}{2}(1-e^{-\varepsilon_0})$ & $\frac{1}{2(e^{\varepsilon_0}+1)}$ \\
    HR \cite{HR}    & $\frac{1}{2(e^{\varepsilon_0}+1)}$ & $\frac{1}{2(e^{\varepsilon_0}+1)}$ & $\frac{1}{2(e^{\varepsilon_0}+1)}$ & $\frac{e^{\varepsilon_0}-1}{e^{\varepsilon_0}+1}$ & $\frac{1}{2(e^{\varepsilon_0}+1)}$\\
    \bottomrule
    \end{tabular}
\end{table*}

\paragraph{Comparison with Existing Numerical Methods.}
The numerical computation of privacy amplification in the shuffle model has been studied since the introduction of the clone paradigm by Feldman et al.~\cite{Feldman2021}. Prior numerical algorithms can only handle three-point decompositions, such as computing \( D_{e^\varepsilon}(P_0^{C} \| P_1^{C}) \)~\cite{Feldman2021,tv_shuffle}. However, these techniques do not generalize to decompositions involving more than three points, which are essential for obtaining tight bounds from optimal decompositions.

In contrast, our FFT-based algorithm supports the optimal decomposition of \emph{any} local randomizer, enabling tighter and more accurate privacy amplification bounds. Furthermore, our method is not only more general but also simpler and faster than existing numerical algorithms (see Section~\ref{sec_experiment}).

\subsection{Algorithm for computing privacy amplification lower bounds}\label{sec_lowerbound}
An upper bound refers to the existence of a value \( \delta_u \) such that, for a given local randomizer with specified \( \varepsilon_0, \varepsilon, n \), and for any two neighboring input datasets \( X^0 \) and \( X^1 \), we have:
\[
D_{e^\varepsilon}\big( \mathcal{M}_S(X^0), \mathcal{M}_S(X^1) \big) \leq \delta_u.
\]
For a given local randomizer with specified \( \varepsilon_0, \varepsilon, n \), we can construct two neighboring datasets \( X^0 \) and \( X^1 \), and compute:
\[
\delta_l \le D_{e^\varepsilon}\big( \mathcal{M}_S(X^0), \mathcal{M}_S(X^1) \big),
\]
which serves as the lower bound for this amplification.

In previous studies, a common strategy for selecting neighboring datasets \( X^0 \) and \( X^1 \) is to set \( x_2 = x_3 = \dots = x_n \), such that \( x_1^0 \), \( x_1^1 \), and \( x_2 \) are mutually distinct~\cite{feldman2023soda,tv_shuffle,Ghazi20}. 
In the special case where \( |\mathbb{X}| = 2 \), the datasets \( X^0 \) and \( X^1 \) can be chosen as
\[
x_1^0 \ne x_1^1, \quad \text{and} \quad x_2 = x_3 = \dots = x_n = x_1^1.
\]
For the Laplace mechanism on \( \mathbb{X} = \{0,1\} \), we choose
\[
x_1^0 = 0, \quad x_1^1 = 1, \quad \text{and} \quad x_2 = x_3 = \dots = x_n = 1.
\]

The following theorem provides an efficient method for computing the hockey-stick divergence
\(
D_{e^\varepsilon}\big( \mathcal{M}_S(X^0), \mathcal{M}_S(X^1) \big)
\)
under this setting. It appeared in \cite{Ghazi20}.

\begin{theorem}\label{theorem_lowerbound}
Let \( \mathcal{R} : \mathbb{X} \to \mathbb{Y} \) be a local randomizer, and let \( \mathcal{M}_S = \mathcal{S} \circ \mathcal{R}^n \) be the shuffling of \( \mathcal{R} \). Fix \( \varepsilon \geq 0 \) and inputs \( X^0 \simeq X^1 \) with \( x_1^0 \neq x_1^1 \) and \( x_2 = x_3 = \dots = x_n \).
Define a random variable \( G = \frac{\mathcal{R}(x_1^0)(y) - e^\varepsilon \mathcal{R}(x_1^1)(y)}{\mathcal{R}(x_2)(y)} \), where \( y \sim \mathcal{R}(x_2) \).
Suppose \( G_1, G_2, \dots \) are i.i.d. copies of \( G \). Then, we have the following:
\[
D_{e^\varepsilon} (\mathcal{M}_S(X^0) \| \mathcal{M}_S(X^1)) = \frac{1}{n} \mathbb{E} \left[ \sum_{i=1}^{n} G_i \right]_{+}.
\]
\end{theorem}

\begin{proof}
It is a direct result from Theorem \ref{main_theorem} by setting $A_0=\mathcal{R}(x_1^0),A_1=\mathcal{R}(x_1^1)$ and $A_2=\mathcal{R}(x_2)$.
\end{proof}
Therefore, we can compute the lower bound using an algorithm similar to Algorithm~\ref{alg:fft}, with the only difference being that rounding up is replaced by rounding down. This modification ensures the correctness of the resulting bound. For convenience, we refer to the random variable \( G \) in Theorem~14 as the \emph{lower-bound GPARV}.

Theorem~\ref{theorem_lowerbound} also provides a method for computing the lower bound in the cases of joint composition and parallel composition. The procedure is analogous to that used for computing the corresponding upper bound: one first computes the decomposition of each sub-local randomizer at inputs \( x_1^0, x_1^1, x_2 \). Since \( x_2 = x_3 = \cdots = x_n \), the resulting decomposition does not involve any left-over distribution. Each component of the decomposition is defined over the set \( \{y \mid \mathcal{R}_i(x_1^0) : \mathcal{R}_i(x_1^1) : \mathcal{R}_i(x_2) = a : b : c \} \). 

By applying either the Cartesian product (for joint composition) or a weighted combination (for parallel composition) of these components, one obtains the decomposition corresponding to the composed local randomizer. This decomposition can then be used to compute the lower-bound GPARV.

It is important to note that, for joint composition, only the decompositions of the sub-local randomizers satisfy the compositional structure—this does not necessarily extend to their lower-bound GPARV. In contrast, for parallel composition, the lower-bound GPARV can be obtained as a weighted mixture of the individual lower-bound GPARVs for the sub-local randomizers.

\paragraph{Comparison with existing lower bound methods:} While \cite{Ghazi20} provided Theorem \ref{theorem_lowerbound}, they explicitly note that their computational method is applicable only when the induced random variable $G$ has support of size three.
For example, consider $k$-RR ($k\ge 4$) with datasets $X = (1,3,\dots,3)$ and $X' = (2,3,\dots,3)$. The random variable
$$
\frac{\mathcal{R}(x_1^0)(y) - e^{\varepsilon} \mathcal{R}(x_1^1)(y)}{\mathcal{R}(x_2)(y)},\quad y\sim \mathcal{R}(x_2)
$$
has support over four values $\{e^{\varepsilon_0}-e^\varepsilon,1-e^{\varepsilon_0+\varepsilon},1-e^{\varepsilon},\frac{1-e^\varepsilon}{e^{\varepsilon_0}} \}$. To avoid this, \cite{Ghazi20} assumes inputs of the form $X = (1,C,\dots,C)$ and $X' = (2,C,\dots,C)$, with $C$ uniformly distributed over $[k]$. This ensures the above expression has support over only three values: $\{e^{\varepsilon_0}-e^\varepsilon,\ 1-e^{\varepsilon_0+\varepsilon},\ 1-e^{\varepsilon} \}$.
Our algorithm, by contrast, imposes no such restrictions.

\section{Numerical Experiments}\label{sec_experiment}
In this section, we evaluate the performance of our FFT-based numerical algorithm for computing optimal privacy amplification bounds under the general clone framework.

\paragraph{Comparison with existing methods.}
We compare our computed bounds for several common local randomizers against existing bounds from prior work. The baselines include two bounds derived from the privacy blanket framework using Hoeffding's and Bennett's inequalities, respectively~\cite{Balle2019}, as well as numerical bounds from the standard clone paradigm~\cite{Feldman2021}. We use the publicly available implementations released by the respective authors. 

\paragraph{Experimental setup.}
We present results for five widely used local randomizers: \(k\)-ary randomized response (with \(k = 10\))~\cite{11}, Binary Local Hash (BLH)~\cite{3}, RAPPOR~\cite{6}, Optimized Unary Encoding (OUE)~\cite{3}, and the Laplace mechanism over \(\{0,1\}\)~\cite{Dwork2006}. For simplicity, we use the asymptotic forms of GPARV for BLH, RAPPOR, and OUE in the regime where the domain size \(D \gg 1\). As an illustrative example, we also report how the upper bound for RAPPOR changes with different values of \(D\). As shown in Fig.~\ref{varying_D}, when \(D \ge 5\), the upper bound curves become nearly indistinguishable. Furthermore, for \(D \ge 3\), where \(x_2 \in \mathbb{X} \setminus \{x_1^0, x_1^1\}\) can be selected, the lower-bound GPARV for the above local randomizers becomes independent of \(D\).

\paragraph{Results and observations.}
The main results are presented in Fig.~\ref{fig_exp1} and Fig.~\ref{fig_exp2}. Lower bounds are computed as described in Section~\ref{sec_lowerbound}. In the legends, ``\(k\)-joint'' denotes the joint composition of \(k\) local randomizers, each satisfying \(\frac{\varepsilon_0}{k}\)-LDP. The discretization interval \(l\) used in our FFT-based algorithm is set to \(\frac{e^{\varepsilon_0} - 1}{1200}\) when \(\varepsilon_0 = 0.1\), and \(\frac{e^{\varepsilon_0} - 1}{1000}\) when \(\varepsilon_0 = 4.0\).

Our results show that the computed upper bounds consistently outperform all baselines. Moreover, the gap between our upper and lower bounds is generally small, confirming the tightness and reliability of our analysis. These results also validate that the chosen discretization step \(l\) provides sufficient numerical precision in practice.
In addition to accuracy, our algorithm is highly efficient: a full amplification curve is generated in approximately 30 seconds, compared to roughly 5 minutes required by the standard clone’s numerical method.

\paragraph{Impact of joint composition.}
Our findings further highlight the advantage of computing specific bounds for joint compositions. The resulting amplification is significantly tighter than those derived from generic methods, illustrating the strength of our approach in multi-attribute settings. Moreover, under a fixed total privacy budget \(\varepsilon_0\), we observe that the amplification effect improves as the number of composed randomizers \(k\) increases.

\paragraph{Impact of parallel composition.}
We evaluate the parallel composition of 10-ary randomized response (10-RR) and BLH, where each mechanism is selected with equal probability \( \boldsymbol{p} = [0.5, 0.5] \). As shown in Fig.~\ref{fig_exp3}, the amplification curve of the parallel composition lies between those of the individual mechanisms, aligning well with theoretical expectations.

\paragraph{Poisson subsampling in the shuffle model.}
As an illustrative case, Fig.~\ref{fig:poisson} presents the upper bound computed by our method when applied to Poisson subsampling in the shuffle model, instantiated with 10-RR and analyzed via parallel composition. Notably, the corresponding lower bound closely matches the upper bound, demonstrating the tightness of our analysis in this setting as well.

\paragraph{Practical Application.}  
We validate the improvement brought by our bound on estimation error using the frequency estimation task as an example. Taking 10-RR with \( n = 1000 \) and \( \delta = 10^{-6} \), we compute, for a target privacy level \( \varepsilon \), the value of \( \varepsilon_0 \) required by our bound and by the previously best-known bound (Blanket with Bennett).  
In the simulation, user data follow a Zipf distribution with parameter \( 0.7 \). Repeating the experiment multiple times, we measure the \( \ell_2 \)-error between the estimated and true frequency distributions\footnote{For illustration, we adopt the classical unbiased estimation without clipping to $[0,1]$.}. The results are summarized in Table \ref{table20}.  
As shown, the \( \varepsilon_0 \) suggested by our bound is roughly 10\% larger than that given by prior work, leading to a substantial reduction in estimation error.

\section{Discussion: Beyond the General Clone}\label{sec_discussion}

In this work, we develop an efficient algorithm to compute the best-known privacy amplification bounds within the \emph{general clone} framework, which encompasses all possible decompositions. A natural and important question arises: \emph{Can we achieve tighter bounds than those provided by the general clone?} Addressing this question requires stepping beyond decomposition methods.

A promising direction is to identify the \emph{most vulnerable neighboring dataset pair} \( (X^0, X^1) \) such that 
\(
D_{e^\varepsilon}\left(\mathcal{M}_S(X^0)\parallel \mathcal{M}_S(X^1)\right)
\)
is maximized among all neighboring pairs of size \( n \). If, for a local randomizer \( \mathcal{R} \), one can prove that a specific pair \( (X^0_v, X^1_v) \) consistently maximizes the divergence for every \( \varepsilon \), then this would yield the exact privacy amplification bound for \( \mathcal{R} \) under shuffling.

For many local randomizers, a plausible candidate for the most vulnerable dataset pair is
\[
X^0 = (x_1^0, x_2, x_2, \dots, x_2), \quad X^1 = (x_1^1, x_2, x_2, \dots, x_2),
\]
where \( x_1^0, x_1^1, x_2 \in \mathbb{X} \) are mutually distinct. This construction is also used in Section~\ref{sec_lowerbound} for computing lower bounds of privacy amplification.

This conjecture is motivated by two observations. First, to maximize distinguishability, the set of inputs \( \{x_i \mid i = 2, 3, \dots, n\} \) should exclude both \( x_1^0 \) and \( x_1^1 \), ensuring that the outputs are not easily confounded. Second, having unified inputs among the remaining users simplifies the inference of their output contributions, thereby potentially increasing the overall distinguishability between the shuffled outputs of \( X^0 \) and \( X^1 \).

Despite its intuitive appeal, this conjecture currently lacks a formal proof. Developing tools to rigorously establish the most vulnerable neighboring pair remains an open problem and a valuable direction for future research.

\section{Limitation} \label{sec_limitation}
When a local randomizer satisfies \textit{input symmetry}, it is sufficient to evaluate \( G_\varepsilon^{x,x'} \) for any fixed neighboring pair \( (x, x') \); the resulting privacy bound then applies to all neighboring datasets. However, in the absence of input symmetry, one must, in principle, evaluate \( G_\varepsilon^{x,x'} \) for all possible input pairs and take the \textit{maximum} to ensure worst-case validity.

For the Laplace mechanism over \([0,1]\), we currently do not know how to prove or disprove whether the pair \( (x = 0, x' = 1) \) constitutes the worst case under the blanket decomposition. We leave this question as future work.

In the case of joint composition, a subtle issue arises regarding input symmetry. Even if each sub-local randomizer individually satisfies input symmetry, their joint composition introduces additional complications. Specifically, consider two neighboring inputs \( (x^1, x^2, \dots, x^m) \) and \( (\bar{x}^1, \bar{x}^2, \dots, \bar{x}^m) \) such that \( x^i \neq \bar{x}^i \) for all \( i \). In this case, the pair is symmetric, and it suffices to evaluate the GPARV on a single such instance.

However, when the adjacency relation is defined over the full product domain \( \mathbb{X}^1 \times \mathbb{X}^2 \times \dots \times \mathbb{X}^m \), one must also consider neighboring pairs that differ in only a subset of coordinates. If all \( \mathcal{R}_i \) are identical, we need to separately consider adjacency relations with Hamming distances \( 1, 2, \dots, m \).
For example, in the joint composition of three \( \varepsilon_0/4 \)-DP \( k \)-RR mechanisms, we must evaluate the GPARV for the following input pairs:
\[
\big( (0,0,0,0), (1,1,1,1) \big),\quad \big( (0,0,0,0), (0,1,1,1) \big),\quad \big( (0,0,0,0), (0,0,1,1) \big),\quad \big( (0,0,0,0), (0,0,0,1) \big).
\]
Among them, the pair \( \big( (0,0,0,0), (1,1,1,1) \big) \) is theoretically expected to yield the GPARV with the highest variance, and our experimental results confirm that it produces the largest upper bound (see Fig.~\ref{fig_joint_varying_d}). Whether this observation can be formally proven remains an open question for future work.

\section{Conclusion} \label{sec_conclusion}

In this work, we propose the general clone framework, which encompasses all decomposition
methods, and identify the optimal bounds within the general clone framework.
We also present an efficient algorithm for numerically computing these optimal bounds. With these results, we achieve the best-known bounds. Experiments demonstrate the tightness of our analysis. Additionally, we present methods for computing optimal amplification bounds for both joint composition and parallel composition in the shuffle model.
We hope that this work contributes to both the practical deployment and the theoretical advancement of the shuffle model in differential privacy.

\bibliographystyle{alpha}
\bibliography{shuffle}

@inproceedings{Dwork2006,
author = {Dwork, Cynthia},
title = {Differential privacy},
year = {2006},
isbn = {3540359079},
publisher = {Springer-Verlag},
address = {Berlin, Heidelberg},
url = {https://doi.org/10.1007/11787006\_1},
doi = {10.1007/11787006_1},
booktitle = {Proceedings of the 33rd International Conference on Automata, Languages and Programming - Volume Part II},
pages = {1–12},
numpages = {12},
location = {Venice, Italy},
series = {ICALP'06}
}

@InProceedings{2,
author="Hsu, Justin
and Khanna, Sanjeev
and Roth, Aaron",
editor="Czumaj, Artur
and Mehlhorn, Kurt
and Pitts, Andrew
and Wattenhofer, Roger",
title="Distributed Private Heavy Hitters",
booktitle="Automata, Languages, and Programming",
year="2012",
publisher="Springer Berlin Heidelberg",
address="Berlin, Heidelberg",
pages="461--472",
isbn="978-3-642-31594-7"
}

@inproceedings {3,
author = {Tianhao Wang and Jeremiah Blocki and Ninghui Li and Somesh Jha},
title = {Locally Differentially Private Protocols for Frequency Estimation},
booktitle = {26th USENIX Security Symposium (USENIX Security 17)},
year = {2017},
isbn = {978-1-931971-40-9},
address = {Vancouver, BC},
pages = {729--745},
url = {https://www.usenix.org/conference/usenixsecurity17/technical-sessions/presentation/wang-tianhao},
publisher = {USENIX Association},
month = aug
}

@inproceedings{4,
author = {Bassily, Raef and Nissim, Kobbi and Stemmer, Uri and Thakurta, Abhradeep},
title = {Practical locally private heavy hitters},
year = {2017},
isbn = {9781510860964},
publisher = {Curran Associates Inc.},
address = {Red Hook, NY, USA},
booktitle = {Proceedings of the 31st International Conference on Neural Information Processing Systems},
pages = {2285–2293},
numpages = {9},
location = {Long Beach, California, USA},
series = {NIPS'17}
}

@inproceedings{5,
author = {Bassily, Raef and Smith, Adam},
title = {Local, Private, Efficient Protocols for Succinct Histograms},
year = {2015},
isbn = {9781450335362},
publisher = {Association for Computing Machinery},
address = {New York, NY, USA},
url = {https://doi.org/10.1145/2746539.2746632},
doi = {10.1145/2746539.2746632},
booktitle = {Proceedings of the Forty-Seventh Annual ACM Symposium on Theory of Computing},
pages = {127–135},
numpages = {9},
keywords = {succinct histograms, local protocols, heavy hitters, differential privacy, complexity, algorithms},
location = {Portland, Oregon, USA},
series = {STOC '15}
}

@inproceedings{6,
author = {Erlingsson, \'{U}lfar and Pihur, Vasyl and Korolova, Aleksandra},
title = {RAPPOR: Randomized Aggregatable Privacy-Preserving Ordinal Response},
year = {2014},
isbn = {9781450329576},
publisher = {Association for Computing Machinery},
address = {New York, NY, USA},
url = {https://doi.org/10.1145/2660267.2660348},
doi = {10.1145/2660267.2660348},
booktitle = {Proceedings of the 2014 ACM SIGSAC Conference on Computer and Communications Security},
pages = {1054–1067},
numpages = {14},
keywords = {statistical inference, privacy protection, population statistics, crowdsourcing, cloud computing},
location = {Scottsdale, Arizona, USA},
series = {CCS '14}
}

@inproceedings{7,
author = {Cormode, Graham and Jha, Somesh and Kulkarni, Tejas and Li, Ninghui and Srivastava, Divesh and Wang, Tianhao},
title = {Privacy at Scale: Local Differential Privacy in Practice},
year = {2018},
isbn = {9781450347037},
publisher = {Association for Computing Machinery},
address = {New York, NY, USA},
url = {https://doi.org/10.1145/3183713.3197390},
doi = {10.1145/3183713.3197390},
booktitle = {Proceedings of the 2018 International Conference on Management of Data},
pages = {1655–1658},
numpages = {4},
keywords = {privacy, local differential privacy, differential privacy, data collection},
location = {Houston, TX, USA},
series = {SIGMOD '18}
}

@inproceedings{8,
author = {Ding, Bolin and Kulkarni, Janardhan and Yekhanin, Sergey},
title = {Collecting telemetry data privately},
year = {2017},
isbn = {9781510860964},
publisher = {Curran Associates Inc.},
address = {Red Hook, NY, USA},
booktitle = {Proceedings of the 31st International Conference on Neural Information Processing Systems},
pages = {3574–3583},
numpages = {10},
location = {Long Beach, California, USA},
series = {NIPS'17}
}

@article{10,
title = {Local differential privacy and its applications: A comprehensive survey},
journal = {Computer Standards \& Interfaces},
volume = {89},
pages = {103827},
year = {2024},
issn = {0920-5489},
doi = {https://doi.org/10.1016/j.csi.2023.103827},
url = {https://www.sciencedirect.com/science/article/pii/S0920548923001083},
author = {Mengmeng Yang and Taolin Guo and Tianqing Zhu and Ivan Tjuawinata and Jun Zhao and Kwok-Yan Lam}
}

@article{11,
 ISSN = {01621459, 1537274X},
 URL = {http://www.jstor.org/stable/2283137},
 author = {Stanley L. Warner},
 journal = {Journal of the American Statistical Association},
 number = {309},
 pages = {63--69},
 publisher = {[American Statistical Association, Taylor \& Francis, Ltd.]},
 title = {Randomized Response: A Survey Technique for Eliminating Evasive Answer Bias},
 urldate = {2024-11-27},
 volume = {60},
 year = {1965}
}

@INPROCEEDINGS{Feldman2021,
  author={Feldman, Vitaly and McMillan, Audra and Talwar, Kunal},
  booktitle={2021 IEEE 62nd Annual Symposium on Foundations of Computer Science (FOCS)}, 
  title={Hiding Among the Clones: A Simple and Nearly Optimal Analysis of Privacy Amplification by Shuffling}, 
  year={2022},
  volume={},
  number={},
  pages={954-964},
  keywords={Computer science;Privacy;Differential privacy;Stochastic processes;Machine learning;Approximation algorithms;Frequency estimation;differential privacy;privacy-preserving machine learning},
  doi={10.1109/FOCS52979.2021.00096}}

@inproceedings{Balle2019,
author = {Balle, Borja and Bell, James and Gasc\'{o}n, Adri\`{a} and Nissim, Kobbi},
title = {The Privacy Blanket of the Shuffle Model},
year = {2019},
isbn = {978-3-030-26950-0},
publisher = {Springer-Verlag},
address = {Berlin, Heidelberg},
url = {https://doi.org/10.1007/978-3-030-26951-7\_22},
doi = {10.1007/978-3-030-26951-7_22},
booktitle = {Advances in Cryptology – CRYPTO 2019: 39th Annual International Cryptology Conference, Santa Barbara, CA, USA, August 18–22, 2019, Proceedings, Part II},
pages = {638–667},
numpages = {30},
keywords = {Differential privacy, Privacy amplification, Secure shuffling},
location = {Santa Barbara, CA, USA}
}

@misc{feldman2023arxiv,
      title={Stronger Privacy Amplification by Shuffling for R\'enyi and Approximate Differential Privacy}, 
      author={Vitaly Feldman and Audra McMillan and Kunal Talwar},
      year={2023},
      eprint={2208.04591},
      archivePrefix={arXiv},
      primaryClass={cs.CR},
      url={https://arxiv.org/abs/2208.04591}, 
    howpublished={https://arxiv.org/abs/2208.04591}
}

@article{tv_shuffle,
author = {Wang, Shaowei and Peng, Yun and Li, Jin and Wen, Zikai and Li, Zhipeng and Yu, Shiyu and Wang, Di and Yang, Wei},
title = {Privacy Amplification via Shuffling: Unified, Simplified, and Tightened},
year = {2024},
issue_date = {April 2024},
publisher = {VLDB Endowment},
volume = {17},
number = {8},
issn = {2150-8097},
url = {https://doi.org/10.14778/3659437.3659444},
doi = {10.14778/3659437.3659444},
journal = {Proc. VLDB Endow.},
month = apr,
pages = {1870–1883},
numpages = {14}
}

@inproceedings{feldman2023soda,
author = {Vitaly Feldman and Audra McMillan and Kunal Talwar},
title = {Stronger Privacy Amplification by Shuffling for Renyi and Approximate Differential Privacy},
booktitle = {Proceedings of the 2023 Annual ACM-SIAM Symposium on Discrete Algorithms (SODA)},
chapter = {},
year={2023},
pages = {4966-4981},
doi = {10.1137/1.9781611977554.ch181},
URL = {https://epubs.siam.org/doi/abs/10.1137/1.9781611977554.ch181},
eprint = {https://epubs.siam.org/doi/pdf/10.1137/1.9781611977554.ch181},
}

@article{
koskela2023numerical,
title={Numerical Accounting in the Shuffle Model of Differential Privacy},
author={Antti Koskela and Mikko A. Heikkil{\"a} and Antti Honkela},
journal={Transactions on Machine Learning Research},
issn={2835-8856},
year={2023},
url={https://openreview.net/forum?id=11osftjEbF}
}

@inproceedings{ChenCG24,
  author       = {E. Chen and
                  Yang Cao and
                  Yifei Ge},
  editor       = {Michael J. Wooldridge and
                  Jennifer G. Dy and
                  Sriraam Natarajan},
  title        = {A Generalized Shuffle Framework for Privacy Amplification: Strengthening
                  Privacy Guarantees and Enhancing Utility},
  booktitle    = {Thirty-Eighth {AAAI} Conference on Artificial Intelligence, {AAAI}
                  2024, Thirty-Sixth Conference on Innovative Applications of Artificial
                  Intelligence, {IAAI} 2024, Fourteenth Symposium on Educational Advances
                  in Artificial Intelligence, {EAAI} 2014, February 20-27, 2024, Vancouver,
                  Canada},
  pages        = {11267--11275},
  publisher    = {{AAAI} Press},
  year         = {2024},
  url          = {https://doi.org/10.1609/aaai.v38i10.29005},
  doi          = {10.1609/AAAI.V38I10.29005},
  timestamp    = {Tue, 02 Apr 2024 16:32:08 +0200},
  biburl       = {https://dblp.org/rec/conf/aaai/ChenCG24.bib},
  bibsource    = {dblp computer science bibliography, https://dblp.org}
}

@InProceedings{wang1,
author="Yang, Ruilin
and Yang, Hui
and Fan, Jiluan
and Dong, Changyu
and Pang, Yan
and Wong, Duncan S.
and Wang, Shaowei",
editor="Vaidya, Jaideep
and Gabbouj, Moncef
and Li, Jin",
title="Personalized Differential Privacy in the Shuffle Model",
booktitle="Artificial Intelligence Security and Privacy",
year="2024",
publisher="Springer Nature Singapore",
address="Singapore",
pages="468--482",
isbn="978-981-99-9785-5"
}

@Article{wang2,
AUTHOR = {Wang, Shaowei and Zeng, Sufen and Li, Jin and Huang, Shaozheng and Chen, Yuyang},
TITLE = {Shuffle Model of Differential Privacy: Numerical Composition for Federated Learning},
JOURNAL = {Applied Sciences},
VOLUME = {15},
YEAR = {2025},
NUMBER = {3},
ARTICLE-NUMBER = {1595},
URL = {https://www.mdpi.com/2076-3417/15/3/1595},
ISSN = {2076-3417},
DOI = {10.3390/app15031595}
}

@misc{kikuchi2022castellscalablejointprobability,
      title={Castell: Scalable Joint Probability Estimation of Multi-dimensional Data Randomized with Local Differential Privacy}, 
      author={Hiroaki Kikuchi},
      year={2022},
      eprint={2212.01627},
      archivePrefix={arXiv},
      primaryClass={cs.CR},
      url={https://arxiv.org/abs/2212.01627}, 
howpublished={https://arxiv.org/abs/2212.01627}
}

@INPROCEEDINGS{50,
  author={Domingo-Ferrer, Josep and Soria-Comas, Jordi},
  booktitle={2022 IEEE 38th International Conference on Data Engineering (ICDE)}, 
  title={Multi-Dimensional Randomized Response}, 
  year={2022},
  volume={},
  number={},
  pages={1517-1518},
  doi={10.1109/ICDE53745.2022.00135}}

@ARTICLE{51,
  author={Ren, Xuebin and Yu, Chia-Mu and Yu, Weiren and Yang, Shusen and Yang, Xinyu and McCann, Julie A. and Yu, Philip S.},
  journal={IEEE Transactions on Information Forensics and Security}, 
  title={ $\textsf{LoPub}$ : High-Dimensional Crowdsourced Data Publication With Local Differential Privacy}, 
  year={2018},
  volume={13},
  number={9},
  pages={2151-2166},
  doi={10.1109/TIFS.2018.2812146}}

@InProceedings{cheu19,
author="Cheu, Albert
and Smith, Adam
and Ullman, Jonathan
and Zeber, David
and Zhilyaev, Maxim",
editor="Ishai, Yuval
and Rijmen, Vincent",
title="Distributed Differential Privacy via Shuffling",
booktitle="Advances in Cryptology -- EUROCRYPT 2019",
year="2019",
publisher="Springer International Publishing",
address="Cham",
pages="375--403",
isbn="978-3-030-17653-2"
}

@ARTICLE{KOV15,
  author={Kairouz, Peter and Oh, Sewoong and Viswanath, Pramod},
  journal={IEEE Transactions on Information Theory}, 
  title={The Composition Theorem for Differential Privacy}, 
  year={2017},
  volume={63},
  number={6},
  pages={4037-4049},
  doi={10.1109/TIT.2017.2685505}}

@INPROCEEDINGS{Ye,
  author={Ye, Min and Barg, Alexander},
  booktitle={2017 IEEE International Symposium on Information Theory (ISIT)}, 
  title={Optimal schemes for discrete distribution estimation under local differential privacy}, 
  year={2017},
  volume={},
  number={},
  pages={759-763},
  doi={10.1109/ISIT.2017.8006630}}

@InProceedings{HR,
  title = 	 {Hadamard Response: Estimating Distributions Privately, Efficiently, and with Little Communication},
  author =       {Acharya, Jayadev and Sun, Ziteng and Zhang, Huanyu},
  booktitle = 	 {Proceedings of the Twenty-Second International Conference on Artificial Intelligence and Statistics},
  pages = 	 {1120--1129},
  year = 	 {2019},
  editor = 	 {Chaudhuri, Kamalika and Sugiyama, Masashi},
  volume = 	 {89},
  series = 	 {Proceedings of Machine Learning Research},
  month = 	 {16--18 Apr},
  publisher =    {PMLR},
  pdf = 	 {http://proceedings.mlr.press/v89/acharya19a/acharya19a.pdf},
  url = 	 {https://proceedings.mlr.press/v89/acharya19a.html},
}

@inproceedings{Erlingsson19,
author = {Erlingsson, \'{U}lfar and Feldman, Vitaly and Mironov, Ilya and Raghunathan, Ananth and Talwar, Kunal and Thakurta, Abhradeep},
title = {Amplification by shuffling: from local to central differential privacy via anonymity},
year = {2019},
publisher = {Society for Industrial and Applied Mathematics},
address = {USA},
booktitle = {Proceedings of the Thirtieth Annual ACM-SIAM Symposium on Discrete Algorithms},
pages = {2468–2479},
numpages = {12},
location = {San Diego, California},
series = {SODA '19}
}

@inproceedings{Prochlo,
author = {Bittau, Andrea and Erlingsson, \'{U}lfar and Maniatis, Petros and Mironov, Ilya and Raghunathan, Ananth and Lie, David and Rudominer, Mitch and Kode, Ushasree and Tinnes, Julien and Seefeld, Bernhard},
title = {Prochlo: Strong Privacy for Analytics in the Crowd},
year = {2017},
isbn = {9781450350853},
publisher = {Association for Computing Machinery},
address = {New York, NY, USA},
url = {https://doi.org/10.1145/3132747.3132769},
doi = {10.1145/3132747.3132769},
booktitle = {Proceedings of the 26th Symposium on Operating Systems Principles},
pages = {441–459},
numpages = {19},
location = {Shanghai, China},
series = {SOSP '17}
}

@InProceedings{koskela20b,
  title = 	 {Computing Tight Differential Privacy Guarantees Using FFT},
  author =       {Koskela, Antti and J\"alk\"o, Joonas and Honkela, Antti},
  booktitle = 	 {Proceedings of the Twenty Third International Conference on Artificial Intelligence and Statistics},
  pages = 	 {2560--2569},
  year = 	 {2020},
  editor = 	 {Chiappa, Silvia and Calandra, Roberto},
  volume = 	 {108},
  series = 	 {Proceedings of Machine Learning Research},
  month = 	 {26--28 Aug},
  publisher =    {PMLR},
  pdf = 	 {http://proceedings.mlr.press/v108/koskela20b/koskela20b.pdf},
  url = 	 {https://proceedings.mlr.press/v108/koskela20b.html}
}

@inproceedings{Luo22,
author = {Luo, Qiyao and Wang, Yilei and Yi, Ke},
title = {Frequency Estimation in the Shuffle Model with Almost a Single Message},
year = {2022},
isbn = {9781450394505},
publisher = {Association for Computing Machinery},
address = {New York, NY, USA},
url = {https://doi.org/10.1145/3548606.3560608},
doi = {10.1145/3548606.3560608},
booktitle = {Proceedings of the 2022 ACM SIGSAC Conference on Computer and Communications Security},
pages = {2219–2232},
numpages = {14},
location = {Los Angeles, CA, USA},
series = {CCS '22}
}

@article{Wang2020,
author = {Wang, Tianhao and Ding, Bolin and Xu, Min and Huang, Zhicong and Hong, Cheng and Zhou, Jingren and Li, Ninghui and Jha, Somesh},
title = {Improving utility and security of the shuffler-based differential privacy},
year = {2020},
issue_date = {September 2020},
publisher = {VLDB Endowment},
volume = {13},
number = {13},
issn = {2150-8097},
url = {https://doi.org/10.14778/3424573.3424576},
doi = {10.14778/3424573.3424576},
journal = {Proc. VLDB Endow.},
month = sep,
pages = {3545–3558},
numpages = {14}
}

@article{Apple17,
    author = {Apple Differential Privacy Team},
    title = {Learning with privacy at scale},
    journal = {Apple Machine Learning Journal},
    year = {2017}
}

@inproceedings{checkin,
author = {Balle, Borja and Kairouz, Peter and McMahan, H. Brendan and Thakkar, Om and Thakurta, Abhradeep},
title = {Privacy amplification via random check-ins},
year = {2020},
isbn = {9781713829546},
publisher = {Curran Associates Inc.},
address = {Red Hook, NY, USA},
booktitle = {Proceedings of the 34th International Conference on Neural Information Processing Systems},
articleno = {388},
numpages = {12},
location = {Vancouver, BC, Canada},
series = {NIPS '20}
}

@misc{application,
      title={Exposure Notification Privacy-preserving Analytics (ENPA) White Paper}, 
      author={Apple and Google},
      year={2021},
      url={https://covid19-static.cdn-apple.com/applications/covid19/current/static/contact-tracing/pdf/ENPA_White_Paper.pdf},
    howpublished={https://covid19-static.cdn-apple.com/applications/covid19/current/static/contact-tracing/pdf/ENPA\_White\_Paper.pdf}
}

@inproceedings{triangle,
author = {Imola, Jacob and Murakami, Takao and Chaudhuri, Kamalika},
title = {Differentially Private Triangle and 4-Cycle Counting in the Shuffle Model},
year = {2022},
isbn = {9781450394505},
publisher = {Association for Computing Machinery},
address = {New York, NY, USA},
url = {https://doi.org/10.1145/3548606.3560659},
doi = {10.1145/3548606.3560659},
booktitle = {Proceedings of the 2022 ACM SIGSAC Conference on Computer and Communications Security},
pages = {1505–1519},
numpages = {15},
keywords = {differential privacy, shuffle model, subgraph counting, wedges},
location = {Los Angeles, CA, USA},
series = {CCS '22}
}

@article{Cheu22,
  title={Differentially Private Histograms in the Shuffle Model from Fake Users},
  author={Albert Cheu and Maxim Zhilyaev},
  journal={2022 IEEE Symposium on Security and Privacy (SP)},
  year={2021},
  pages={440-457},
  doi = {10.1109/SP46214.2022.9833614}
}

@INPROCEEDINGS {8731512,
author = { Wang, Ning and Xiao, Xiaokui and Yang, Yin and Zhao, Jun and Hui, Siu Cheung and Shin, Hyejin and Shin, Junbum and Yu, Ge },
booktitle = { 2019 IEEE 35th International Conference on Data Engineering (ICDE) },
title = {{ Collecting and Analyzing Multidimensional Data with Local Differential Privacy }},
year = {2019},
volume = {},
ISSN = {},
pages = {638-649},
doi = {10.1109/ICDE.2019.00063},
url = {https://doi.ieeecomputersociety.org/10.1109/ICDE.2019.00063},
publisher = {IEEE Computer Society},
address = {Los Alamitos, CA, USA},
month =apr}

@inproceedings{10.1145/3299869.3319891,
author = {Wang, Tianhao and Ding, Bolin and Zhou, Jingren and Hong, Cheng and Huang, Zhicong and Li, Ninghui and Jha, Somesh},
title = {Answering Multi-Dimensional Analytical Queries under Local Differential Privacy},
year = {2019},
isbn = {9781450356435},
publisher = {Association for Computing Machinery},
address = {New York, NY, USA},
url = {https://doi.org/10.1145/3299869.3319891},
doi = {10.1145/3299869.3319891},
booktitle = {Proceedings of the 2019 International Conference on Management of Data},
pages = {159–176},
numpages = {18},
location = {Amsterdam, Netherlands},
series = {SIGMOD '19}
}

@ARTICLE{8758350,
  author={Wang, Tianhao and Li, Ninghui and Jha, Somesh},
  journal={IEEE Transactions on Dependable and Secure Computing}, 
  title={Locally Differentially Private Heavy Hitter Identification}, 
  year={2021},
  volume={18},
  number={2},
  pages={982-993},
  keywords={Protocols;Frequency estimation;Differential privacy;Frequency-domain analysis;Estimation;Privacy;Sociology;Local differential privacy;heavy hitter},
  doi={10.1109/TDSC.2019.2927695}}

@inproceedings{10.1145/3183713.3196906,
author = {Cormode, Graham and Kulkarni, Tejas and Srivastava, Divesh},
title = {Marginal Release Under Local Differential Privacy},
year = {2018},
isbn = {9781450347037},
publisher = {Association for Computing Machinery},
address = {New York, NY, USA},
url = {https://doi.org/10.1145/3183713.3196906},
doi = {10.1145/3183713.3196906},
booktitle = {Proceedings of the 2018 International Conference on Management of Data},
pages = {131–146},
numpages = {16},
location = {Houston, TX, USA},
series = {SIGMOD '18}
}

@InProceedings{Zhu22,
  title = 	 { Optimal Accounting of Differential Privacy via Characteristic Function },
  author =       {Zhu, Yuqing and Dong, Jinshuo and Wang, Yu-Xiang},
  booktitle = 	 {Proceedings of The 25th International Conference on Artificial Intelligence and Statistics},
  pages = 	 {4782--4817},
  year = 	 {2022},
  editor = 	 {Camps-Valls, Gustau and Ruiz, Francisco J. R. and Valera, Isabel},
  volume = 	 {151},
  series = 	 {Proceedings of Machine Learning Research},
  month = 	 {28--30 Mar},
  publisher =    {PMLR},
  pdf = 	 {https://proceedings.mlr.press/v151/zhu22c/zhu22c.pdf},
  url = 	 {https://proceedings.mlr.press/v151/zhu22c.html}
}

@inproceedings{Ghazi20,
author = {Ghazi, Badih and Kumar, Ravi and Manurangsi, Pasin and Pagh, Rasmus},
title = {Private counting from anonymous messages: near-optimal accuracy with vanishing communication overhead},
year = {2020},
publisher = {JMLR.org},
booktitle = {Proceedings of the 37th International Conference on Machine Learning},
articleno = {328},
numpages = {10},
series = {ICML'20}
}

@misc{steinke2022compositiondifferentialprivacy,
      title={Composition of Differential Privacy \& Privacy Amplification by Subsampling}, 
      author={Thomas Steinke},
      year={2022},
      eprint={2210.00597},
      archivePrefix={arXiv},
      primaryClass={cs.CR},
      url={https://arxiv.org/abs/2210.00597}, 
        howpublished={https://arxiv.org/abs/2210.00597}
}

@article{Balle_Barthe_Gaboardi_2020, title={Privacy Profiles and Amplification by Subsampling}, volume={10}, url={https://journalprivacyconfidentiality.org/index.php/jpc/article/view/726}, DOI={10.29012/jpc.726}, number={1}, journal={Journal of Privacy and Confidentiality}, author={Balle, Borja and Barthe, Gilles and Gaboardi, Marco}, year={2020}, month={Jan.} }

@inproceedings{subsample18,
author = {Balle, Borja and Barthe, Gilles and Gaboardi, Marco},
title = {Privacy amplification by subsampling: tight analyses via couplings and divergences},
year = {2018},
publisher = {Curran Associates Inc.},
address = {Red Hook, NY, USA},
booktitle = {Proceedings of the 32nd International Conference on Neural Information Processing Systems},
pages = {6280–6290},
numpages = {11},
location = {Montr\'{e}al, Canada},
series = {NIPS'18}
}

@INPROCEEDINGS {Su26CSF,
author = { Su, Pengcheng and Cheng, Haibo and Wang, Ping },
booktitle = { 2026 IEEE 39th Computer Security Foundations Symposium (CSF) (to appear) },
title = {{ Bayesian Advantage of Re-identification Attack in the Shuffle Model }},
year = {2026},
volume = {},
ISSN = {},
pages = {},
publisher = {IEEE Computer Society},
month =Jul}

\appendix
\section{Optimal Decompositions for Common Local Randomizers}\label{appenix_decomposition}

In this section, we derive the simplified optimal decomposition for several common local randomizers. Each of these randomizers satisfies the extreme property discussed in Section~\ref{sec_simplified_od}, and thus their simplified optimal decompositions consist of exactly five components, expressed as follows:
\begin{align*}
\mathcal{R}(x_1^0) &= e^{\varepsilon_0} p \mathcal{Q}_{e^{\varepsilon_0},1} + p \mathcal{Q}_{1,e^{\varepsilon_0}} + e^{\varepsilon_0} q \mathcal{Q}_{e^{\varepsilon_0},e^{\varepsilon_0}} + r \mathcal{Q}_{1,1},\\
\mathcal{R}(x_1^1) &= p \mathcal{Q}_{e^{\varepsilon_0},1} + e^{\varepsilon_0} p \mathcal{Q}_{1,e^{\varepsilon_0}} + e^{\varepsilon_0} q \mathcal{Q}_{e^{\varepsilon_0},e^{\varepsilon_0}} + r \mathcal{Q}_{1,1},\\
\forall i \in [2, n],\ \mathcal{R}(x_i) &= p \mathcal{Q}_{e^{\varepsilon_0},1} + p \mathcal{Q}_{1,e^{\varepsilon_0}} + q \mathcal{Q}_{e^{\varepsilon_0},e^{\varepsilon_0}} + r \mathcal{Q}_{1,1} + (1 - 2p - q - r) \mathcal{Q}_i.
\end{align*}
The values of the coefficients \( p, q, r \) for several widely-used local randomizers are summarized in Table~\ref{tab_decomposition} and \ref{tab_decomposition2}.
We provide the details for computing the corresponding coefficients.

\subsection{Binary local hash}
Let \( \mathcal{X} = [D] \). The Binary Local Hashing (BLH) mechanism is defined as \( \mathsf{BLH}(x) = (h, \mathsf{RR}(h(x))) \), where \( h: [D] \to [2] \) is a uniformly random function drawn from \( \mathcal{H} = \{ f : [D] \to [2] \} \), and \( \mathsf{RR} \) denotes the 2-ary randomized response mechanism (see Example~\ref{example1} for the definition) \cite{3}.

Define $h^*:[D]\to [2],\forall x,h^*(x)=1$ and $\bar{h}^*:[D]\to [2],\forall x,\bar{h}^*(x)=0$. When \( h \in \{ h^*, \bar{h}^*\} \), the output distribution of \( \mathsf{BLH}(x) \) is the same for all inputs \( x \). We have
\begin{align*}
\inf_{x} \Pr[\mathsf{BLH}(x) = (h^*,0)] =\frac{1}{|\mathcal{H}|(1 + e^{\varepsilon_0})},\quad\inf_{x} \Pr[\mathsf{BLH}(x) = (h^*,1)] =\frac{e^{\varepsilon_0}}{|\mathcal{H}|(1 + e^{\varepsilon_0})},\\
\inf_{x} \Pr[\mathsf{BLH}(x) = (\bar{h}^*,0)] =\frac{e^{\varepsilon_0}}{|\mathcal{H}|(1 + e^{\varepsilon_0})},\quad\inf_{x} \Pr[\mathsf{BLH}(x) = (\bar{h}^*,1)] =\frac{1}{|\mathcal{H}|(1 + e^{\varepsilon_0})}.
\end{align*}

When $h\notin \{h^*,\bar{h}^*\}$, there must exist two inputs \( a, b \in [D] \) such that \( h(a) = 0 \) and \( h(b) = 1 \). In this case, we have
\[
\inf_{x} \Pr[\mathsf{BLH}(x) = (h,0)] = \inf_{x} \Pr[\mathsf{BLH}(x) = (h,1)] = \frac{1}{|\mathcal{H}|(1 + e^{\varepsilon_0})}.
\]
In summary, the blanket distribution of BLH is ``nearly'' uniform distribution over \( \mathcal{H} \times [2] \):
\begin{align*}
\omega(h,y)=\inf_x \Pr[\mathsf{BLH}(x)=(h,y)]/\gamma=\begin{cases}
\frac{1}{\gamma(1+e^{\varepsilon_0})},& (h,y)\notin \{(h^*,1),(\bar{h}^*,0)\},\\
\frac{e^{\varepsilon_0}}{\gamma(1+e^{\varepsilon_0})},& (h,y)\in \{(h^*,1),(\bar{h}^*,0)\}.
\end{cases}
\end{align*}
where $\gamma=\sum_{(h,y)}\inf_x \Pr[\mathsf{BLH}(x)=(h,y)]=\frac{2}{1+e^{\varepsilon_0}}+\frac{e^{\varepsilon_0}-1}{2^{D-1}(1+e^{\varepsilon_0})}$.

Recall the definition in Section \ref{sec_simplified_od}: $$S_{r_0,r_1}=\{y| \frac{\mathcal{R}(x_1^0)(y)}{\inf_{x} \mathcal{R}(x)(y)}=r_0, \frac{\mathcal{R}(x_1^1)(y)}{\inf_{x} \mathcal{R}(x)(y)}=r_1\}.$$
It can be computed that
\begin{align*}
S_{e^{\varepsilon_0},1}&=\{(h,y)|h(x_1^0)=y\wedge h(x_1^1)\neq y \},\\
S_{1,e^{\varepsilon_0}}&=\{(h,y)|h(x_1^0)\neq y\wedge h(x_1^1)= y \},\\
S_{e^{\varepsilon_0},e^{\varepsilon_0}}&=\{(h,y)|h(x_1^0)=y\wedge h(x_1^1)= y \}- \{(h,y)|h=h^* \vee h=\bar{h}^* \},\\
S_{1,1}&=\{(h,y)|h(x_1^0)\neq y\wedge h(x_1^1)\neq y \} \cup \{(h,y)|h=h^* \vee h=\bar{h}^* \}.
\end{align*}
Consequently, it follows that:
\begin{alignat*}{3}
p &= \gamma \omega(S_{e^{\varepsilon_0},1}) 
  &\quad =\quad& \frac{1}{2(e^{\varepsilon_0}+1)}, \\[0.5ex]
q &= \gamma \omega(S_{e^{\varepsilon_0},e^{\varepsilon_0}}) 
  &\quad =\quad& \frac{1}{2(e^{\varepsilon_0}+1)} - \frac{2}{|\mathcal{H}|(1+e^{\varepsilon_0})} 
  &\quad =\quad& \frac{1}{2(e^{\varepsilon_0}+1)} - \frac{1}{2^{D-1}(1+e^{\varepsilon_0})}, \\[0.5ex]
r &= \gamma \omega(S_{1,1}) 
  &\quad =\quad& \frac{1}{2(e^{\varepsilon_0}+1)} + \frac{2e^{\varepsilon_0}}{|\mathcal{H}|(1+e^{\varepsilon_0})} 
  &\quad =\quad& \frac{1}{2(e^{\varepsilon_0}+1)} + \frac{e^{\varepsilon_0}}{2^{D-1}(1+e^{\varepsilon_0})}.
\end{alignat*}

\begin{table}
    \centering
    \caption{Optimal decompositions for common local randomizers}
 \small  {($D$ denotes the size of the domain.)}
    \begin{tabular}{cccc}
    \toprule
         & $p$  & $q$ & $r$\\
         \midrule
         $k$-RR \cite{11}& $\frac{1}{e^{\varepsilon_0}+k-1}$& $0$ & $\frac{k-2}{e^{\varepsilon_0}+k-1}$ \\
        BLH~\cite{3} & $\frac{1}{2(e^{\varepsilon_0}+1)}$ & $\frac{1}{2(e^{\varepsilon_0}+1)} - \frac{1}{2^{D-1}(e^{\varepsilon_0}+1)}$ & $\frac{1}{2(e^{\varepsilon_0}+1)} + \frac{e^{\varepsilon_0}}{2^{D-1}(e^{\varepsilon_0}+1)}$\\
        RAPPOR~\cite{6} & $\frac{1}{(e^{\varepsilon_0/2}+1)^2}$ &$\frac{e^{-\varepsilon_0/2}}{(e^{\varepsilon_0/2}+1)^2}-\frac{e^{-\varepsilon_0/2} }{(e^{\varepsilon_0/2}+1)^D}$& $\frac{e^{\varepsilon_0/2}}{(e^{\varepsilon_0/2}+1)^2}+\frac{e^{\varepsilon_0/2} }{(e^{\varepsilon_0/2}+1)^D}$  \\
        OUE~\cite{3}  & $\frac{1}{2(e^{\varepsilon_0}+1)}$ & $\frac{e^{-\varepsilon_0}}{2(e^{\varepsilon_0}+1)}-\frac{e^{-\varepsilon_0}}{2(1 + e^{\varepsilon_0})^{D-1}}$ & $\frac{e^{\varepsilon_0}}{2(e^{\varepsilon_0}+1)}+\frac{1}{2(e^{\varepsilon_0}+1)^{D-1}}$\\
        HR~\cite{HR} & $\frac{1}{2(e^{\varepsilon_0}+1)}$  & $\frac{1}{2(e^{\varepsilon_0}+1)}$ & $\frac{1}{2(e^{\varepsilon_0}+1)}+\frac{2(e^{\varepsilon_0}-1)}{D(e^{\varepsilon_0}+1)}$\\

         \bottomrule
    \end{tabular}

    \label{tab_decomposition}
\end{table}

\begin{table}
    \centering
    \caption{Optimal decompositions for common local randomizers (Asymptotic,$D\gg 1$)}
    \begin{tabular}{cccc}
    \toprule
         & $p$  & $q$ & $r$\\
         \midrule
        BLH ~\cite{3} & $\frac{1}{2(e^{\varepsilon_0}+1)}$ & $\frac{1}{2(e^{\varepsilon_0}+1)}$ & $\frac{1}{2(e^{\varepsilon_0}+1)}$\\
        RAPPOR ~\cite{6} & $\frac{1}{(e^{\varepsilon_0/2}+1)^2}$ &$\frac{1}{e^{\varepsilon_0/2}(e^{\varepsilon_0/2}+1)^2}$& $\frac{e^{\varepsilon_0/2}}{(e^{\varepsilon_0/2}+1)^2}$  \\
        OUE ~\cite{3}  & $\frac{1}{2(e^{\varepsilon_0}+1)}$ & $\frac{1}{2e^{\varepsilon_0}(e^{\varepsilon_0}+1)}$ & $\frac{e^{\varepsilon_0}}{2(e^{\varepsilon_0}+1)}$\\
        HR ~\cite{HR} & $\frac{1}{2(e^{\varepsilon_0}+1)}$  & $\frac{1}{2(e^{\varepsilon_0}+1)}$ & $\frac{1}{2(e^{\varepsilon_0}+1)}$\\

         \bottomrule
    \end{tabular}
    \label{tab_decomposition2}
\end{table}

\subsection{RAPPOR}
RAPPOR first encodes the input \( x \in [D] \) using unary encoding, yielding \( \mathrm{UE}(x) \in \{0,1\}^D \), where \( \forall i \in [D],\ \mathrm{UE}(x)[i] = \mathbbm{1}[x = i] \). It then applies \( \frac{\varepsilon_0}{2} \)-DP 2-ary randomized response independently to each bit of \( \mathrm{UE}(x) \). Specifically, the output distribution of RAPPOR satisfies:
\[
\Pr[\mathrm{RAPPOR}(x) = \boldsymbol{y}] = \frac{e^{(D - d_H(\mathrm{UE}(x), \boldsymbol{y})) \cdot \varepsilon_0/2 }}{(1 + e^{\varepsilon_0/2})^D},
\]
where \( d_H(\mathrm{UE}(x), \boldsymbol{y}) \) denotes the Hamming distance between two binary vectors in \( \{0,1\}^D \).

As in the case of BLH, RAPPOR also exhibits an extreme case, namely \( \boldsymbol{y}^* = (1,1,\dots,1) \). We have:
\begin{align*}
\omega(\boldsymbol{y})=\inf_x \Pr[\mathrm{RAPPOR}(x) = \boldsymbol{y}]/\gamma = \begin{cases}
\frac{e^{(D -1- d_H(\boldsymbol{y},\boldsymbol{0})) \cdot \varepsilon_0/2 }}{\gamma(1 + e^{\varepsilon_0/2})^D},& \boldsymbol{y}\neq \boldsymbol{y}^*,\\
\frac{e^{\varepsilon_0/2} }{\gamma(1 + e^{\varepsilon_0/2})^D},&\boldsymbol{y}=\boldsymbol{y}^*.
\end{cases}
\end{align*}
where
\begin{align*}
\gamma=\sum_{\boldsymbol{y}}\inf_x \Pr[\mathrm{RAPPOR}(x)=\boldsymbol{y}]&= \sum_{\boldsymbol{y}\in \{0,1\}^D}\frac{e^{(D -1- d_H(\boldsymbol{y},\boldsymbol{0})) \cdot \varepsilon_0/2 }}{(1 + e^{\varepsilon_0/2})^D} +\frac{e^{\varepsilon_0/2}-e^{-\varepsilon_0/2} }{(1 + e^{\varepsilon_0/2})^D}\\
&=\frac{1}{(1 + e^{\varepsilon_0/2})^D}\sum_{i=0}^D \tbinom{D}{i} e^{(i-1)\cdot \varepsilon_0/2}+\frac{e^{\varepsilon_0/2}-e^{-\varepsilon_0/2} }{(1 + e^{\varepsilon_0/2})^D}\\
&=e^{-\varepsilon_0/2}+\frac{e^{\varepsilon_0/2}-e^{-\varepsilon_0/2} }{(1 + e^{\varepsilon_0/2})^D}.
\end{align*}

According to the definition in Section \ref{sec_simplified_od}:
\begin{align*}
S_{e^{\varepsilon_0},1}&=\{\boldsymbol{y}|\boldsymbol{y}[x_1^0]=1 \wedge \boldsymbol{y}[x_1^1]=0 \},\\
S_{1,e^{\varepsilon_0}}&=\{\boldsymbol{y}|\boldsymbol{y}[x_1^0]=0 \wedge \boldsymbol{y}[x_1^1]=1 \},\\
S_{e^{\varepsilon_0},e^{\varepsilon_0}}&=\{\boldsymbol{y}|\boldsymbol{y}[x_1^0]=1 \wedge \boldsymbol{y}[x_1^1]=1 \}- \{\boldsymbol{y}^*\},\\
S_{1,1}&=\{\boldsymbol{y}|\boldsymbol{y}[x_1^0]=0 \wedge \boldsymbol{y}[x_1^1]=0 \}\cup \{\boldsymbol{y}^* \}.
\end{align*}
When $D>2$, it can be computed that
\begin{alignat*}{3}
p &= \gamma \omega(S_{e^{\varepsilon_0},1}) 
  &\quad =\quad& \frac{1}{(e^{\varepsilon_0/2}+1)^2},  \\[0.5ex]
q &= \gamma \omega(S_{e^{\varepsilon_0},e^{\varepsilon_0}}) 
  &\quad =\quad&  \frac{e^{-\varepsilon_0/2}}{(e^{\varepsilon_0/2}+1)^2} -\frac{e^{-\varepsilon_0/2} }{(1 + e^{\varepsilon_0/2})^D}, \\[0.5ex]
r &= \gamma \omega(S_{1,1}) 
  &\quad =\quad& \frac{e^{\varepsilon_0/2}}{(e^{\varepsilon_0/2}+1)^2} +\frac{e^{\varepsilon_0/2} }{(1 + e^{\varepsilon_0/2})^D}.
\end{alignat*}

\subsection{Optimal Unary Encode (OUE)}
OUE (Optimized Unary Encoding) follows the same initial step as RAPPOR by applying unary encoding to the input \( x \), resulting in \( \mathrm{UE}(x) \in \{0,1\}^D \). However, it differs in how randomized response is applied: for each bit in \( \mathrm{UE}(x) \), a 2-ary randomized response mechanism satisfying \( \varepsilon_0 \)-DP is applied to the bits with value 0, while the bits with value 1 are replaced with uniformly random bits \cite{3}.

The resulting probability distribution is given by:
\[
\Pr[\mathrm{OUE}(x) = \boldsymbol{y}] = \frac{e^{(D-1 - d_H(\boldsymbol{y}_{-x},\boldsymbol{0})) \cdot \varepsilon_0 }}{2(1 + e^{\varepsilon_0})^{D-1}},
\]
where \( d_H \) denotes the Hamming distance, and \( \boldsymbol{y}_{-x} \) denotes the vector \( \boldsymbol{y} \) with the \( x \)-th coordinate removed.

As in the case of RAPPOR, OUE also exhibits an extreme case, namely \( \boldsymbol{y}^* = (1,1,\dots,1) \). We have:
\begin{align*}
\omega(\boldsymbol{y})=\inf_x \Pr[\mathrm{OUE}(x) = \boldsymbol{y}]/\gamma = \begin{cases}
\frac{e^{(D-1 - d_H(\boldsymbol{y},\boldsymbol{0})) \cdot \varepsilon_0 }}{2\gamma(1 + e^{\varepsilon_0})^{D-1}},& \boldsymbol{y}\neq \boldsymbol{y}^*,\\
\frac{1}{2\gamma(1 + e^{\varepsilon_0})^{D-1}},&\boldsymbol{y}=\boldsymbol{y}^*.
\end{cases}
\end{align*}
where
\begin{align*}
\gamma=\sum_{\boldsymbol{y}}\inf_x \Pr[\mathrm{OUE}(x)=\boldsymbol{y}]&=\sum_{\boldsymbol{y}\in\{0,1\}^D} \frac{e^{(D-1 - d_H(\boldsymbol{y},\boldsymbol{0})) \cdot \varepsilon_0 }}{2(1 + e^{\varepsilon_0})^{D-1}} +\frac{1-e^{-\varepsilon_0}}{2(1 + e^{\varepsilon_0})^{D-1}}\\
&=\frac{1}{2(1 + e^{\varepsilon_0})^{D-1}}\sum_{i=0}^D \tbinom{D}{i} e^{(i-1)\cdot \varepsilon_0}+\frac{1-e^{-\varepsilon_0}}{2(1 + e^{\varepsilon_0})^{D-1}}\\
&=\frac{1+e^{\varepsilon_0}}{2e^{\varepsilon_0}}+\frac{1-e^{-\varepsilon_0}}{2(1 + e^{\varepsilon_0})^{D-1}}.
\end{align*}

According to the definition in Section \ref{sec_simplified_od}:
\begin{align*}
S_{e^{\varepsilon_0},1}&=\{\boldsymbol{y}|\boldsymbol{y}[x_1^0]=1 \wedge \boldsymbol{y}[x_1^1]=0 \},\\
S_{1,e^{\varepsilon_0}}&=\{\boldsymbol{y}|\boldsymbol{y}[x_1^0]=0 \wedge \boldsymbol{y}[x_1^1]=1 \},\\
S_{e^{\varepsilon_0},e^{\varepsilon_0}}&=\{\boldsymbol{y}|\boldsymbol{y}[x_1^0]=1 \wedge \boldsymbol{y}[x_1^1]=1 \}- \{\boldsymbol{y}^*\},\\
S_{1,1}&=\{\boldsymbol{y}|\boldsymbol{y}[x_1^0]=0 \wedge \boldsymbol{y}[x_1^1]=0 \}\cup \{\boldsymbol{y}^* \}.
\end{align*}
When $D>2$, it can be computed that
\begin{alignat*}{3}
p &= \gamma \omega(S_{e^{\varepsilon_0},1}) 
  &\quad =\quad& \frac{1}{2(1+e^{\varepsilon_0})},  \\[0.5ex]
q &= \gamma \omega(S_{e^{\varepsilon_0},e^{\varepsilon_0}}) 
  &\quad =\quad&  \frac{e^{-\varepsilon_0}}{2(1+e^{\varepsilon_0})}-\frac{e^{-\varepsilon_0}}{2(1 + e^{\varepsilon_0})^{D-1}}, \\[0.5ex]
r &= \gamma \omega(S_{1,1}) 
  &\quad =\quad& \frac{e^{\varepsilon_0}}{2(e^{\varepsilon_0}+1)}+\frac{1}{2(1 + e^{\varepsilon_0})^{D-1}}.
\end{alignat*}

\subsection{Hadamard Response}
The Hadamard Response~\cite{HR} assumes an input domain \( \mathbb{X} = [D] \),
where \( D = 2^r \) is a power of two\footnote{Because of properties of the Hadamard matrix, \(0\) is treated as a special symbol and is excluded from the input domain.}. The mechanism maps each input \( x \in [D] \) to a row of the normalized Hadamard matrix \( H \in \{-1,1\}^{D \times D} \), and samples an output index \( y \in [D] \) with probability proportional to \( \exp\left( \frac{\varepsilon_0}{2} H_{x,y} \right) \), where \( \varepsilon_0 \) is the privacy budget. Each entry of the Hadamard matrix is defined as \( H_{x,y} = (-1)^{x \odot y} \), where \( x \odot y \) denotes the bitwise inner product modulo 2 between the binary representations of \( x \) and \( y \); that is, the sum (mod 2) of the component-wise products of their bits.

Formally, the Hadamard Response mechanism is defined as:
\[
\Pr[\mathrm{HR}(x) = y] = \frac{1}{Z} \exp\left( \frac{\varepsilon_0}{2} H_{x,y} \right),
\]
where \( H \in \{-1,1\}^{D \times D} \) is the Hadamard matrix, and \( Z \) is the normalization constant ensuring that the distribution sums to one.

Similarly, HR also exhibits an extreme case:
\begin{align*}
\inf_{x\in \mathbb{X}}\Pr[\mathcal{R}(x)=0]&=\frac{2 e^{\varepsilon_0}}{(1+e^{\varepsilon_0})D},\\
\forall y\neq 0,\inf_{x\in \mathbb{X}}\Pr[\mathcal{R}(x)=y]&=\frac{2}{(1+e^{\varepsilon_0})D}.
\end{align*}

By direct computation from the definition, its corresponding parameter is \( \gamma = \frac{2}{1 + e^{\varepsilon_0}}+\frac{2 (e^{\varepsilon_0}-1)}{(1+e^{\varepsilon_0})D} \).

According to the definition in Section \ref{sec_simplified_od}:
\begin{align*}
S_{e^{\varepsilon_0},1}&=\{y|x_1^0\odot y=1 \wedge x_1^1\odot y=0 \},\\
S_{1,e^{\varepsilon_0}}&=\{y|x_1^0\odot y=0 \wedge x_1^1\odot y=1 \},\\
S_{e^{\varepsilon_0},e^{\varepsilon_0}}&=\{y|x_1^0\odot y=1 \wedge x_1^1\odot y=1\},\\
S_{1,1}&=\{y|x_1^0\odot y=0 \wedge x_1^1\odot y=0 \}.
\end{align*}
Due to the properties of the Hadamard matrix, we have $|S_{e^{\varepsilon_0},1}|=|S_{1,e^{\varepsilon_0}}|=|S_{e^{\varepsilon_0},e^{\varepsilon_0}}|=|S_{1,1}|$.
Therefore,
\begin{alignat*}{3}
p &= \gamma \omega(S_{e^{\varepsilon_0},1}) 
  &\quad =\quad& \frac{1}{2(1 + e^{\varepsilon_0})},  \\[0.5ex]
q &= \gamma \omega(S_{e^{\varepsilon_0}, e^{\varepsilon_0}}) 
  &\quad =\quad& \frac{1}{2(1 + e^{\varepsilon_0})}, \\[0.5ex]
r &= \gamma \omega(S_{1,1}) 
  &\quad =\quad& \frac{1}{2(1 + e^{\varepsilon_0})}+\frac{2 (e^{\varepsilon_0}-1)}{(1+e^{\varepsilon_0})D}.
\end{alignat*}

\section{GPARV for Laplace Mechanism} \label{appendix_laplace}

We consider the Laplace mechanism over the domain \(\mathbb{X} = \{0,1\} \), defined by
\[
\mathcal{R}(x) = x + \text{Lap}\left(\frac{1}{\varepsilon_0}\right),
\]
where \(\text{Lap}(\lambda)\) denotes the Laplace distribution with scale parameter \(\lambda\). The corresponding distributions \(\mathcal{R}(0)\), \(\mathcal{R}(1)\), and the blanket distribution \(\omega\) are given as follows:
\[
\mathcal{R}(0)(y) = \frac{\varepsilon_0}{2} \exp\left(-\varepsilon_0 |y|\right), \quad
\mathcal{R}(1)(y) = \frac{\varepsilon_0}{2} \exp\left(-\varepsilon_0 |y - 1|\right),
\]
\[
\omega(y) = \inf_{x \in [0,1]} \frac{\mathcal{R}(x)(y)}{\gamma} =
\begin{cases}
    \frac{\varepsilon_0}{2\gamma} \exp\left(\varepsilon_0 (y - 1)\right), & y \le 0.5, \\
    \frac{\varepsilon_0}{2\gamma} \exp\left(-\varepsilon_0 y\right), & y > 0.5,
\end{cases}
\]
where \( \gamma = \exp(-\varepsilon_0 / 2) \).

The cumulative distribution function (CDF) of the privacy amplification random variable \( L^{0,1}_\varepsilon \) is computed as:
\[
\Pr\left[L^{0,1}_\varepsilon \le t\right] =
\begin{cases}
    0, & t < \gamma(1 - e^{\varepsilon_0 + \varepsilon}), \\
    \frac{1}{2} \sqrt{\frac{e^\varepsilon}{1 - e^{\varepsilon_0/2} t}}, & \gamma(1 - e^{\varepsilon_0 + \varepsilon}) \le t < \gamma(1 - e^{\varepsilon}), \\
    1 - \frac{1}{2} (e^{\varepsilon_0/2} t + e^\varepsilon)^{-1/2}, & \gamma(1 - e^{\varepsilon}) \le t < \gamma(e^{\varepsilon_0} - e^\varepsilon), \\
    1, & t \ge \gamma(e^{\varepsilon_0} - e^\varepsilon).
\end{cases}
\]

The CDF of \( L^{0,1}_\varepsilon \) with parameters \( \varepsilon_0 = 1.0 \) and \( \varepsilon = 0.1 \) is shown in Fig.~\ref{fig_laplace_cdf}.

\section{Proof of Theorem \ref{main_theorem}}\label{proof_main_theorem}
\begin{proof}
The proof of this theorem follows similar lines as the proof of Lemma 12 in \cite{Balle2019}.

Let the support of \( A_2 \) be \( \mathbb{Y} \). Since \( A_0 \) and \( A_1 \) are absolutely continuous with respect to \( A_2 \), their supports are subsets of \( \mathbb{Y} \).
Define random variables $Y_1^b \sim A_b,b=0,1$ and $W_i \sim A_2,i=1,2,\dots,n-1$. $\mathbf{W}_{n-1}=\{ W_1,W_2,\dots,W_{n-1}\}$.
Let $\vec{y}\in \mathbb{Y}^n$ be a tuple of elements from $\mathbb{Y}$ and $Y \in \mathbb{N}_{n}^{\mathbb{Y}}$ be the corresponding multiset of entries. Then we have
\[
\mathbb{P}[\{ Y_1^b \} \cup \mathbf{W}_{n-1} = Y] = \frac{1}{n!} \sum_{\sigma} \mathbb{P}\left[ (Y_1^b,W_1, \ldots, W_{n-1}) = \vec{y}_{\sigma} \right],
\]
where $\sigma$ ranges over all permutations of $[n]$ and $\vec{y}_{\sigma} = (y_{\sigma(1)}, \ldots, y_{\sigma(n)})$. We also have
\[\mathbb{P}\left[ (Y_1^b,W_1, \ldots, W_{n-1}) = \vec{y}_{\sigma} \right] = A_b(y_{\sigma(1)})\prod_{i=2}^n A_2(y_{\sigma(i)}),\quad b=0,1.\]

Summing this expression over all permutations $\sigma$ and factoring out the product of the $A_2$'s yields:
\begin{align*}
\frac{1}{n!} \sum_{\sigma} A_b(y_{\sigma(1)}) A_2(y_{\sigma(2)}) \cdots A_2(y_{\sigma(n)})  &= \left( \prod_{i=1}^n A_2(y_i) \right) \left( \frac{1}{n} \sum_{i=1}^n \frac{A_b(y_i)}{A_2(y_i)} \right)\\
&= \mathbb{P}[\mathbf{W}_n = Y] \cdot \frac{1}{n} \sum_{i=1}^n \frac{A_b(y_i)}{A_2(y_i)}.
\end{align*}
Now we can plug these observation into the definition of $\mathcal{D}_{e^\varepsilon}$ and complete the proof as
\begin{align*}
D_{e^\varepsilon} \big(P_0 \parallel P_1\big)&=D_{e^\varepsilon} (\{ Y_1^0 \} \cup \mathbf{W}_{n-1} \parallel \{ Y_1^1 \} \cup \mathbf{W}_{n-1})\\
&= \int_{\mathbb{N}_n^{\mathbb{Y}}} \left[ \mathbb{P}[\{ Y_1^0 \} \cup \mathbf{W}_{n-1} = Y] - e^{\varepsilon} \mathbb{P}[\{ Y_1^1 \} \cup \mathbf{W}_{n-1} = Y] \right]_+\\
&= \mathbb{E} \left[ \frac{1}{n} \sum_{i=1}^n \frac{A_0(y_i) - e^{\varepsilon} A_1(y_i)}{A_2(y_i)} \right]_+\\
&= \mathbb{E} \left[ \frac{1}{n} \sum_{i=1}^n G_i \right]_+.
\end{align*}
\end{proof}

\begin{figure*}[ht]
    \centering
    
    \begin{subfigure}[b]{0.48\textwidth}
        \centering
        \includegraphics[width=\linewidth]{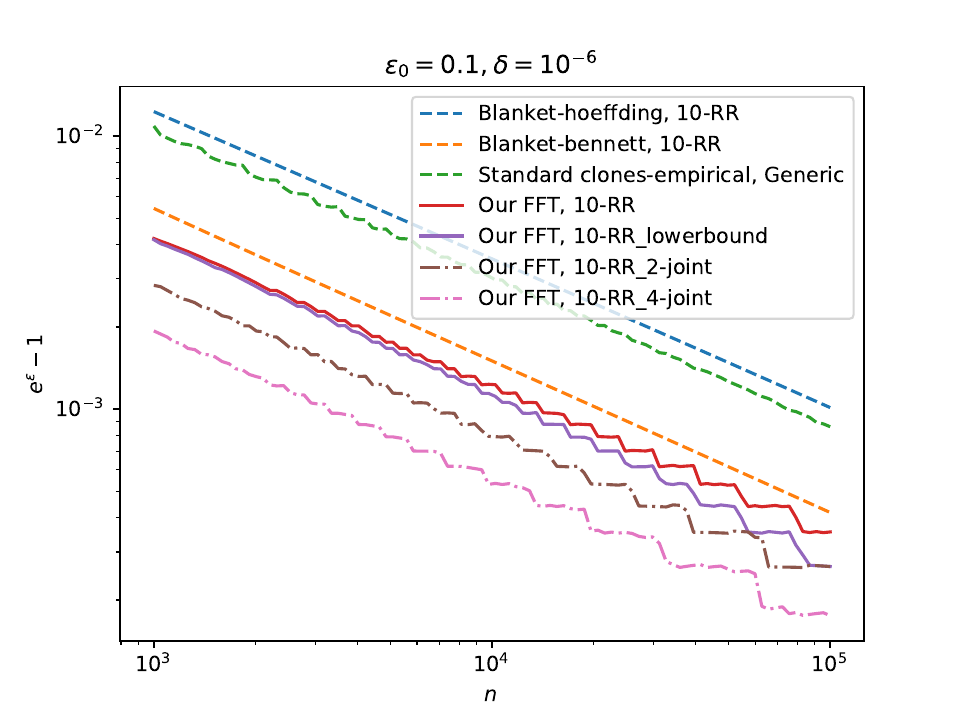} 
        \caption*{10-RR, low $\varepsilon_0$}
    \end{subfigure}
    \hfill
    \begin{subfigure}[b]{0.48\textwidth}
        \centering
        \includegraphics[width=\linewidth]{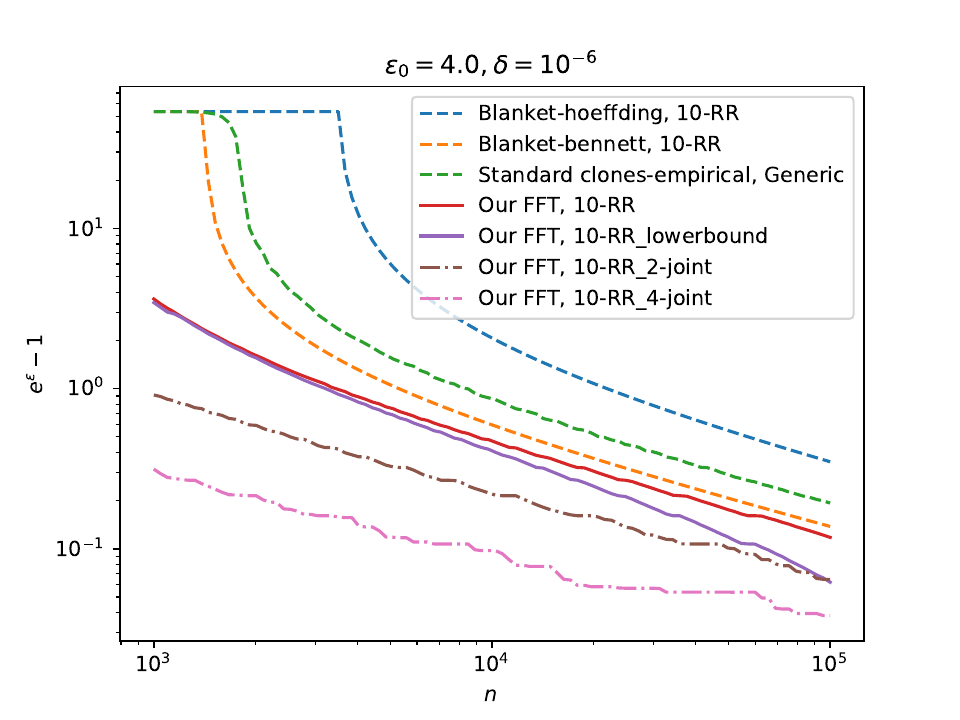} 
        \caption*{10-RR, large $\varepsilon_0$}
    \end{subfigure}
    
    \vspace{0.2cm} 
    
    \begin{subfigure}[b]{0.48\textwidth}
        \centering
        \includegraphics[width=\linewidth]{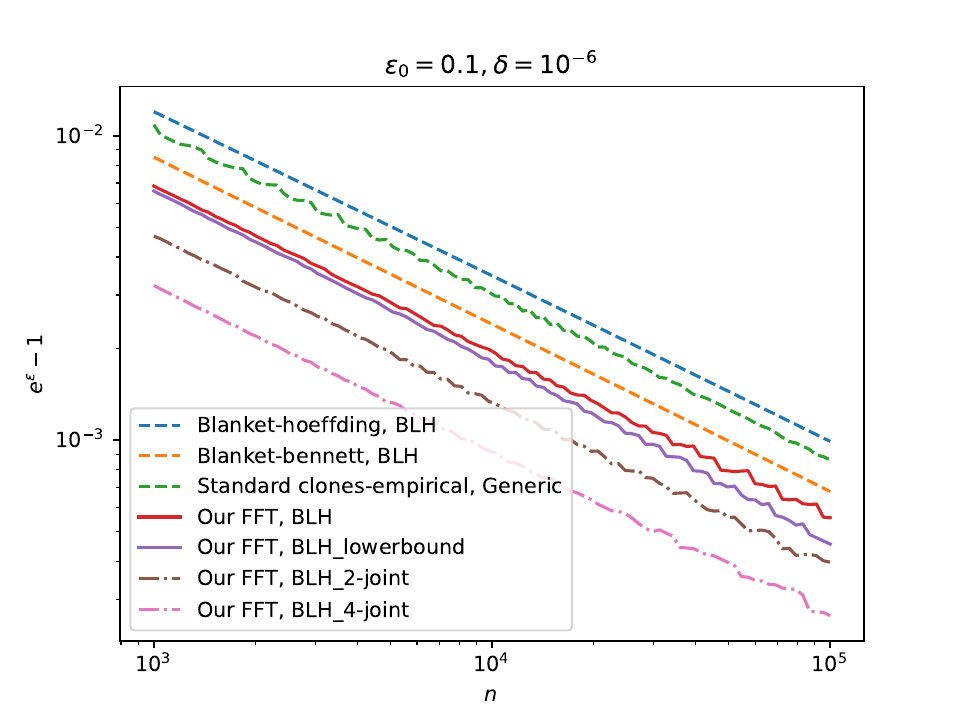} 
        \caption*{BLH, low $\varepsilon_0$}
        \label{fig:group3a}
    \end{subfigure}
    \hfill
    \begin{subfigure}[b]{0.48\textwidth}
        \centering
        \includegraphics[width=\linewidth]{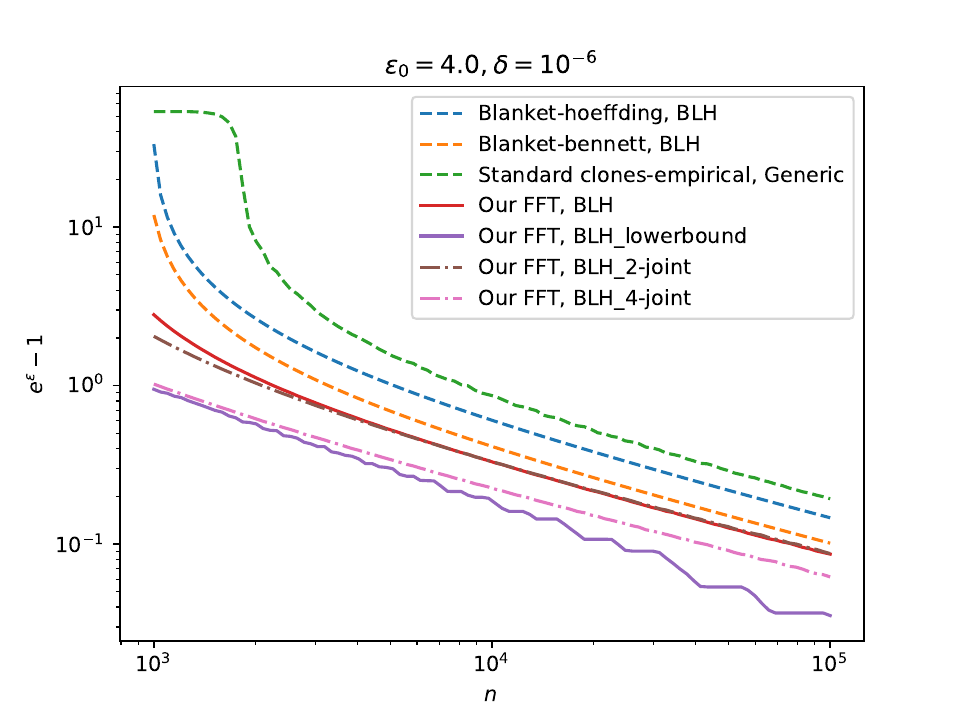} 
        \caption*{BLH, large $\varepsilon_0$}
        \label{fig:group3b}
    \end{subfigure}
    
    \vspace{0.2cm} 

    \begin{subfigure}[b]{0.48\textwidth}
    \centering
    \includegraphics[width=\linewidth]{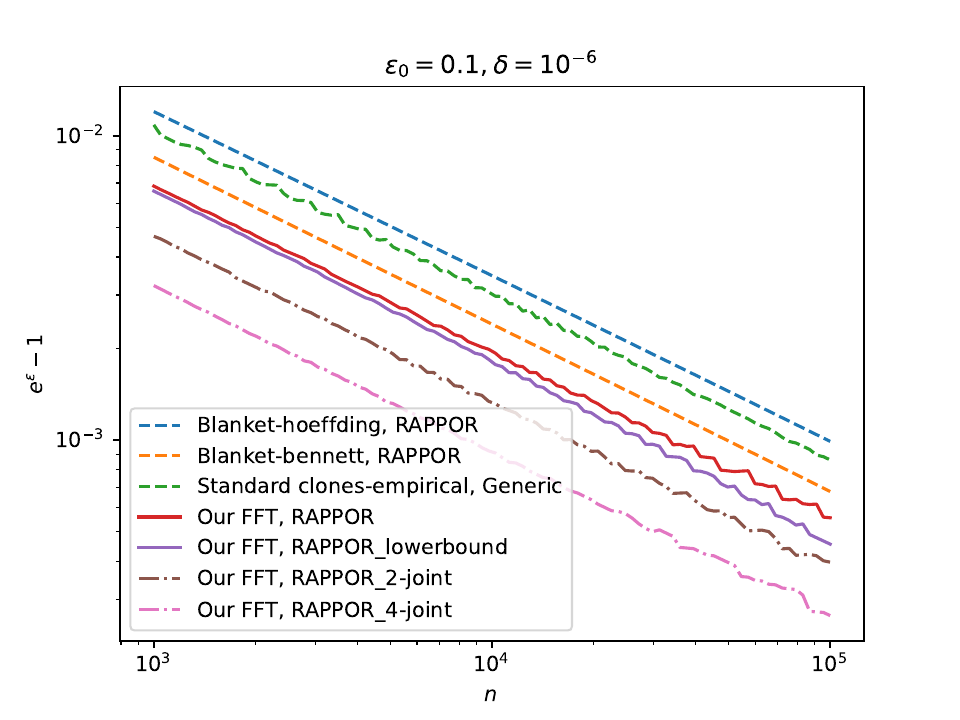} 
    \caption*{RAPPOR, low $\varepsilon_0$}
    \end{subfigure}
    \hfill
    \begin{subfigure}[b]{0.48\textwidth}
        \centering
        \includegraphics[width=\linewidth]{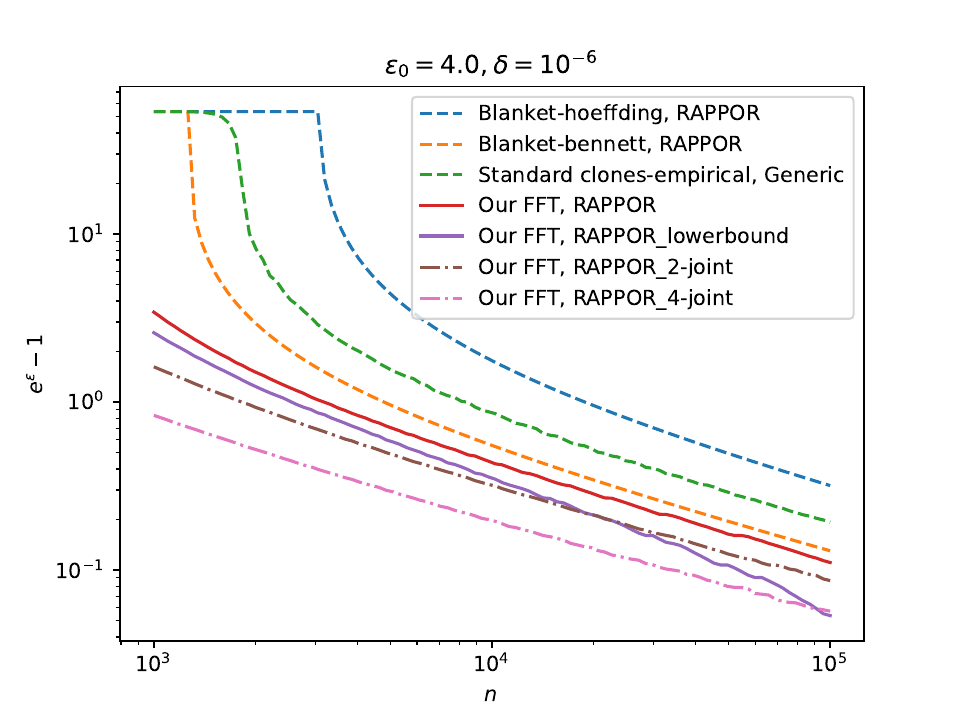} 
        \caption*{RAPPOR, large $\varepsilon_0$}
    \end{subfigure}
    
    \caption{Experimental Results: RR, BLH and RAPPOR}
    \label{fig_exp1}
\end{figure*}

\begin{figure*}[ht]
    \centering
    \begin{subfigure}[b]{0.48\textwidth}
        \centering
        \includegraphics[width=\linewidth]{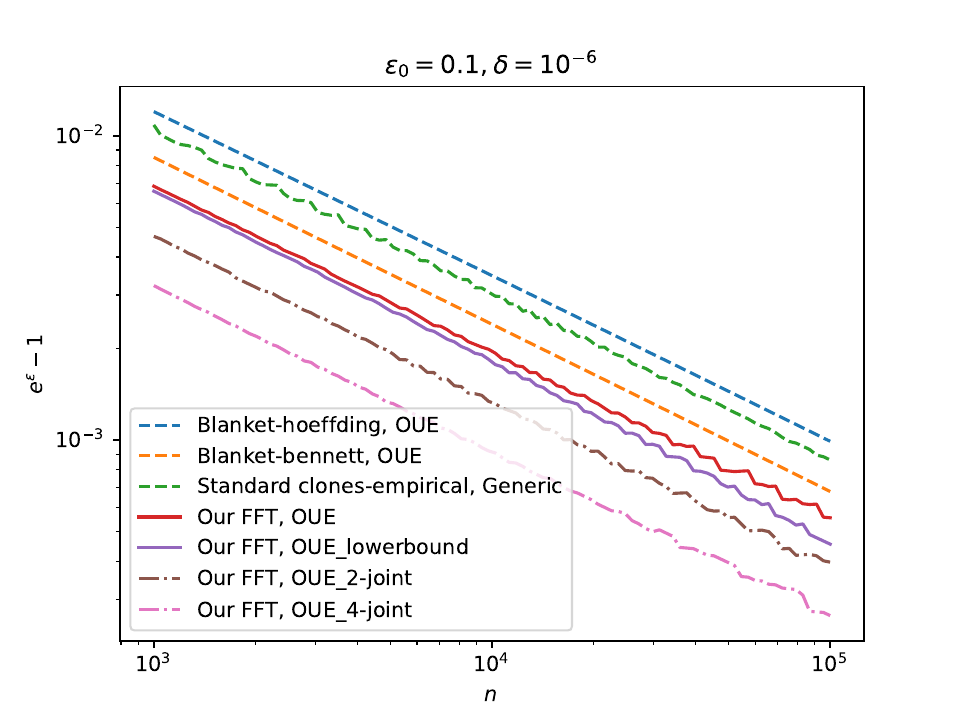} 
        \caption*{OUE, low $\varepsilon_0$}
    \end{subfigure}
    \hfill
    \begin{subfigure}[b]{0.48\textwidth}
        \centering
        \includegraphics[width=\linewidth]{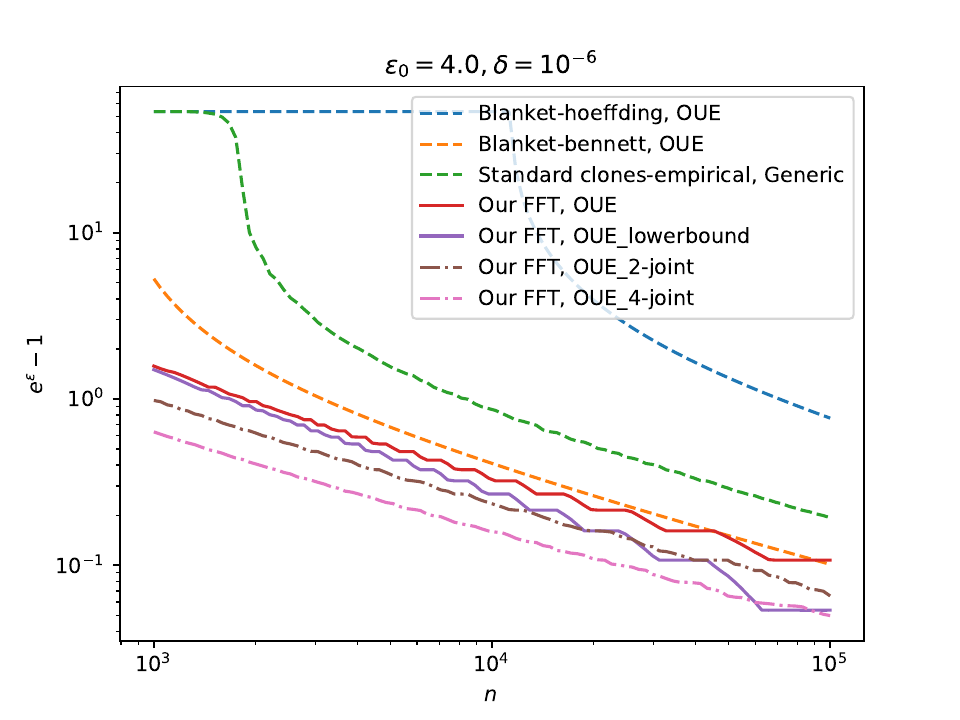} 
        \caption*{OUE, large $\varepsilon_0$}
    \end{subfigure}

    \begin{subfigure}[b]{0.48\textwidth}
        \centering
        \includegraphics[width=\linewidth]{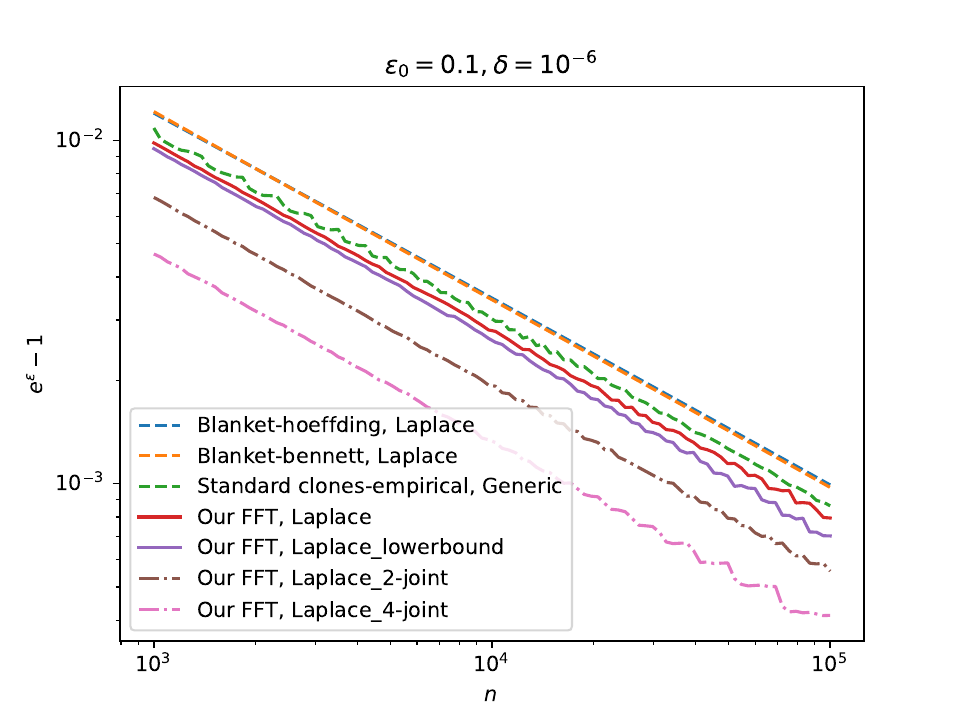} 
        \caption*{Laplace, low $\varepsilon_0$}
    \end{subfigure}
    \hfill
    \begin{subfigure}[b]{0.48\textwidth}
        \centering
        \includegraphics[width=\linewidth]{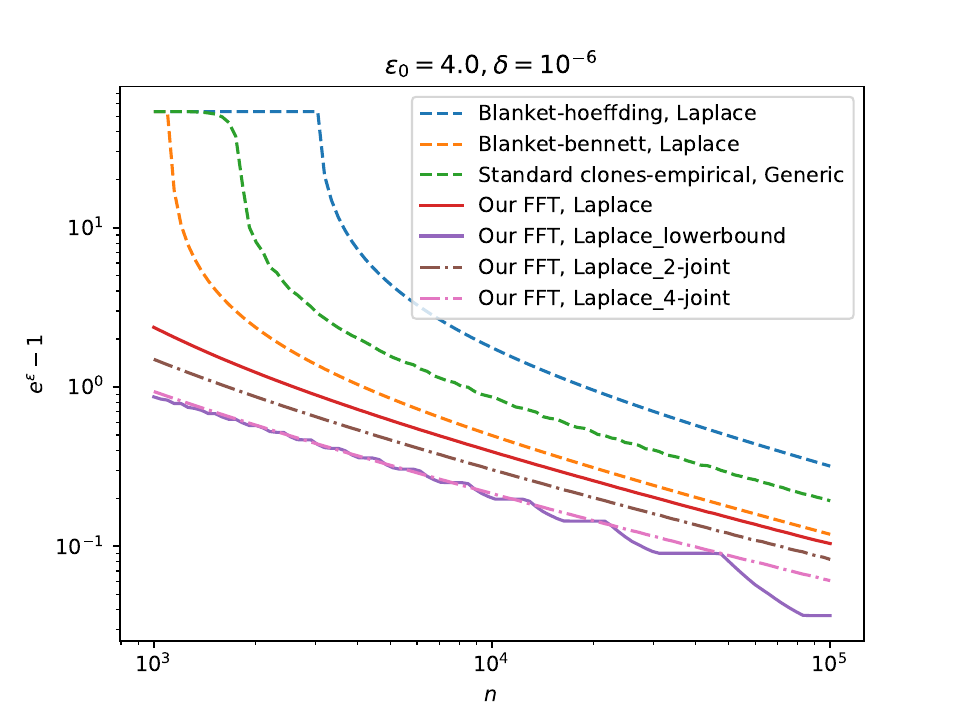} 
        \caption*{Laplace, large $\varepsilon_0$}
    \end{subfigure}
    
    \caption{Experimental Results: OUE and Laplace}
    \label{fig_exp2}
\end{figure*}

\begin{figure*}[ht]
    \centering

    \begin{subfigure}[b]{0.48\textwidth}
            \centering
    \includegraphics[width=\linewidth]{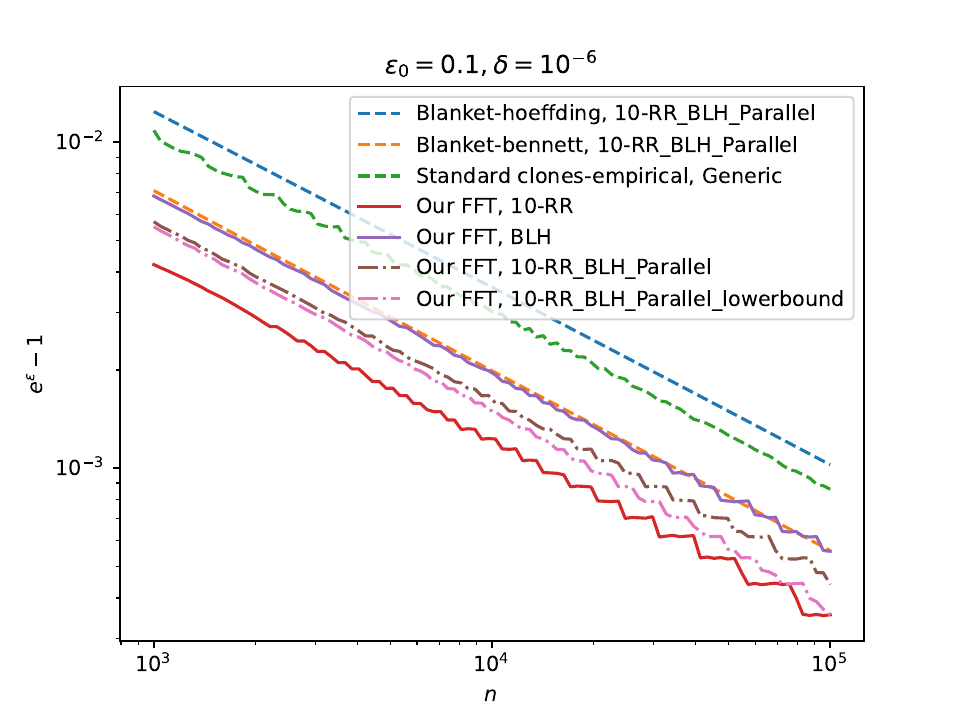}
    \caption*{$\frac{1}{2}$RR+$\frac{1}{2}$BLH, low $\varepsilon_0$}
    \end{subfigure}
    \hfill
    \begin{subfigure}[b]{0.48\textwidth}
            \centering
    \includegraphics[width=\linewidth]{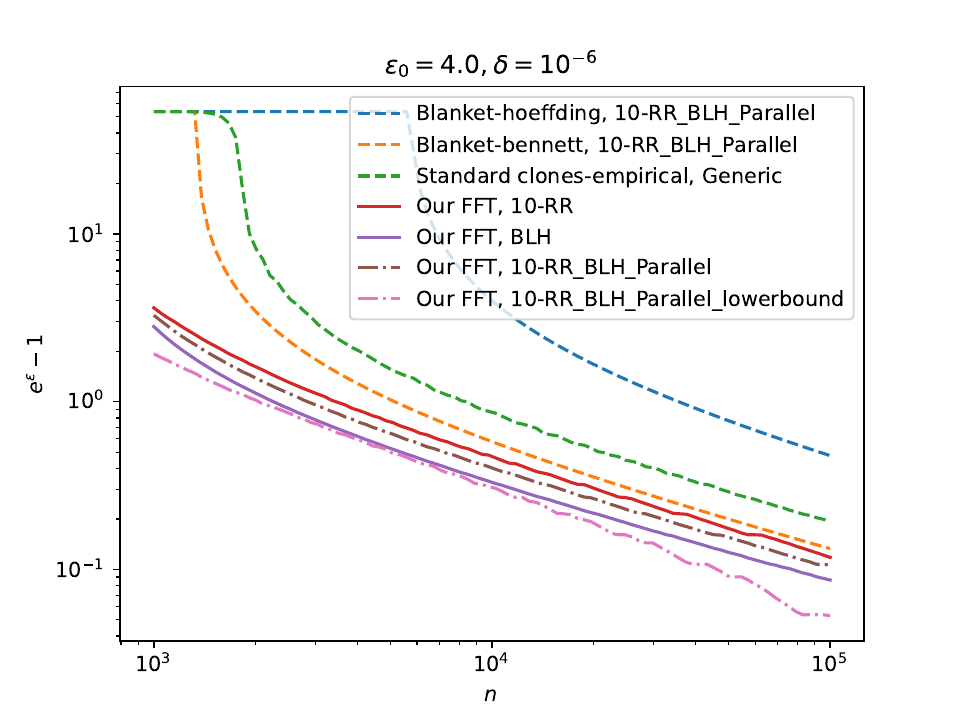}
        \caption*{$\frac{1}{2}$RR+$\frac{1}{2}$BLH, large $\varepsilon_0$}
    \end{subfigure}
    
    \caption{Experimental Results: Parallel Composition}
    \label{fig_exp3}
\end{figure*}

\clearpage

\begin{table}[t]
\centering
\caption{Frequency estimation under 10-RR ($n=1000,\delta=10^{-6}$): each entry reports $(\varepsilon_0,\ \ell_2\text{ error})$ required to achieve target privacy level $\varepsilon$.}
\label{table20}
\footnotesize
\begin{tabular}{c|cccccc}
\toprule
$\varepsilon$ 
& 0.01 
& 0.05 
& 0.10 
& 0.20 
& 0.50 
& 1.00 
\\
\midrule
\makecell{Blanket\\ (Bennett) }
& $(0.170,\ 1.598)$ 
& $(0.620,\ 0.366)$
& $(1.000,\ 0.199)$
& $(1.510,\ 0.108)$
& $(2.370,\ 0.051)$
& $(3.060,\ 0.033)$
\\
Our FFT 
& $(0.210,\ 1.285)$
& $(0.730,\ 0.295)$
& $(1.150,\ 0.163)$
& $(1.700,\ 0.089)$
& $(2.650,\ 0.042)$
& $(3.510,\ 0.025)$
\\
\bottomrule
\end{tabular}
\label{tab:exp_freq}
\end{table}

\begin{minipage}[t]{0.45\textwidth}
  \begin{figure}[H]
    \centering
    \includegraphics[width=\linewidth]{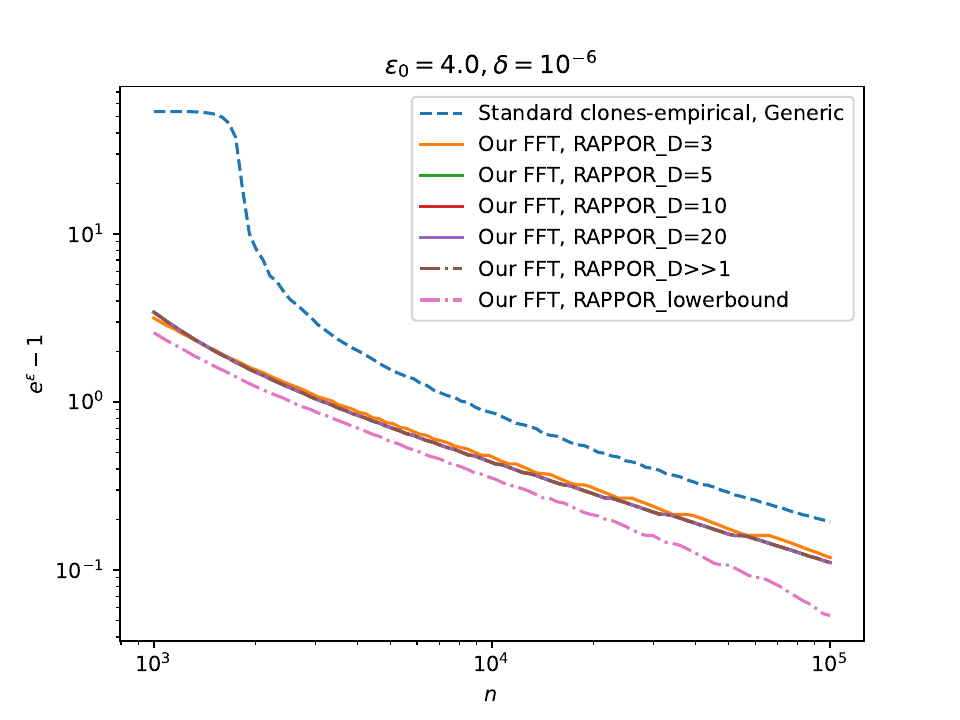}
    \caption{Upper Bounds for RAPPOR with Varying Domain Size $D$}
    \label{varying_D}
  \end{figure}
\end{minipage}
\hfill
\begin{minipage}[t]{0.45\textwidth}
  \begin{figure}[H]
    \centering
    \includegraphics[width=\linewidth]{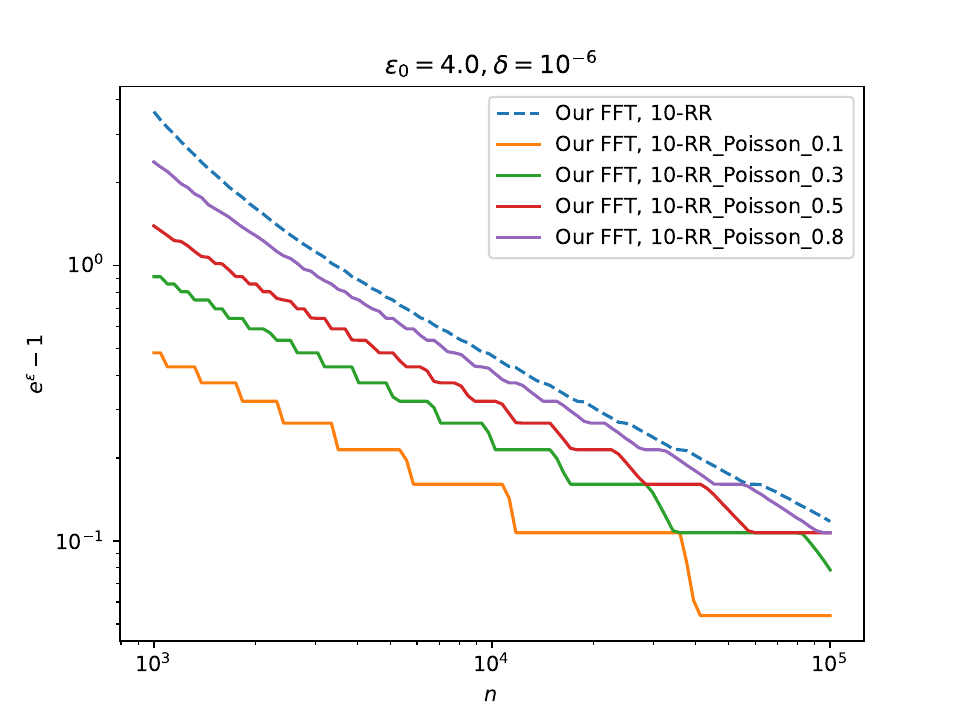}
    \caption{Upper Bounds for 10-RR under Poisson Subsampling and Shuffling}
    \label{fig:poisson}
  \end{figure}
\end{minipage}

\vspace{1em}

\begin{minipage}[t]{0.45\textwidth}
  \begin{figure}[H]
    \centering
    \includegraphics[width=\linewidth]{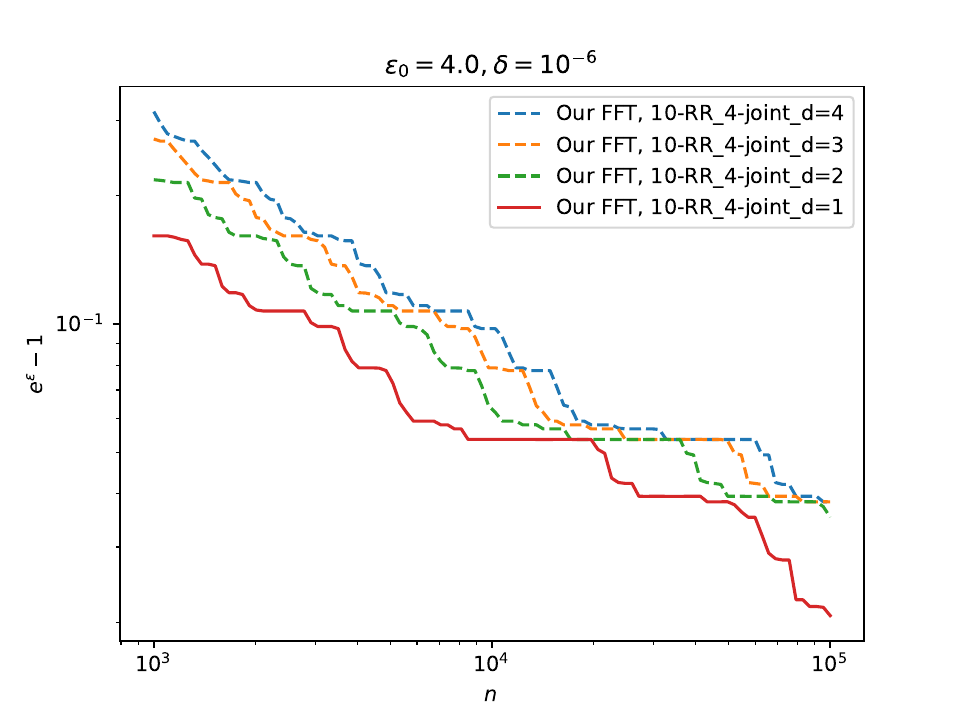}
    \caption{Upper Bounds for Joint Composition under the adjacency relation based on Hamming distance $d$}
    \label{fig_joint_varying_d}
  \end{figure}
\end{minipage}
\hfill
\begin{minipage}[t]{0.4\textwidth}
  \begin{figure}[H]
    \centering
    \includegraphics[width=\linewidth]{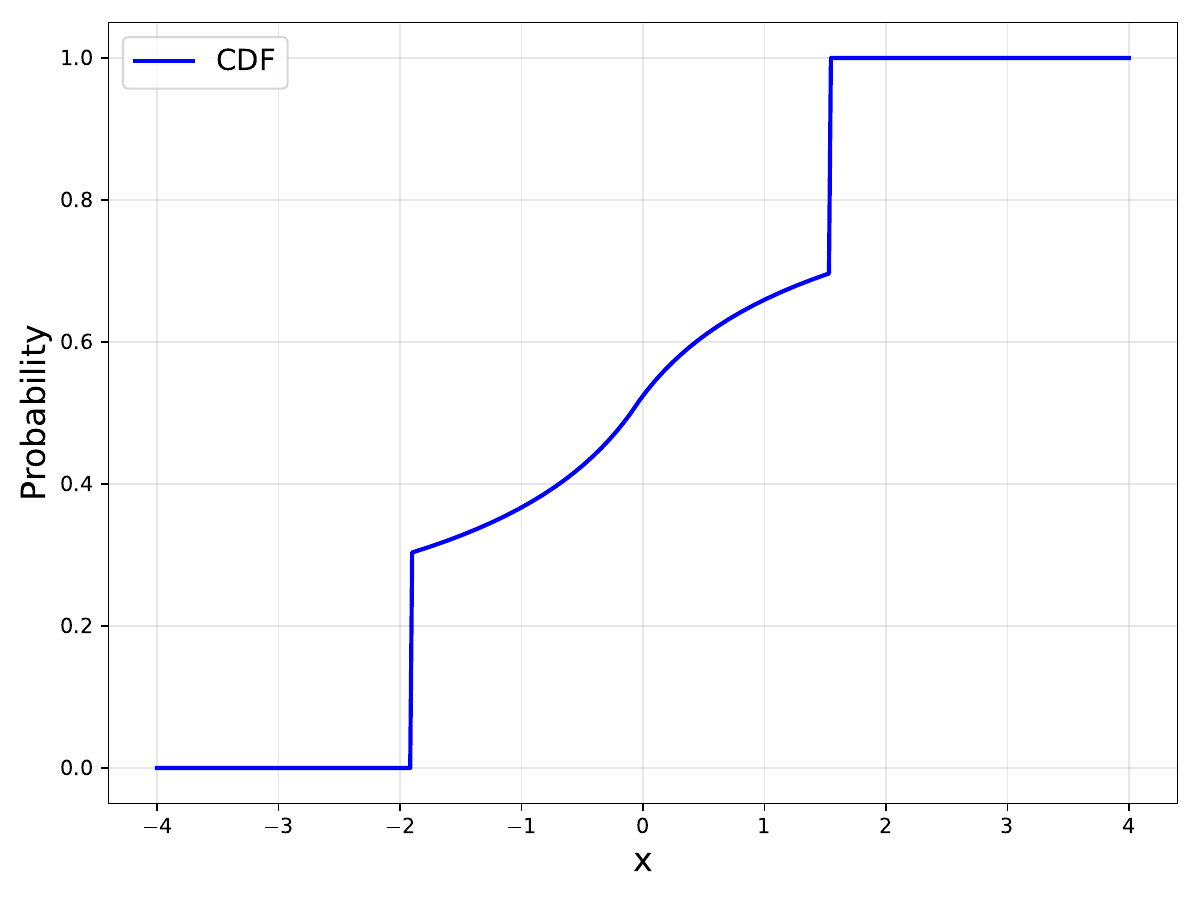}
    \caption{Cumulative distribution of \( L^{0,1}_\varepsilon \) for the Laplace mechanism with \( \varepsilon_0 = 1.0 \) and \( \varepsilon = 0.1 \)}
    \label{fig_laplace_cdf}
  \end{figure}
\end{minipage}

\end{document}